\def\T{{ \mathrm{\scriptscriptstyle T} }}
\begin{document} 
\title{Bayesian Compressed Mixed-Effects Models}
\author{Sreya Sarkar \footnote{PhD Candidate, Department of Statistics and Actuarial Science, The University of Iowa}, Kshitij Khare \footnote{Professor, Department of Statistics, University of Florida}, Sanvesh Srivastava \footnote{Associate Professor, Department of Statistics and Actuarial Science, The University of Iowa}}
\date{}

\maketitle

\begin{abstract}
    Penalized likelihood and quasi-likelihood methods dominate inference in high-dimensional linear mixed-effects models. Sampling-based Bayesian inference is less explored due to the computational bottlenecks introduced by the random effects covariance matrix. To address this gap, we propose the compressed mixed-effects (CME) model, which defines a quasi-likelihood using low-dimensional covariance parameters obtained via random projections of the random effects covariance. This dimension reduction, combined with a global-local shrinkage prior on the fixed effects, yields an efficient collapsed Gibbs sampler for prediction and fixed effects selection. Theoretically, when the compression dimension grows slowly relative to the number of fixed effects and observations, the Bayes risk for prediction is asymptotically negligible, ensuring accurate prediction using the CME model.  Empirically, the CME model outperforms existing approaches in terms of predictive accuracy, interval coverage, and fixed-effects selection across varied simulation settings and a real-world dataset.
\end{abstract}

Keywords: Gibbs sampling; mixed model; parameter expansion; quasi-likelihood; random projection

\section{Introduction}
\label{Section:low-dim-model}
\subsection{Setup}\label{Section:motivation}
Linear mixed-effects models are flexible extensions of the linear model for analyzing clustered and repeated-measures data. In addition to the mean and error variance parameters, these models have a covariance parameter to account for dependencies between responses from the same subject. 
The joint estimation of mean and covariance parameters in their high-dimensional extensions poses significant inferential challenges. While penalized likelihood or maximum a posteriori estimation methods are commonly employed, sampling-based Bayesian inference has been largely restricted to low-dimensional settings. The CME model fills this gap by leveraging random projection matrices and shrinkage priors for efficient fixed effects selection and prediction in high-dimensional applications using an efficient collapsed Gibbs sampling algorithm.

Consider the setup of a linear mixed-effects model. Let $n$ be the number of subjects, $m_i$ be the number of observations on subject $i$, $N = \sum_{i=1}^n m_i$ be the total sample size, $y_i \in \mathbb{R} ^ {m_i}$ be the response for subject $i$, and $X_i \in \mathbb{R} ^ {m_i \times p}, Z_i \in \mathbb{R} ^ {m_i \times q}$ be the fixed and random effects design matrices for subject $i$, respectively. Then, we model the response of subject $i$ as
\begin{equation}
\label{eq:mem1}
    y_i = X_i \beta + Z_i b_i + \epsilon_i, \quad b_i \sim \mathcal{N}(0, \tau ^ 2 \Sigma), \quad \epsilon_i \sim \mathcal{N}(0, \tau ^ 2 I_{m_i}), \quad \epsilon_i \perp b_i, \quad i = 1, \dots, n,
\end{equation}
where $\beta \in \RR^{p}$ is the fixed effects parameter shared by the population, $b_i \in \RR^q$ is the random effects vector for subject $i$ that models the dependency in $y_i$, $\tau^2 \Sigma$ is the random effects covariance matrix, $\epsilon_i$ is the idiosyncratic error vector for subject $i$ with variance $\tau^2$, $b_i$ and $\epsilon_i$ are mutually independent, $\mathcal{N}$ denotes the Gaussian distribution, and $I_{m_i}$ is an $m_i \times m_i$ identity matrix. The model parameter is 
$\{\beta, \Sigma, \tau^2\}$ and $p$ or $q$ exceed or are comparable to $N$ in  high-dimensional applications. 

\subsection{Prior Literature on Mixed Models}\label{Section:prior-lit-lmm}

There is extensive literature on mixed models when $p, q, m_1, \ldots, m_n$ are fixed as $N$ tends to infinity \citep{Batetal13}. Restricted maximum likelihood estimators of $\beta, \Sigma, \tau^2$ are  often preferred  because they are less biased, but expectation maximization (\textsc{EM}) algorithms are available for maximum likelihood estimation. These methods assume that $p, q \ll N$ and rely on the Gaussian assumptions in \eqref{eq:mem1} for analytic updates. A misspecified random effects distribution does not bias $\beta$ estimates, but they are inapplicable when $\Sigma$ is singular or either $p$ or $q$ exceeds $N$ \citep{JiaNgu07}. 
 
Penalized likelihood approaches are ideal for these situations. For selecting the fixed effects, a lasso-type penalty on $\beta$ is the primary choice \citep{2014_GrollTutz,2023_Oliveira_etal}. \citet{Bondell2010} jointly select the fixed and random effects in linear mixed models using an adaptive lasso penalty on $\beta$ and Cholesky factor of $\Sigma$. This approach has been extended to generalized linear mixed models \citep{2008_Ibrahim_Zhu_Tang,2011_Ibrahim_etal,2024_glmmPenFA}. Substituting the likelihood with a quasi-likelihood that replaces $\Sigma^{-1}$ with proxy matrices, such as the Moore-Penrose generalized inverse, improves computational efficiency while maintaining theoretical guarantees \citep{2014_Pan_Huang,Hui2017_RegularizedPOL}. Alternatives to these approaches include a two-stage penalized restricted log-likelihood and moment-based estimators \citep{PenLu12,Ahnetal12, Linetal13}. For convergence guarantees, these methods assume that $p$ and $q$ remain fixed as $N$ increases and the optimal performance relies on the proper choice of tuning parameter.

Bayesian methods are the stochastic counterparts of their penalized likelihood versions \citep{Fahetal10,KimKim11}. \citet{CheDun03} extend the spike-and-slab prior to the diagonal entries in the Cholesky factor of $\Sigma$ for selecting random effects. These ideas have been extended for joint fixed and random effects selection in generalized linear mixed models, and efficient approximations of the posterior distributions of $\beta$ and $\Sigma$ are available \citep{KinDun07,Fonetal10,Zho13,2024_Griffin_fastbayeslmm}. Despite these developments, sampling-based Bayesian inference remains under-explored in high-dimensional applications due to the computational bottlenecks faced by Monte Carlo algorithms as $p$ or $q$ increase with $N$.

Penalized likelihood approaches have dominated recent advancements in high-dimensional linear mixed models. When $m_i$'s are finite, fixed effects selection using a lasso penalty on $\beta$ achieves optimal convergence guarantees and benefits from a fast estimation algorithm  \citep{Schetal11, 2014_SchellMB}. Quasi-likelihood methods substitute $\Sigma$ in \eqref{eq:mem1} by a proxy matrix. For fixed and random effects selection, the quasi-likelihood defines the loss, and  the rows of $b_i$'s and $\beta$ are regularized using lasso-type penalties \citep{FanLi12}. \citet{2020_Bradetal} develop a statistic for testing the marginal significance of the fixed effects in \eqref{eq:mem1}. Their test is robust to misspecification of the random effects distribution but requires $q$ and $m_i$'s to be finite. Finally, \cite{Lietal21} introduce another quasi-likelihood, estimate $\beta$ using the lasso penalty, and debias it for inference. They also develop methods for inference on $\Sigma$ but under the condition that $q < \min_i m_i$, which is restrictive in practice. Our numerical results show that the optimal performance of these methods depends heavily on the choice of tuning parameters and the proxy matrices in the quasi-likelihoods. 

\subsection{Prior Literature on Compressed Regression}\label{Section:prior-lit-compreg}

Compressed methods, which use random matrices for dimension reduction, focus on models without random effects, where $b_i = 0$ and $m_i=1$ for $i=1, \ldots, n$ and $N=n$ in \eqref{eq:mem1}. This implies that $X_i \in \RR^p$ and compression projects $X_i$ into a $k$-dimensional space as $R X_i$, where $R$ is a $k \times p$ random matrix, $k \ll \min(n, p)$, and the entries in $R$ are independently drawn from $\Ncal(0, 1/k)$. If $k \ll n$, then the least squares estimator using $RX_i$'s as covariates has optimal predictive performance because the inner product in the compressed space accurately approximates the mean response \citep{2009_Maillard_Munos,Kab14}.  Averaging predictions from several random projection matrices improves the prediction risk \citep{Sla18}.

Bayesian equivalents of these methods sample from the posterior predictive distribution by performing computations in the compressed space. The compressed predictive posterior distribution has a near-optimal parametric contraction rate in high-dimensional settings \citep{GuhDun15,MukDun20}. These methods are distinct from the sketching approaches that reduce the sample size in large $n$ settings for privacy or scalable posterior computations  \citep{Zhoetal09,GuhSch21,Guhetal23}.

\subsection{Our Contributions}\label{Section:contribution}

The CME model defines the quasi-likelihood using a proxy covariance matrix based on two random  matrices. Let $L$ be the Cholesky factor of $\Sigma$ in \eqref{eq:mem1}. Then, CME model substitutes $L$ by $S^\T \Gamma R$ in \eqref{eq:mem1}, where $\Gamma  \in \RR^{k_1 \times k_2}$ is a low-dimensional parameter, and $S \in \RR^{k_1 \times q}$ and $R \in \RR^{k_2 \times q}$ are random matrices with independent $\Ncal(0, 1/k_1)$ and $\Ncal(0, 1 / k_2)$ entries. This reduces the parameter dimension from $O(p+q^2)$  to $O(p+k_1k_2)$. Using the Gaussian likelihood, we assign shrinkage, Gaussian, and inverse-gamma priors on $\beta, \Gamma$, and $\tau^2$ that enable posterior sampling through a collapsed Gibbs sampler. Finally, 
our theoretical results show that if $k_1$ and $k_2$ are small relative to $p, q$, and $N$, then CME model's prediction risk decays to zero, resulting in efficient posterior computations; see Section \ref{Section:Theory} for details.

Our contributions are threefold. First, the CME model's quasi-likelihood uses a proxy matrix for $\Sigma$ that depends on $\Gamma$. At each iteration, $\Gamma$ is sampled from its full conditional distribution, eliminating any need for asymptotic constraints on $\Gamma$. The CME model's regularity assumptions require that $\Gamma$'s singular values are bounded away from zero, which is satisfied by choosing $k_1$ and $k_2$ to be small. Second, the CME model quantifies uncertainty through credible intervals using draws from the posterior distribution of $\beta$ and the posterior predictive distribution. The compression of $L$ to $\Gamma$ via  $R$ and $S$ is the key idea that allows the Gibbs sampler to bypass computational bottlenecks, particularly when $q \gg N$. Finally, CME model's performance is robust to the choice of $k_1$ and $k_2$ across a range of simulated and benchmark data analyses.
 
\section{Bayesian Compressed Mixed-Effects Models}\label{Section:Methodology}

\subsection{Model}\label{Section:model}

Consider the CME model definition. Let $R \in \RR^{k_2 \times q}$ and $S \in \RR^{k_1 \times q}$ be random matrices with entries independently sampled from $\mathcal{N}(0, 1/k_2)$ and $\mathcal{N}(0, 1/k_1)$, respectively, where $k_1 \ll q, k_2 \ll q$. The CME model replaces the lower triangular Choleksy factor $L$ of $\Sigma$ in \eqref{eq:mem1} with the compressed form $S^\top \Gamma R$: 
\begin{equation}\label{eq:compmem-varsel-eq1}
  y_i = X_i \beta + Z_i S^{\T} \Gamma d_i + \epsilon_i, \quad d_i \sim \mathcal{N}(0, \tau^2 RR^{\T}), \quad \epsilon_i \sim \mathcal{N}(0, \tau^2 I_{m_i}), \quad \epsilon_i \perp d_i, \quad i = 1, \dots, n, 
\end{equation}
where $\beta \in \mathbb{R}^p$ is the high-dimensional fixed effects parameter, $\Gamma \in \RR^{k_1 \times k_2}$ is the compressed covariance parameter, $d_i \in \mathbb{R}^{k_2}$ is the compressed random effect independent of $\epsilon_i$. The CME model defines the proxy matrix for $\Sigma$ as $S^{\T}\Gamma R R^{\T} \Gamma^{\T}S$ and reduces the covariance parameter dimension from $O(q^2)$ to $O(k_1 k_2)$. By constraining $\Gamma$ to be a low-dimensional parameter with $\max (k_1, k_2) \ll q$, the CME model ensures the computational efficiency of its Gibbs sampler in high-dimensional applications where $q \gg N$. 

The CME model specification is completed by assigning priors on $\beta, \tau^2$, and $\Gamma$. We assume that $(\beta, \tau^2)$ and $\Gamma$ are a priori independent, with $\Gamma$ being low-dimensional. A Gaussian prior $\mathcal{N}(0, \sigma^2_{\gamma} I_{k_1 k_2})$ is assigned on $\gamma$, where $\gamma $ is the column-wise vectorization of $\Gamma$ and $\sigma^2_{\gamma}$ is a hyperparameter. We choose the Horseshoe prior on $\beta$ due to its optimal variable selection properties \citep{CarPolSco10,Vanetal17,2022_BhatKhare}. The hierarchical specification of the prior on $(\beta, \tau^2)$ is as follows:
\begin{align}\label{gbt-prior}
   \beta_j \mid \lambda_j^2, \delta^2, \tau^2 & \sim \mathcal{N}(0, \lambda_j^2 \delta^2 \tau^2 ), \quad
    \lambda_j \sim \mathcal{C}^{+}(0, 1), \quad \delta \sim \mathcal{C}^{+}(0, 1), \quad  \tau^2 \sim \mathcal{IG}(a_0, b_0),
\end{align}
independently for $j = 1, \ldots, p$, where $\mathcal{C}^{+}(0, 1)$ is the standard half-Cauchy distribution and $\mathcal{IG}(a_0, b_0)$ is the inverse Gamma distribution with shape and scale parameters $a_0$ and $b_0$. The global shrinkage parameter controlling the overall sparsity in $\beta$ is $\delta$, and $\lambda_j$ is the local shrinkage parameter for $\beta_j$.

\subsection{Fixed Effects Selection and Prediction}\label{Section:FES}

The collapsed Gibbs sampler for inference and predictions in the CME model treats $d_1, \ldots, d_n$ as missing data and imputes them in every iteration. The joint distribution of $(y_i, d_i)$ implied by \eqref{eq:compmem-varsel-eq1} is
\begin{align}\label{eq:jt-y-d}
\Ncal_{m_i + k_2} 
    \left\{  \begin{pmatrix}
        X_i \beta \\ 0
    \end{pmatrix} , \tau^2 \begin{pmatrix}
        Z_i S^{\T} \Gamma R R^{\T} \Gamma^{\T} S Z_i^{\T} + I_{m_i} & Z_i S^{\T} \Gamma R R ^ {\T} \\
                         R R ^ {\T} \Gamma^\T S Z_i^\T &  R R ^ {\T}
                    \end{pmatrix}   \right\}.
\end{align}
For $i=1, \ldots, n$,  \eqref{eq:jt-y-d} implies that the marginal covariance of $y_i$ uses $S^{\T} \Gamma R R^{\T} \Gamma^{\T} S$ as a proxy matrix for $\Sigma$, characterizing the quasi-likelihood for fixed effects selection. The full conditional distribution of $d_i$ given $(y_i, \beta, \tau^2, \gamma)$ implied by  \eqref{eq:jt-y-d} is
\begin{align}\label{eq:cond-d}
  \Ncal  ( \mu_{d_i \mid y_i}, V_{d_i \mid y_i}), \quad \mu_{d_i \mid y_i}  = V_{y_i, d_i}^{\T} V_{y_i, y_i}^{-1} \left( y_i - X_i \beta \right), \quad
    V_{d_i \mid y_i} = V_{d_i, d_i} - V_{y_i, d_i}^{\T} V_{y_i, y_i}^{-1} V_{y_i, d_i},
\end{align}
where $V_{y_i, y_i}, V_{d_i, d_i}$ are the diagonal blocks and $V_{y_i, d_i} = \tau^2 Z_i S^{\T} \Gamma R R ^ {\T}$ is the off-diagonal block of the covariance matrix in \eqref{eq:jt-y-d}. 

Given $d_1, \dots, d_n$, the CME model's quasi-likelihood is Gaussian, enabling a systematic scan Gibbs sampler to sequentially draw $(d_1, \dots, d_n, \gamma, \beta, \tau^2)$ in each iteration; however, every iteration requires augmenting $d_1, \ldots, d_n$ and sampling the high-dimensional parameter $\beta$, leading to high serial autocorrelation in the parameter draws and slow convergence to the stationary distribution. To bypass this problem, the collapsed Gibbs sampler first samples the low-dimensional parameter $\gamma$ from its full conditional distribution and then samples $(\beta, \tau^2)$ after marginalizing over the $d_i$'s, resulting in faster convergence than the systematic scan Gibbs sampler. 

Consider sampling $\gamma$ given $(\beta, \tau^2)$, $d_i$'s, and the observed data. We stack $y_i$'s and $\epsilon_i$'s to obtain the column vectors $y \in \RR^N$ and $\epsilon \in \RR^N$. Define $\check Z \in \RR^{N \times k_1 k_2}$ such that $[d_i^{\T} \otimes Z_i S^{\T}]$ is its $i$th row. Given the $d_i$'s and $(\beta, \tau^2)$, \eqref{eq:compmem-varsel-eq1} reduces to a linear regression model with regression coefficient $\gamma$:
\begin{equation}\label{eq:gen-lik-2}
    y - X\beta = \check Z \gamma + \epsilon, \quad \epsilon \sim \mathcal{N}(0, \tau^2 I_N), 
    \quad 
    \gamma \sim \mathcal{N}(0, \sigma^2_{\gamma} I_{k_1 k_2}).    
\end{equation}
The Gaussian likelihood and the prior on $\gamma$ in \eqref{eq:gen-lik-2} imply that the full conditional distribution of $\gamma$ is
\begin{equation}\label{eq:post-gamma}
     \mathcal{N}(\mu_{\gamma}, \Sigma_{\gamma}), \quad \Sigma_{\gamma} = \left( \frac{1}{\tau^{2}} \check{Z}^{\T} \check{Z} + \frac{1}{\sigma_{\gamma}^2} I_{k_1 k_2} \right)^{-1}, \quad \mu_{\gamma} = \frac{1}{\tau^{2}}\Sigma_{\gamma} \check{Z}^{\T} (y - X\beta).
\end{equation}
The computational complexity of sampling $\gamma$ is $O(k_1^3 k_2^3)$, which is efficient in high-dimensional settings.

Finally, we sample $(\beta, \tau^2)$ from its full conditional given $\gamma$ and the observed data. Marginalizing over the $d_i$'s in \eqref{eq:compmem-varsel-eq1} gives 
\begin{equation}\label{eq:comp-mem-cycle2}
   y_i = X_i \beta + \epsilon'_i, \quad \epsilon'_i \sim \mathcal{N}(0, \tau^2 C_i), \quad C_i = Z_i S^{\T} \Gamma R R^{\T} \Gamma^{\T} S Z_i^{\T}+ I_{m_i}, \quad i = 1, \dots, n,
\end{equation}
which is a weighted linear regression model with parameter $(\beta, \tau^2)$. We reduce \eqref{eq:comp-mem-cycle2} to a Bayesian variable selection problem in homoscedastic setting by scaling $y_i$ and $X_i$ as $y_i^* = C_i^{-1/2}y_i$ and $X_i^* = C_i^{-1/2}X_i$, where $C_i^{-1/2}$ is a square root of $C_i^{-1}$. Row-wise stacking $y_i^{*}$ and $X_i^{*}$ across all $i$ gives $y^{*}\in\mathbb{R}^{N}$ and $X^{*}\in\mathbb{R}^{N\times p}$, which gives the regression model 
\begin{equation}\label{eq:model-cycle2}
    y^* = X^* \beta + \epsilon^*, \quad \epsilon^* \sim \mathcal{N}(0, \tau^2 I_N).
\end{equation}
The prior on $(\beta, \tau^2)$ in \eqref{gbt-prior} yields tractable full conditional distributions for all the parameters and auxiliary variables involved in Horseshoe prior, finishing the sampling of $(\beta, \tau^2)$. The derivation of the full conditionals in the Gibbs sampler is provided in the supplementary material. 

Algorithm \ref{algo:FES} summarizes the analytic forms of the full conditionals used in the collapsed Gibbs sampler for selecting the fixed effects and prediction. We use the marginal chain $(\beta^{(t)})_{t=1}^{\infty}$, where $\beta^{(t)}$ is the $\beta$ draw from the $t$th iteration of the sampler. For any $j \in \{1, \ldots, p\}$, credible intervals for $\beta_j$ are obtained using the quantiles of $(\beta_j^{(t)})_{t=1}^{\infty}$, providing uncertainty quantification at any nominal level; however, directly using these intervals for selecting fixed effects often performs poorly in high-dimensional settings. Instead, we adopt the sequential 2-means ($S_2M$) algorithm for selecting the non-zero $\beta_j$'s \citep{LiPati17}. This algorithm uses $(\beta^{(t)}_j)_{t=1}^{\infty}$ $(j=1, \ldots, p)$ to construct two clusters of indices: one for the zero $\beta_j$'s (noises) and the other for the non-zero $\beta_j$'s (signals).

The marginal parameter chain $(\beta^{(t)}, \tau^{2(t)}, \gamma^{(t)})_{t=1}^{\infty}$ in Algorithm \ref{algo:FES} is used for drawing from the posterior predictive distribution. Let $(X_{*}, Z_{*})$ be the fixed and random effects covariates for a ``test'' sample. Then, the test response $y_*^{(t)}$  is drawn as 
\begin{equation} \label{eq:post-pred}
  y_{*}^{(t)} \sim \mathcal{N}(X_{*} \beta^{(t)}, V_{*}^{(t)}), \quad
  V_{*}^{(t)} = \tau^{2(t)} (Z_{*} S^{\T} \Gamma ^ {(t)} R R^{\T} \Gamma ^ {(t)\T} S Z_{*}^{\T} + I), \quad t = 1, 2, \ldots, \infty.
\end{equation}
The chain $(y_{*}^{(t)})_{t=1}^{\infty}$  are draws from the posterior predictive distribution for the test sample. The Monte Carlo average of these draws serves as the estimate of the test response. This approach naturally extends to multiple test samples by independently drawing for each test sample using \eqref{eq:post-pred}.

\begin{algorithm}[h!]
\caption{\textbf{Collapsed Gibbs sampler for Fixed Effects Selection and Prediction}}
\label{algo:FES}
Let $(\beta^{(0)}, \tau^{2(0)}, \gamma^{(0)})$ be the initial parameter values and superscript $(t)$ denote the parameter draw at the $t$th iteration. For $t = 1, 2, \ldots, \infty$, the $t$th iteration of the Gibbs sampler cycles through the following steps:
\begin{enumerate}
    \item Given $(\beta^{(t-1)}, \tau^{2(t-1)}, \gamma^{(t-1)})$, independently draw $d_i^{(t)} \sim \mathcal{N}(\mu_{d_i \mid y_i}^{(t-1)},V_{d_i \mid y_i}^{(t-1)})$ for $i = 1, \dots, n.$
    \item Given $(d_1^{(t)}, \dots, d_n^{(t)})$ and $(\beta^{(t-1)}, \tau^{2(t-1)})$, draw $\gamma^{(t)} \sim \mathcal{N}(\mu_{\gamma} ^ {(t - 1)}, \Sigma_{\gamma} ^ {(t - 1)})$, where $\mu_{\gamma} ^ {(t - 1)}$ and $\Sigma_{\gamma} ^ {(t - 1)}$ are $\mu_{\gamma}$ and $\Sigma_{\gamma}$ in \eqref{eq:post-gamma} with $(\beta, \tau^2) = (\beta^{(t-1)}, \tau^{2(t-1)})$.
    \item Given $(\beta^{(t-1)}, \tau^{2(t-1)}, \gamma^{(t)}, \lambda_1^{2(t-1)}, \dots, \lambda_p^{2(t-1)}, \delta^{2(t-1)}, \nu_1^{(t-1)}, \dots, \nu_p^{(t-1)}, \xi^{(t-1)})$, transform the likelihood in \eqref{eq:comp-mem-cycle2} to \eqref{eq:model-cycle2} and draw
    \begin{enumerate}
        \item $\delta^{2(t)} \sim \mathcal{IG} \left( \frac{p+1}{2}, \frac{1}{\xi^{(t-1)}} + \frac{1}{2\tau^{2(t-1)}}\sum_{j=1}^p \frac{\beta_j^{2(t-1)}}{\lambda_j^{2(t-1)}} \right)$, 
        \item $\xi^{(t)} \sim \mathcal{IG} \left( 1, 1 + \frac{1}{\delta^{2(t)}} \right)$, 
        \item $\lambda_j^{2(t)} \sim \mathcal{IG} \left( 1, \frac{1}{\nu_j^{(t-1)}} + \frac{\beta_j^{2(t-1)}}{2 \delta^{2(t)} \tau^{2(t-1)}} \right)$,
        \item $\nu_j^{(t)} \sim \mathcal{IG} \left(1, 1 + \frac{1}{\lambda_j^{2(t)}} \right)$,
        \item $\tau^{2(t)} \sim \mathcal{IG} \left( a_0 + \frac{N+p}{2}, b_0 + \frac{(y^{*(t)} - X^{*(t)} \beta^{(t-1)})^{\T}(y^{*(t)} - X^{*(t)} \beta^{(t-1)})}{2} + \frac{\beta^{(t-1)\T}(\delta^{2(t)}\Lambda^{(t)})^{-1}\beta^{(t-1)}}{2} \right)$,
        \item $\beta^{(t)} \sim \mathcal{N}(A^{(t)}X^{*(t) \T}y^{*(t)}, \tau^{2(t)} A^{(t)})$, where $ A^{(t)} = \{ X^{*(t)\T}X^{*(t)} + (\delta^{2(t)} \Lambda^{(t)})^{-1}\}^{-1}$ and $\Lambda^{(t)} = \text{diag}(\lambda^{2(t)}_1, \dots, \lambda^{2(t)}_p)$.
    \end{enumerate}
    \item \textbf{Posterior Predictive Sampling:} For a test sample with fixed and random effect covariates $(X_*, Z_*)$, draw the test response $y_{*}^{(t)} \sim \mathcal{N}(X_{*} \beta^{(t)}, V_{*}^{(t)})$, where $V_{*}^{(t)}$ is defined in \eqref{eq:post-pred}.
\end{enumerate}
\end{algorithm}

\section{Theoretical Properties}\label{Section:Theory}

In this section, we investigate the decay of the posterior prediction risk corresponding to the CME model in a high-dimensional asymptotic setting. We consider a framework where the marginalized (true) data-generating model and (working) compressed model for the $i$th subject ($i=1, \ldots, n$) are \begin{align}\label{eq:popln&comp_models}
    y_i &= X_i \beta_0 + \epsilon_{0i}, \quad y_i = X_i \beta + \epsilon'_{i}, \quad 
    \epsilon_{0i} \sim \mathcal{N}(0, \tau_0^2 V_{0i}), \quad  \epsilon'_{i} \sim \mathcal{N}(0, \tau_0^2 C_{i}), 
\end{align}
where $V_{0i} = Z_i \Sigma_0 Z_i^{\T} + I_{m_i}$, $C_{i} = Z_i S^{\T} \Gamma R R^{\T}\Gamma^{\T}S Z_i^{\T} + I_{m_i}$, and $\beta_0, \Sigma_0, \tau_0^2$ are the true parameter values. We define $C =
\diag(C_1, \ldots, C_n)$, $y \in \RR^N$ with $i$th row block $y_i$, and $X \in
\RR^{N \times p}$ with $i$th row block $X_i$. The compressed posterior
density of $\beta$ given $y, X, \delta^2$, and $\Lambda$ is
\begin{align}
  \label{eq:cond-beta}
  \beta \mid C, y, \Lambda, \delta^2 \sim \Ncal \left( \{ X^{\T} C^{-1} X + (\delta^{2} \Lambda)^{-1}\}^{-1} X^{\T} C^{-1} y, \tau_0^{2} \{ X^{\T} C^{-1} X + (\delta^{2} \Lambda)^{-1}\}^{-1}  \right).
\end{align}
If $\bar \beta (\phi)$ denotes the posterior mean of $\beta$ with $\phi = \{\Gamma, \Lambda, \delta\}$, then the prediction risk of the compressed posterior distribution is 
\begin{equation}\label{eq:risk_defn}
\frac{1}{N}\EE \| X \beta_0 - X \bar \beta
(\phi) \|_2^2=\frac{1}{N}\EE_{R,S,Z, X} \EE_y \left\{ \EE_{\phi \mid y}
\left( \| X \beta_0 - X \bar \beta (\phi) \|_2^2 \right) \right\},
\end{equation}
where $\EE_{R,S,Z,X}, \EE_{y}, \EE_{\phi \mid y}$ respectively denote the expectations with respect to
the distributions of $(R, S, Z, X), y$ and the conditional distribution of $\phi$ given $y$.

Establishing high-dimensional asymptotic properties for posteriors in Bayesian mixed-effects models is challenging for three main reasons: (a) the structural and algebraic complexity due to the presence of the additional random effect terms and the associated covariance parameters, (b) the high-dimensionality, and (c) the additional complexity of handling the posterior distribution, which as compared to a frequentist setting, adds another layer of expectation ($\EE_{\phi \mid y}$) to be simplified in the current analysis. While \cite{Lietal21} establish high-dimensional consistency results for mixed-effects models in the frequentist paradigm, such results have not yet been established in a Bayesian high-dimensional framework to the best of our knowledge. We are able to establish such results here with significant effort, but the regularity assumptions needed are slightly stronger than those needed for analogous results in the regression and the frequentist mixed model settings. This is, of course, not surprising given the discussion above.  

We now describe the regularity assumptions needed for our main asymptotic result (Theorem \ref{mainTheorem}). First, for simplicity, we assume that the error variance is known and the local shrinkage parameters are supported on a compact domain. 
\begin{enumerate}[label=(A\arabic*)]
  \item The error variance in \eqref{eq:compmem-varsel-eq1} equals $\tau_0^2$ and   $\lambda_j^2$ ($j=1, \ldots, p$) in \eqref{gbt-prior} are supported on $[a, 1/a]$, where $a$ is a universal constant. 
\end{enumerate}
The global shrinkage $\delta^2$ parameter in \eqref{gbt-prior} can depend on $N, p, q$ \citep{vanderpas17,Vanetal17}. The next assumption absorbs the CME model's dependence on $k_1, k_2$ into one ``compression dimension'' $k$ such that $k_1, k_2, k$ have the same asymptotic order. 
\begin{enumerate}[label=(A\arabic*), resume]
  \item There exists a $k$ and universal positive constants $\underline c_{1}, \underline c_{2}, \overline c_{1}, \overline c_{2}$ such that $\underline c_{j} k \leq k_{j} \leq \overline c_{j} k$ for $j=1, 2$. 
\end{enumerate}
The parameter $\Gamma$ in \eqref{eq:compmem-varsel-eq1} models the ``sparsity'' of the random effects covariance matrix $\Sigma_0$ in \eqref{eq:popln&comp_models}. We model $\Sigma_0$ as a low rank matrix by assuming that $\Gamma$ belongs to a class of matrices with a bounded operator norm. 
\begin{enumerate}[label=(A\arabic*), resume]
\item The support for the prior on $\text{vec}(\Gamma)$ in \eqref{eq:gen-lik-2} is restricted to the set $\{ \Gamma \in \RR^{k \times k}: \|\Gamma\| \leq b \}$ for a universal constant $b$, where $\| \cdot\|$ is the operator norm.
\end{enumerate}
The next assumption is about the distribution of the fixed and random effects covariates. 
\begin{enumerate}[label=(A\arabic*), resume]
\item For $i = 1, \ldots, n$, $X_i, Z_i, \epsilon_i$ in \eqref{eq:compmem-varsel-eq1} are mutually independent, entries of $X_i$ are independent and identically distributed as $\Ncal(0, \sigma_X^2)$ for a constant $\sigma_X^2$, and entries of $Z_i$ are independent and identically distributed as $\Ncal(0, \sigma_Z^2)$, where $\sigma_Z$
satisfies $\sigma_Z^4 k^{-4} n (q^{6} +m_{\max}^2 )= O(1)$ and $m_{\max} = \max\{m_1, \dots, m_n\}$.
\end{enumerate}

It is worth noting that the theoretical results in \cite{Lietal21}, for a frequentist setting, are derived {\it conditional on} $Z_1, Z_2, \dots, Z_n$. On the other hand, we take a different approach, where quantities of interest, such as the prediction risk, are averaged over the distribution of $Z_1, Z_2, \dots, Z_n$. Our theoretical results remain unchanged if the entries of $Z_i$ are independent sub-Gaussians with variance parameter $\sigma_Z^2$. The condition on  $\sigma_Z$ in  Assumption (A4) says that the variance of $Z_i$ entries rapidly decays to 0 as $n,q,m_{\max} = \max\{m_1, \dots, m_n\}$ tend to infinity. While this assumption is strong, it ensures that the moments of the operator norm of the random effects covariance matrix are bounded. Finally, the Gaussianity of the entries of $X_i$ cannot be relaxed to 
sub-Gaussianity, since the proof of Theorem \ref{mainTheorem} requires uniform boundedness of the moments of the condition number of $X$. While such bounds can be obtained for Gaussian entries in the current setting, they are not in general available for sub-Gaussian entries \citep{KSS:2025}.

The following theorem uses these assumptions to show that the prediction risk of the compressed posterior decays to zero.
\begin{theorem}
\label{mainTheorem}
If Assumptions (A1)-(A4) hold, $p = o(N)$, $k^2 \log k = o(p)$, $k^2 \log \log N = o(p)$, and $\|\beta_0 \|^2 = o(N)$,  then the posterior predictive risk satisfies
\begin{align*}
    \frac{1}{N}\EE \| X \beta_0 - X \bar \beta (\phi) \|_2^2 &\leq  \EE_{X}\left\{ \kappa^4(X)\right\} \Bigg[ \frac{2\| \beta_{0}\|_2^{2}}{N}\left\{O\left(\frac{nb^8\sigma_Z^8 q^{12}}{k^8} + \frac{nb^8\sigma_Z^8 m_{\max}^4}{k^8} + nb^8\sigma_Z^8 \right)\right\} + \nonumber \\
    & \qquad \qquad \qquad  \qquad  \frac{4\tau_{0}^{2}}{a^{4}} \Bigg\{ \left(\frac{N-p}{N\log N}\right)O\left( \frac{nb^5\sigma_Z^6 (q^9 + m_{\max}^3)}{k^6}\right) + \nonumber\\
        &\qquad  \qquad  \qquad  \qquad  \left(\frac{p + 4e^{-p/8}}{N}\right)O\left(\frac{nb^3\sigma_Z^4(q^6+m_{\max}^2)}{k^4}\right)\Bigg\}\Bigg] = o(1),
\end{align*}
where $\kappa(X)$ is the condition number of $X$, and $m_{\max} = \max\{m_1, \dots, m_n\}$. 
\end{theorem}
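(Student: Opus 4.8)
The plan is to work from the closed-form conditional posterior mean in \eqref{eq:cond-beta}. Writing $M_\phi := X^\T C^{-1}X + (\delta^2\Lambda)^{-1}$, so that $\bar\beta(\phi) = M_\phi^{-1}X^\T C^{-1}y$, I substitute the true model $y = X\beta_0 + \epsilon_0$ from \eqref{eq:popln&comp_models} and use the identity $M_\phi^{-1}X^\T C^{-1}X = I - M_\phi^{-1}(\delta^2\Lambda)^{-1}$ to obtain
\[
X\bar\beta(\phi) - X\beta_0 \;=\; -\,\mathcal B(\phi) + \mathcal F(\phi), \qquad \mathcal B(\phi) := X M_\phi^{-1}(\delta^2\Lambda)^{-1}\beta_0, \quad \mathcal F(\phi) := X M_\phi^{-1}X^\T C^{-1}\epsilon_0,
\]
a deterministic shrinkage-bias term plus a mean-zero, linear-in-$\epsilon_0$ fluctuation. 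Then $\|X\bar\beta(\phi)-X\beta_0\|_2^2 \le 2\|\mathcal B(\phi)\|_2^2 + 2\|\mathcal F(\phi)\|_2^2$, and by Fubini I evaluate \eqref{eq:risk_defn} by taking expectations in the order $\EE_{\epsilon_0}$ (conditionally on $(R,S,Z,X)$, where $\epsilon_0 \sim \Ncal(0,\tau_0^2 V_0)$ with $V_0 = \diag(V_{01},\dots,V_{0n})$, $V_{0i} = Z_i\Sigma_0 Z_i^\T + I_{m_i}$), then $\EE_{\phi \mid y}$, then $\EE_{R,S,Z,X}$.

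The core of the argument is a collection of deterministic, pointwise-in-$\phi$ operator-norm bounds. Since $C_i = Z_i S^\T\Gamma RR^\T\Gamma^\T S Z_i^\T + I_{m_i} \succeq I_{m_i}$, we have $\|C^{-1}\| \le 1$, and on the event that $X$ has full column rank, $M_\phi \succeq X^\T C^{-1}X \succeq \|C\|^{-1}X^\T X$, which yields $\|M_\phi^{-1}\| \le \|C\|\,\sigma_{\min}^{-2}(X)$ and, via $X(X^\T C^{-1}X)^{-1}X^\T = C^{1/2}P_{C^{-1/2}X}C^{1/2}$, also $\|X M_\phi^{-1}X^\T\| \le \|C\|$. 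Assumption (A1) gives $\|(\delta^2\Lambda)^{-1}\| \le (\delta^2 a)^{-1}$, $\|\Lambda^{-1}\| \le 1/a$, and that $M_\phi^{-1}(\delta^2\Lambda)^{-1} = I - M_\phi^{-1}X^\T C^{-1}X$ has spectrum in $(0,1)$; Assumption (A3) gives $\|C_i - I_{m_i}\| = \|Z_i S^\T\Gamma R\|^2 \le b^2\|Z_i\|^2\|S\|^2\|R\|^2$. Combining these produces a pointwise bound $\|\mathcal B(\phi)\|_2^2 \lesssim_a \kappa^4(X)\, h_1(\|C\|)\, \|\beta_0\|_2^2$, while after taking $\EE_{\epsilon_0}$ the fluctuation becomes a trace, $\EE_{\epsilon_0}\|\mathcal F(\phi)\|_2^2 = \tau_0^2\,\tr\big(C^{-1}V_0 C^{-1}(XM_\phi^{-1}X^\T)^2\big) \lesssim_a \tau_0^2\, \kappa^4(X)\, h_2(\|C\|,\|V_0\|)\,\big(p + N e^{-p/8}\big)$, where the trace is split over the $p$-dimensional range of $X$ and its orthogonal complement — the complement term being what carries the extra $1/\log N$ factor once the tail of the $\delta$-marginal is integrated out. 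The key point is that the $(\Gamma,\Lambda)$-dependence of every such bound is confined to $\|C\| = \|C(\Gamma)\|$ and to powers of $a$, so the bounds are uniform over the supports of the priors on $\Gamma$ and $\Lambda$ and the $\EE_{\phi \mid y}$ layer collapses to a supremum; the residual dependence on the global shrinkage $\delta$ is dealt with either by holding $\delta$ at a fixed-order value \citep{vanderpas17} or by bounding the posterior mass on small $\delta$, where $\|\mathcal B(\phi)\|_2$ would be of order $\|X\beta_0\|_2$, via the marginal likelihood.

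It remains to take $\EE_{R,S,Z,X}$. Since $X$ is independent of $(R,S,Z)$, the bounds factor as $\EE_X\{\kappa^4(X)\}\cdot\EE_{R,S,Z}\{h_j(\|C\|,\|V_0\|)\}$. The fourth moment of the condition number of an $N\times p$ Gaussian matrix with $p = o(N)$ is $O(1)$ by Bai--Yin and Davidson--Szarek type nonasymptotic bounds on $\sigma_{\max}(X)$ and $\sigma_{\min}(X)$, and $\PP\{\sigma_{\min}^2(X) < cN\} \lesssim e^{-p/8}$ supplies the $e^{-p/8}$ terms; this is exactly the step that requires $X$ to be Gaussian rather than merely sub-Gaussian \citep{KSS:2025}. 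For the other factor, $\|C\| = \max_i \|C_i\|$ gives $\EE\|C\|^r \le n\max_i \EE\|C_i\|^r$, and each $\EE\|C_i\|^r$ is controlled via the Gaussian operator-norm moment estimates $\EE\|S\|^{2r} \asymp (q/k_1)^r$, $\EE\|R\|^{2r} \asymp (q/k_2)^r$, $\EE\|Z_i\|^{2r} \asymp \sigma_Z^{2r}(m_i+q)^r$, together with $\|\Gamma\| \le b$; $\|V_0\|$ is handled the same way using that $\Sigma_0$ is modeled as low rank. Assumption (A2) collapses $k_1, k_2$ into $k$, yielding the stated $nb^8\sigma_Z^8 q^{12}/k^8$, $nb^5\sigma_Z^6(q^9 + m_{\max}^3)/k^6$, $\dots$ terms, each of which is $o(1)$ once Assumption (A4) ($\sigma_Z^4 k^{-4} n(q^6 + m_{\max}^2) = O(1)$) is invoked — squaring (A4), for instance, gives $nb^8\sigma_Z^8 q^{12}/k^8 = O(1/n)$ — together with $p = o(N)$, $\|\beta_0\|_2^2 = o(N)$, and $k^2\log k = o(p)$, $k^2\log\log N = o(p)$, the latter two absorbing the $\EE_{\phi \mid y}$ and log-factor contributions.

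I expect the main obstacle to be the moment control of $\|C_i\| = \|Z_i S^\T\Gamma RR^\T\Gamma^\T S Z_i^\T + I_{m_i}\|$: bounding high moments of the operator norm of a fourfold product of two independent Gaussian random-projection matrices, the (only operator-norm-bounded) parameter $\Gamma$, and a Gaussian random-effects design, sharply enough that the $k^{-4}$ to $k^{-8}$ gains and the rapidly vanishing variance $\sigma_Z^2$ of (A4) beat the $n$, $q^{12}$, and $m_{\max}^4$ growth. The uniform-in-$p$ fourth-moment bound on $\kappa(X)$ is the second non-routine ingredient, and the posterior-tail control of $\delta$ — absent from the frequentist analysis of \cite{Lietal21} — is the conceptually new one.
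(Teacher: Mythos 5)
Your bias/fluctuation decomposition is the same as the paper's (your $\mathcal B(\phi)=XM_\phi^{-1}(\delta^2\Lambda)^{-1}\beta_0$ is exactly the paper's $T_1$ term $C(X\delta^2\Lambda X^\T+C)^{-1}X\beta_0$ after Woodbury), your treatment of the bias term and the operator-norm moment bounds for $\|R\|,\|S\|,\|Z_i\|$ and $\kappa(X)$ are all in line with the paper's Lemmas on $T_1$ and the auxiliary moment lemma. The gap is in the fluctuation term, which is the technical heart of the proof. First, your proposed order of integration --- $\EE_{\epsilon_0}$ first, then $\EE_{\phi\mid y}$ --- is not a legitimate Fubini interchange: the conditional law of $\phi$ given $y$ depends on $y$, hence on $\epsilon_0$, so the clean identity $\EE_{\epsilon_0}\|\mathcal F(\phi)\|_2^2=\tau_0^2\tr\{C^{-1}V_0C^{-1}(XM_\phi^{-1}X^\T)^2\}$ for ``fixed $\phi$'' is not available inside the risk. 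The paper instead bounds $\EE_{\Gamma\mid y}(\cdot)\le\sup_{\|\Gamma\|\le b}(\cdot)$ and must then control $\EE_{\tilde\epsilon}\sup_{\Gamma}\tilde\epsilon_2^\T A(\Gamma)^\T A(\Gamma)\tilde\epsilon_2$ --- an expected supremum of a quadratic form over a continuum --- via a $1/\log N$-net of the $\Gamma$-ball of cardinality $\lceil 2bk\log N\rceil^{k^2}$, a Lipschitz estimate for $\Gamma\mapsto A(\Gamma)^\T A(\Gamma)$, a union bound, and sub-exponential $\chi^2$ tails. That argument is precisely what produces the $(N-p)/\log N$ and $p+4e^{-p/8}$ factors and is the only place the hypotheses $k^2\log k=o(p)$ and $k^2\log\log N=o(p)$ enter; your proposal contains no substitute for it. If you instead use the uniform bound $\sup_\Gamma\|A(\Gamma)\|\le 1+b^2\breve c$ that your ``everything depends on $\Gamma$ only through $\|C(\Gamma)\|$'' remark suggests, the $\tilde\epsilon_2$ contribution is of order $N-p$, and after dividing by $N$ the risk bound does not tend to zero.

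Two attributions are also off. The $1/\log N$ factor does not come from integrating out a tail of the $\delta$-marginal: in the paper $\delta^2$ cancels exactly in the ratio $\lambda_{\max}^2(\tilde D)/\lambda_{\min}^2(\tilde D)\le a^{-4}\kappa^4(X)$, and no posterior-tail control of $\delta$ appears anywhere; the $1/\log N$ is the mesh of the $\Gamma$-net multiplied by $\EE\|\tilde\epsilon_2\|^2=\tau_0^2(N-p)$. Likewise the $e^{-p/8}$ terms come from $\int_p^\infty\PP(\chi^2_p-p>u)\,du\le 8e^{-p/8}$ against the net cardinality, not from a small-ball bound on $\sigma_{\min}(X)$. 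Repair the order of expectations and supply the covering argument (or an equivalent chaining bound) for $\EE\sup_\Gamma$, and the rest of your outline goes through.
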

The proof of this theorem is in the supplementary material, along with other proofs. Theorem~\ref{mainTheorem} implies that the compression dimension $k$ is chosen to be very small compared to $p$ and $N$ in practice. This enables sampling of the compressed covariance parameter $\Gamma$ in $O(k^6)$ complexity, resulting in significant gains in computational efficiency of the Gibbs sampler in Algorithm \ref{algo:FES}. 

We do not provide any results for the fixed effects selection because the literature on variable selection properties of the Horseshoe prior is still evolving \citep{Vanetal17,SonLia23}; however, we expect these optimality properties to hold in our setting because our assumptions ensure that the spectrum of the random effects covariance matrices is uniformly bounded away from zero and infinity. The excellent performance of the CME model in fixed effects selection in our simulations further supports this conclusion.

\section{Simulation Studies}\label{Section:Simulations}

\subsection{Experimental Setup}\label{Section:Sim_setup}

Our simulation setup is based on \cite{Lietal21}, which is a penalized quasi-likelihood competitor of the CME model. We set $p = 300, q = 14, n = 36$, and $m_1 = \dots = m_n = m$ with $m \in \{4, 8, 12\}$. The entries of $X_i$'s and $Z_i$'s are independently and identically distributed as $\mathcal{N}(0, 1)$. The first five entries of $\beta$ are set to $(1, 0.5, 0.2, 0.1, 0.05)^{\T}$ and the remaining entries are zero. To understand the impact of compressing the covariance matrix, we consider three different choices of $\Sigma$. First, $\Sigma$ is diagonal and positive semi-definite, with  $\Sigma_{ii} = 0.5$ for $1 \leq i \leq \lceil q/2 \rceil$, and $\Sigma_{ij} = 0$ for $i \neq j$. 
Second, $\Sigma$ is block-diagonal with $\Sigma = LL^{\T}$, where $L$ has rank $\lceil \log q \rceil$. The first, second, and third columns of $L$ contain nonzero entries independently generated from continuous uniform distribution [0, 3] in rows 1-5, 6-10, and 11-14, respectively. Finally, $\Sigma$ is a Toeplitz matrix, with $\Sigma_{i j} = 0.5 ^ {|i - j|}$ for any $i, j \in \{1, \dots, q\}$. The first two $\Sigma$ choices impose different low-rank assumptions, whereas the third $\Sigma$ is positive definite. The three $\Sigma$ and three $m$ choices lead to nine different simulation setups. We perform 50 replications for every simulation setup.

We used the Algorithm \ref{algo:FES} for fixed effects selection and prediction in the CME model. To assess the sensitivity of our results to the choice of compression dimensions $k_1$ and $k_2$, we evaluated the empirical performance for $k_1, k_2 \in \{\lceil \log q \rceil, \lceil q/2 \rceil, q \}= \{3, 7, 14 \}$. The prior specification for $(\beta, \tau^2)$ followed from \eqref{gbt-prior}, with  $a_0 = 0.01, b_0 = 0.01$, and $\sigma^2_\gamma = 1$.  We ran the Gibbs sampler in Algorithm \ref{algo:FES} for 15,000 iterations and discarded the first 5,000 samples as burn-ins. We used the $S_2M$ algorithm to estimate the zero and nonzero $\beta$ components from the posterior $\beta$ draws \citep{LiPati17}. 

We compared the CME model's performance in fixed effects selection and prediction against that of penalized quasi-likelihood (PQL) methods and a Bayesian oracle. The PQL competitors were \citet{FanLi12} (PQL-1) and \citet{Lietal21} (PQL-2). Following \citet{Lietal21}, we set the tuning parameter $\lambda$ in PQL-2 to $\hat{\tau} \{2 \log(p)/N\}^{1/2}$, where $\hat{\tau}$ is the estimated error variance based on scaled lasso \citep{SunZha12}. While PQL-1 had no debiasing step, we debiased its $\beta$ estimates for $\lambda = \lambda_{\text{min}}$ to construct the confidence intervals, where $\lambda_{\text{min}}$ achieves the minimum mean cross-validation error. The Bayesian oracle (OracleHS) fixed $\Sigma$ at its true value and used the same prior on $(\beta, \tau^2)$ as specified in the CME model. The OracleHS method selected variables after scaling the error variances to the homoscedastic setting \citep{2010_PolsonScott}. Following CME, we ran the OracleHS sampler for 15,000 iterations, discarded the first 5,000 samples as burn-ins, and used the  $S_2M$ algorithm for fixed effects selection. 

We used several metrics to compare empirical performance. Fixed-effects selection was evaluated using the coverage probability and mean width of 95\% credible or confidence intervals, true positive rate, and false positive rate. For predictive performance, we generated a new test dataset consisting of 12 subjects, each with $m$ observations. The mean squared prediction error (MSPE) was computed as $\sum_{i=1}^{n}\{\sum_{j=1}^{m_i}(y^{*}_{ij} - \hat{y}_{ij})^2 / m_i\} / n$, where $y^{*}_{ij}$ and $\hat{y}_{ij}$ respectively denote the true and predicted responses. Additionally, we assessed posterior predictive performance by comparing the coverage probability and mean width of the 95\% posterior predictive intervals obtained from the CME and OracleHS models.

\subsection{Empirical Results}\label{Section:Sim_results}

The CME model with small values of compression dimensions, $k_1$ and $k_2$, has competitive fixed effects selection performance across all simulation settings. The CME model with $k_1 = k_2 = 3$ yields the narrowest credible intervals for the nonzero components of $\beta$, while maintaining empirical coverage close to the nominal 95\% level.  (Figures \ref{fig:VS_covg_allSigma_IndepX} and \ref{fig:VS_width_allSigma_IndepX}). For nonzero $\beta$ coefficients with small magnitude (e.g., 0.05), the empirical coverage of the CME model approaches the nominal level as the sample size increases. When $\Sigma$ is low-rank, the coverage of PQL-1 model for the nonzero components of $\beta$ is below the nominal level. The CME and PQL-1 models have similar coverage when $\Sigma$ is full-rank, likely because PQL-1 replaces the random effects covariance matrix with a full-rank diagonal proxy matrix. On the other hand, PQL-2 model produces the widest confidence intervals for all $\Sigma$ choices. The credible intervals produced by the CME and OracleHS methods have similar properties, with both achieving 99\% empirical coverage for the zero components of $\beta$. For small $k_1$ and $k_2$ (e.g., $k_1 = k_2 = 3$), the CME model outperforms its competitors in terms of true positive and false positive rates as the sample size increases with $m$ (Table \ref{Table_tprfpr_allSigma_IndepX}).

The CME model outperforms its penalized quasi-likelihood competitors in predictive performance across all simulation settings (Figure \ref{fig:rel_mspe_allSigma_IndepX}).  While the CME and PQL-1 models have comparable mean square prediction errors when the true $\Sigma$ is a Toeplitz matrix and the sample size is the smallest, the CME model's performance improves with increasing $m$. Furthermore, the CME model's posterior predictive intervals for $k_1 = k_2 = 3$ are the shortest among all compression dimensions and achieve the nominal coverage level. Their lengths are also nearly identical to those of the equivalent OracleHS model  (Table \ref{Table_covg&relWidth_PI_CME_allSigma_IndepX}). In contrast, the penalized quasi-likelihood methods fail to estimate the variance components if $q > m$.

In summary, as the sample size increases with $m$, the CME model with small compressed dimensions (e.g., $k_1 = k_2 = 3$) outperforms its quasi-likelihood competitors across all simulation settings in fixed effects selection and prediction. Additionally, in the supplementary materials, we consider a correlated fixed effects setting where each row of $X \in \mathbb{R}^{N \times p}$ follows a multivariate normal distribution with mean zero and a Toeplitz covariance structure. The results in this setting agree with those reported here.

\begin{figure}[h!]
    \centering
    \includegraphics[scale = 0.50]{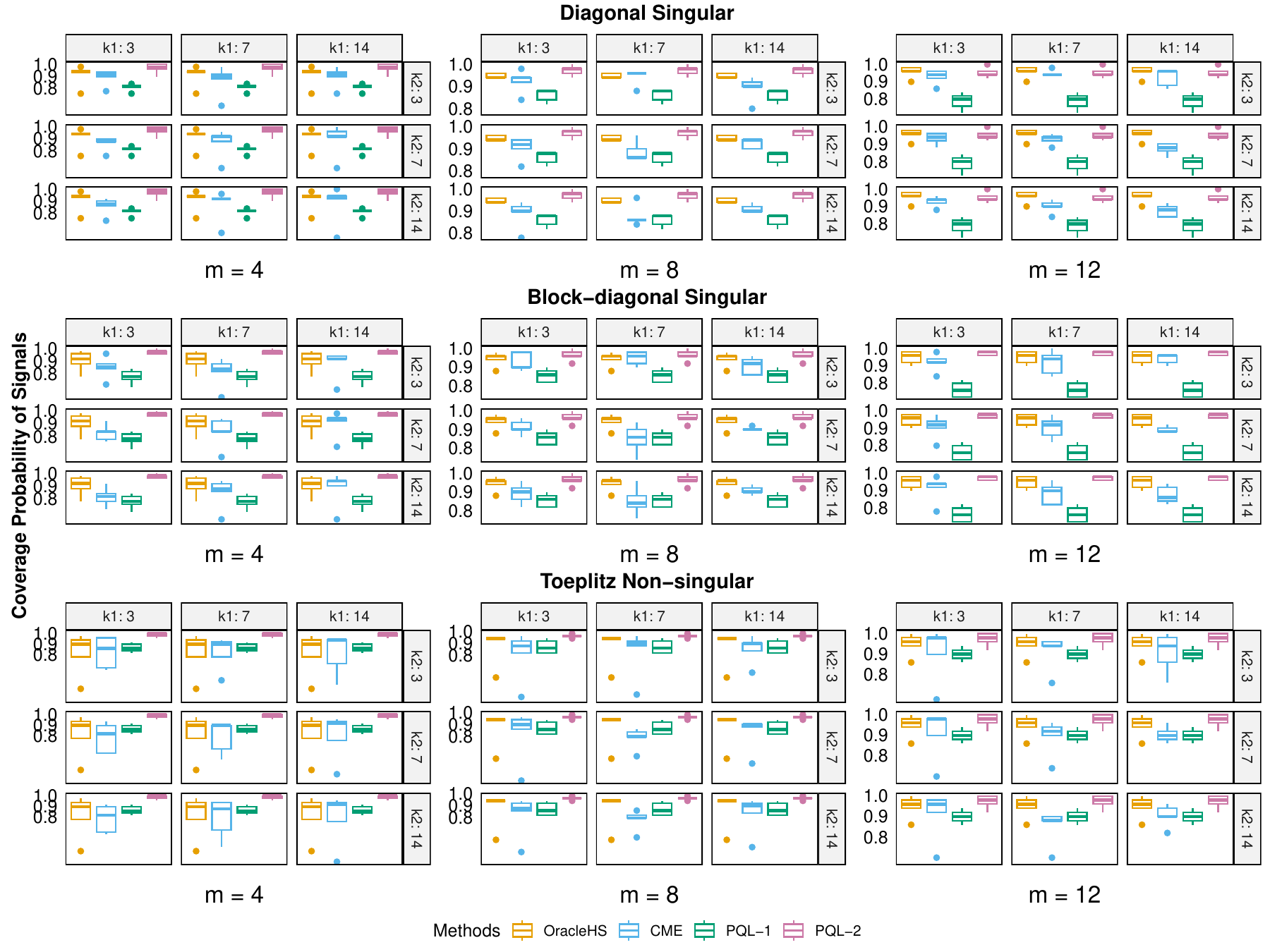}
    \caption{Coverage of signals with varying cluster size $m \in \{4, 8, 12\}$ and compression dimensions ($k_1 \in \{3, 7, 14\}$ and $k_2 \in \{3, 7, 14 \}$).   
  CME outperforms its penalized quasi-likelihood competitors with random effects covariance matrices as diagonal (singular), block-diagonal (singular), and Toeplitz (non-singular). OracleHS, oracle with Horseshoe prior; CME, compressed mixed-effects model; PQL-1, the penalized quasi-likelihood approach of \citet{FanLi12}; PQL-2, the penalized quasi-likelihood approach of \citet{Lietal21}.}
    \label{fig:VS_covg_allSigma_IndepX}
\end{figure}

\begin{figure}[h!]
    \centering
    \includegraphics[scale = 0.50]{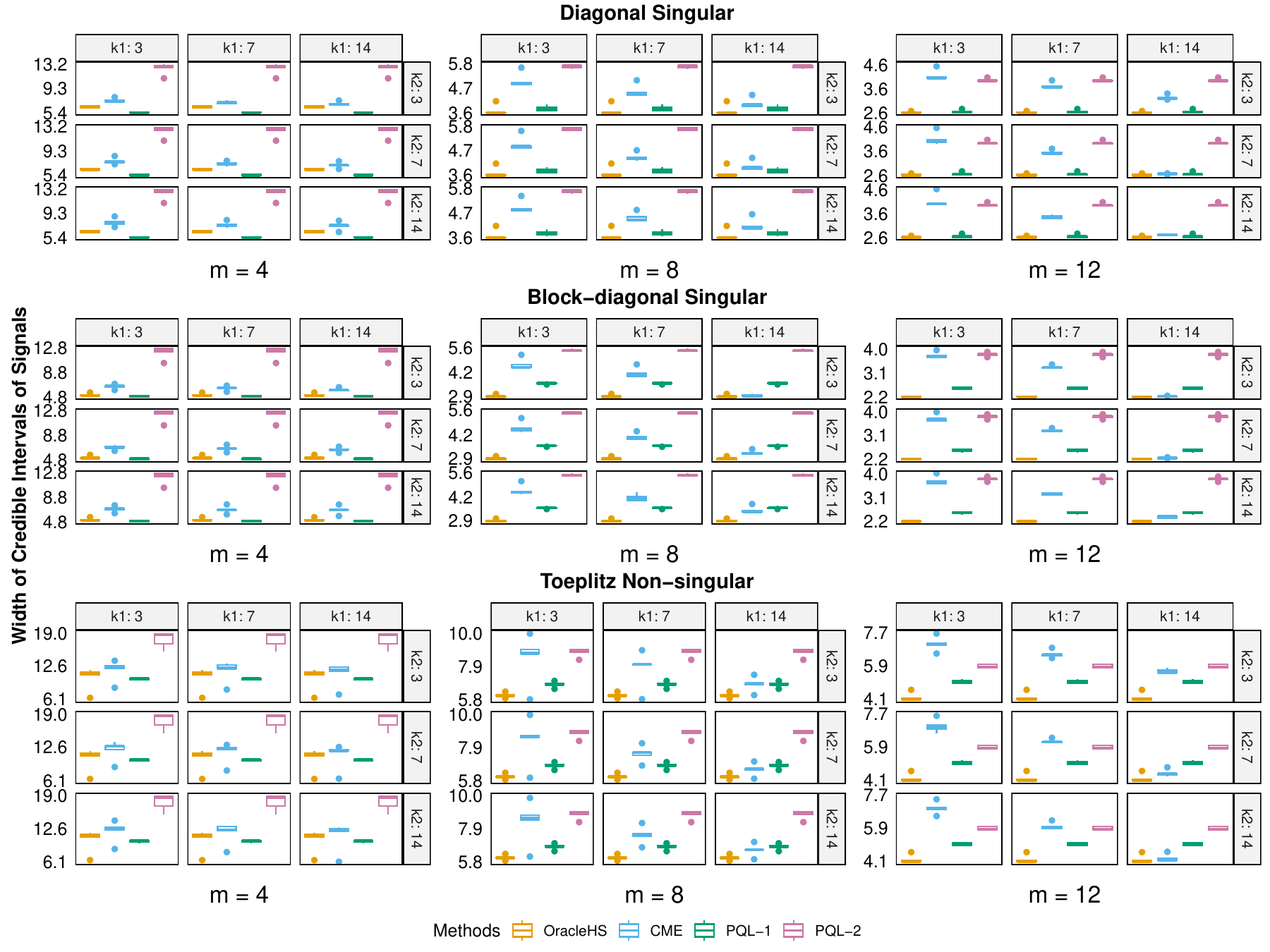}
    \caption{Width of credible intervals of signals depending on cluster size $m \in \{4, 8, 12\}$ and compression dimensions ($k_1 \in \{3, 7, 14\}$ and $k_2 \in \{3, 7, 14 \}$).   
  CME outperforms its penalized quasi-likelihood competitors with random effects covariance matrices as diagonal (singular), block-diagonal (singular), and Toeplitz (non-singular). OracleHS, oracle with Horseshoe prior; CME, compressed mixed-effects model; PQL-1, the penalized quasi-likelihood approach of \citet{FanLi12}; PQL-2, the penalized quasi-likelihood approach of \citet{Lietal21}.}
    \label{fig:VS_width_allSigma_IndepX}
\end{figure}

\begin{table}[h!]
\centering
\caption{True positive rates (TPR) and false positive rates (FPR in parentheses) averaged across 50 replicates in fixed effects selection, with varying cluster size $m \in \{4, 8, 12\}$, compression dimensions ($k_1 \in \{3, 7, 14\}$ and $k_2 \in \{3, 7, 14 \}$), and random effects covariance matrices as diagonal positive semi-definite, block-diagonal positive semi-definite, and Toeplitz positive definite. OracleHS, oracle with Horseshoe prior; CME, compressed mixed-effects model; PQL-1, the penalized quasi-likelihood approach of \citet{FanLi12}; PQL-2, the penalized quasi-likelihood approach of \citet{Lietal21}.}
\resizebox{\columnwidth}{!}{
\begin{tabular}{|c|c|cccc|ccc|ccc|}
\hline
\multirow{3}{*}{$\Sigma$} & \multirow{3}{*}{Methods} & \multirow{3}{*}{$k_1$} & \multicolumn{9}{c|} {$k_2$}\\ \cline{3-12}
  & & & \multicolumn{3}{c|} {3} & \multicolumn{3}{c|} {7} & \multicolumn{3}{c|} {14}\\ \cline{3-12}
& & & $m = 4$ & $m = 8$ & $m = 12$ & $m = 4$ & $m = 8$ & $m = 12$ & $m = 4$ & $m = 8$ & $m = 12$\\
\hline
\multirow{8}{*}{Diagonal psd} & & & & & & & & & & & \\
& \multirow{3}{*}{CME} & 3 & \textbf{0.95 (0.09)} & \textbf{0.98 (0.04)} & \textbf{0.99 (0.02)} & 0.96 (0.17) & 0.99 (0.04) & 0.99 (0.02) & 0.94 (0.16) & 0.98 (0.08) & 1 (0.03) \\
&  & 7 & 0.95 (0.10) & 0.99 (0.04) & 1 (0.02) & 0.94 (0.06) & 1 (0.18) & 1 (0.04) & 0.92 (0.04) & 1 (0.17) & 0.99 (0.22) \\
&  & 14 & 0.96 (0.04) & 0.99 (0.17) & 1 (0.02) & 0.93 (0.02) & 1 (0.04) & 1 (0.11) & 0.8 (0.03) & 0.96 (0.01) & 1 (0.06) \\
&  OracleHS & & 0.97 (0.03) & 0.99 (0.01) & 1 (0) & 0.97 (0.03) & 0.99 (0.01) & 1 (0) & 0.97 (0.03) & 0.99 (0.01) & 1 (0) \\
& PQL-1 & & 0.97 (0.09) & 1 (0.14) & 1 (0.32) & 0.97 (0.09) & 1 (0.14) & 1 (0.32) & 0.97 (0.09) & 1 (0.14) & 1 (0.32)\\
& PQL-2 & & 0.66 (0) & 0.91 (0) & 1 (0) & 0.66 (0) & 0.91 (0) & 1 (0) & 0.66 (0) & 0.91 (0) & 1 (0) \\
& & & & & & & & & & &\\
\hline
\multirow{8}{*}{Block-diagonal psd} & & & & & & & & & & & \\
& \multirow{3}{*}{CME} & 3 & \textbf{0.96 (0.09)} & \textbf{0.98 (0.03)} & \textbf{0.99 (0.02)} & 0.96 (0.16) & 0.99 (0.04) & 1 (0.02) & 0.94 (0.16) & 0.99 (0.08) & 1 (0.03) \\ 
& & 7 & 0.94 (0.08) & 1 (0.03) & 1 (0.01) & 0.94 (0.06) & 0.99 (0.18) & 1 (0.03) & 0.91 (0.04) & 0.99 (0.16) & 1 (0.18) \\
 & & 14 & 0.96 (0.03) & 1 (0.1) & 1 (0) & 0.91 (0.02) & 1 (0.03) & 1 (0.11) & 0.8 (0.03) & 0.98 (0.01) & 1 (0.07) \\
 & OracleHS & & 0.97 (0.05) & 1 (0.05) & 1 (0.05) & 0.97 (0.05) & 1 (0.05) & 1 (0.05) & 0.97 (0.05) & 1 (0.05) & 1 (0.05) \\
 & PQL-1 & & 0.97 (0.1) & 1 (0.14) & 1 (0.36) & 0.97 (0.1) & 1 (0.14) & 1 (0.36) & 0.97 (0.1) & 1 (0.14) & 1 (0.36)\\
 & PQL-2 & & 0.68 (0) & 0.96 (0) & 1(0) & 0.68 (0) & 0.96 (0) & 1(0) & 0.68 (0) & 0.96 (0) & 1(0) \\
  & & & & & & & & & & &\\
 \hline
 \multirow{8}{*}{Toeplitz pd} & & & & & & & & & & & \\
& \multirow{3}{*}{CME} & 3 & \textbf{0.89 (0.23)} & \textbf{0.90 (0.06)} & \textbf{0.96 (0.05)} & 0.90 (0.31) & 0.91 (0.08) & 0.95 (0.05) & 0.88 (0.3) & 0.92 (0.15) & 0.95 (0.07) \\ 
& & 7 & 0.89 (0.19) & 0.91 (0.08) & 0.96 (0.05) & 0.88 (0.11) & 0.95 (0.29) & 0.97 (0.08) & 0.86 (0.07) & 0.95 (0.26) & 0.96 (0.31)\\
 & & 14 & 0.86 (0.07) & 0.95 (0.29) & 0.98 (0.06) & 0.87 (0.04) & 0.94 (0.06) & 0.99 (0.2) & 0.72 (0.03) & 0.94 (0.03) & 0.99 (0.06) \\
 & OracleHS & & 0.91 (0.06) & 0.96 (0.04) & 0.98 (0.01) & 0.91 (0.06) & 0.96 (0.04) & 0.98 (0.01) & 0.91 (0.06) & 0.96 (0.04) & 0.98 (0.01) \\
& PQL-1 & & 0.88 (0.07) & 0.97 (0.10) & 1 (0.18) & 0.88 (0.07) & 0.97 (0.10) & 1 (0.18) & 0.88 (0.07) & 0.97 (0.10) & 1 (0.18)\\
 & PQL-2 & & 0.58 (0) & 0.80 (0) & 0.94 (0) & 0.58 (0) & 0.80 (0) & 0.94 (0) & 0.58 (0) & 0.80 (0) & 0.94 (0) \\
 & & & & & & & & & & &\\
\hline
 \end{tabular}
}
\label{Table_tprfpr_allSigma_IndepX}
\end{table}

\begin{table}[h!]
\centering
\caption{Coverage of 95\% Prediction Intervals produced by the CME model with their relative width in parentheses. Relative width is computed by dividing the width of a prediction interval produced by CME model by that of the OracleHS, the oracle with Horseshoe prior. All metrics are averaged over 50 replications.}
\resizebox{\columnwidth}{!}{
\begin{tabular}{|c|cccc|ccc|ccc|}
\hline
\multirow{3}{*}{$\Sigma$} & \multirow{3}{*}{$k_1$} & \multicolumn{9}{c|} {$k_2$}\\ \cline{2-11}
  & & \multicolumn{3}{c|} {3} & \multicolumn{3}{c|} {7} & \multicolumn{3}{c|} {14}\\ \cline{2-11}
& & $m = 4$ & $m = 8$ & $m = 12$ & $m = 4$ & $m = 8$ & $m = 12$ & $m = 4$ & $m = 8$ & $m = 12$\\
\hline
\multirow{5}{*}{Diagonal psd} & & & & & & & & & & \\
 & 3 & \textbf{0.95 (1.16)} & \textbf{0.95 (1.09)} & \textbf{0.95 (1.07)} & 0.96 (1.35) & 0.96 (1.16) & 0.96 (1.11) & 0.96 (1.42) & 0.97 (1.32) & 0.97 (1.22) \\
& 7 & 0.96 (1.32) & 0.96 (1.20) & 0.96 (1.15) & 0.97 (1.37) & 0.96 (1.40) & 0.98 (1.28) & 0.97 (1.39) & 0.96 (1.42) & 0.98 (1.47) \\
& 14 & 0.98 (1.31) & 0.97 (1.35) & 0.98 (1.30) & 0.98 (1.31) & 0.98 (1.29) & 0.98 (1.33) & 0.99 (1.36) & 0.98 (1.31) & 0.98 (1.33) \\
& & & & & & & & & &\\
\hline
\multirow{5}{*}{Block-diagonal psd} & & & & & & & & & & \\
 & 3 & \textbf{0.94 (1.22)} & \textbf{0.95 (1.12)} & \textbf{0.95 (1.10)} & 0.95 (1.40) & 0.95 (1.19) & 0.95 (1.15) & 0.96 (1.47) & 0.96 (1.35) & 0.96 (1.26) \\ 
& 7 & 0.96 (1.35) & 0.96 (1.22) & 0.96 (1.17) & 0.97 (1.41) & 0.97 (1.46) & 0.97 (1.30) & 0.98 (1.44) & 0.97 (1.49) & 0.98 (1.50) \\
 & 14 & 0.97 (1.32) & 0.97 (1.29) & 0.98 (1.28) & 0.98 (1.35) & 0.97 (1.25) & 0.98 (1.31) & 0.98 (1.41) & 0.97 (1.29) & 0.98 (1.29) \\
  & & & & & & & & & &\\
 \hline
 \multirow{5}{*}{Toeplitz pd} & & & & & & & & & & \\
 & 3 & \textbf{0.95 (1.20)} & \textbf{0.95 (1.10)} & \textbf{0.96 (1.07)} & 0.95 (1.28) & 0.96 (1.17) & 0.96 (1.12) & 0.96 (1.29) & 0.97 (1.32) & 0.97 (1.23) \\ 
& 7 & 0.96 (1.24) & 0.96 (1.21) & 0.97 (1.15) & 0.97 (1.25) & 0.96 (1.34) & 0.98 (1.29) & 0.96 (1.24) & 0.96 (1.34) & 0.99 (1.46)\\
 & 14 & 0.97 (1.21) & 0.96 (1.32) & 0.98 (1.30) & 0.98 (1.19) & 0.97 (1.21) & 0.98 (1.27) & 0.98 (1.20) & 0.97 (1.21) & 0.98 (1.22) \\
 & & & & & & & & & &\\
\hline
 \end{tabular}
}
\label{Table_covg&relWidth_PI_CME_allSigma_IndepX}
\end{table}

\begin{figure}[h!]
    \centering
    \includegraphics[scale = 0.50]{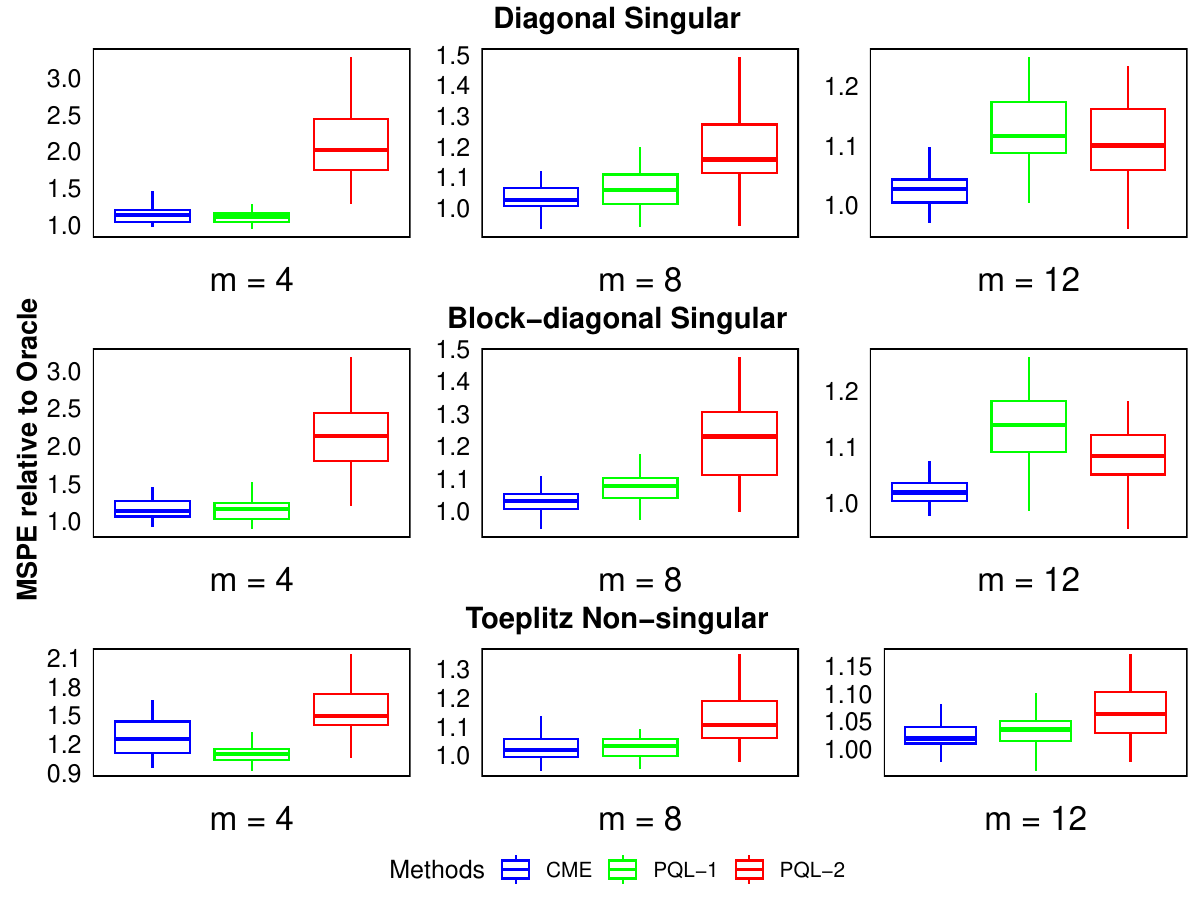}
    \caption{Relative mean square prediction error (MSPE) averaged over 50 replications. Relative MSPE is computed by dividing the MSPE for a method by that of OracleHS, the oracle with Horseshoe prior. CME, compressed mixed-effects model; PQL-1, the penalized quasi-likelihood approach of \citet{FanLi12}; PQL-2, the penalized quasi-likelihood approach of \citet{Lietal21}. Results for the CME model are presented for the best choice of compression dimensions $k_1 = k_2 = 3$.}
    \label{fig:rel_mspe_allSigma_IndepX}
\end{figure}

\section{Real Data Analysis}\label{Section:RealData}

We use the riboflavinV100 data from \citet{2014_buhlman_riboflavin} to evaluate the performance of the CME model on a real-world data. This data has been used for benchmarking the performance of high-dimensional mixed models \citep{2020_Bradetal, 2023_Oliveira_etal}. It contains information about $n = 28$ strains of Bacillus subtilis bacterium, found in the human gastrointestinal tract, that produces Riboflavin (vitamin B2). Each bacterium strain has two to four observations at different time points; that is, $m_i \in \{2, 3, 4\}$ and $N = \sum_{i=1}^{28} m_i = 71$. For a strain $i$, $y_i \in \RR^{m_i}$ is the logarithm of riboflavin production rate and $\tilde X_i \in \RR^{m_i \times 100}$ is the log-transformed expression levels of the 100 genes affecting $y_i$. We standardize the covariates to have mean 0 and variance 1.

We defined the covariates in \eqref{eq:mem1} by taking $X_i \in \RR^{m_i \times 104}$ and $Z_i = X_i$, where the first $X_i$ column contains $1$s, the next hundred $X_i$ columns equal the standardized $\tilde X_i$, and the last three $X_i$ columns contain three B-spline basis functions evaluated at the $m_i$ time points. We used the \texttt{splines2} package in R for computing the last three columns of $X_i$ and $Z_i$ \citep{2024_splines2-package}. 

We applied the CME model with compressed dimensions $k_1, k_2 \in \{2, 4, \log(104) \approx 5 \}$. The setup for the penalized competitors PQL-1 and PQL-2 remained unchanged from Section \ref{Section:Sim_setup}. We did not include the OracleHS method in this analysis because the marginal covariance matrix was unknown. For model training, we randomly selected 21 strains and evaluated empirical performance on the remaining 7 strains. We used the metrics described in Section \ref{Section:Sim_setup} for comparisons, excluding those for fixed effects selection due to the lack of ground truth. This train-test split procedure was repeated 50 times to assess the variability of the results. The CME model with small compression dimensions outperforms both PQL-1 and PQL-2 in predictive performance. Among all compression dimensions, the CME model with $k_1 = k_2 = 2$ yields the narrowest 95\% posterior predictive intervals while achieving an average empirical coverage of 0.98 and the lowest mean squared prediction error (Figure \ref{fig:ribo_PostPred} and Table \ref{table_mspe_ribo}). This CME model also selects a more parsimonious subset of genes than the PQL-1 method and identifies genes like XHLB\_at and GAPB\_at that are often found to enhance riboflavin production in microbial and synthetic biology contexts (Table \ref{Table_genes_ribo}). These findings, combined with our simulation results, demonstrate that the CME model with compression dimensions $k_1=2$ and $k_2=2$ outperforms its quasi-likelihood competitors PQL-1 and PQL-2 in both prediction and fixed-effects selection on the riboflavinV100 dataset.
\begin{table}[h!]
    \centering
    \caption{Mean square prediction error averaged across 50 splits of riboflavinV100 data with Monte Carlo error in parentheses. CME, compressed mixed-effects model; PQL-1, the penalized quasi-likelihood approach of \citet{FanLi12}; PQL-2, the penalized quasi-likelihood approach of \citet{Lietal21}. Results for the CME model are presented for the best choices of compression dimensions $k_1 = k_2 = 2$.}
    \begin{tabular}{|c|c|}
    \hline
    Methods & MSPE\\ \hline
        CME & \textbf{0.83} (0.46) \\
        PQL-1 & 0.96 (0.44)\\
        PQL-2 & 1.36 (0.45)\\ \hline
    \end{tabular}
    \label{table_mspe_ribo}
\end{table}

\begin{figure}[h!]
    \centering
    \includegraphics[scale = 0.65]{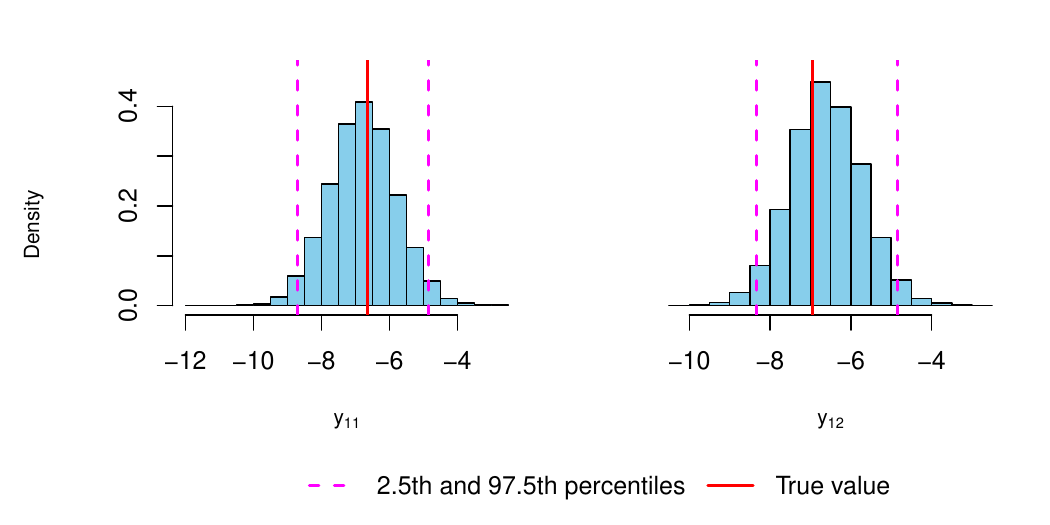}
    \caption{Posterior predictive uncertainty captured by our method with compressed dimensions $k_1 = k_2 = 2$.}
    \label{fig:ribo_PostPred}
\end{figure}

\begin{table}[h!]
    \centering
    \caption{Gene selections in riboflavinV100 data. CME, compressed mixed-effects model; PQL-1, the penalized quasi-likelihood approach of \citet{FanLi12}; PQL-2, the penalized quasi-likelihood approach of \citet{Lietal21}.}
    \begin{tabular}{|cc|}
     \hline
        Methods & Fixed effects list \\ \hline
        \multirow{2}{*}{CME} & YCKE\_at GUAB\_at YCGN\_at ABRB\_at \\ & XHLB\_at YTGA\_at GAPB\_at YXLD\_at\\
         & \\
        \multirow{2}{*}{PQL-1} & YCKE\_at GUAB\_at YCGN\_at ABRB\_at YCDH\_at YRBA\_at NADB\_at YPUF\_at \\ & DEGQ\_r\_at YXLD\_at YXLF\_at YXLE\_at ARGF\_at MTLA\_at YPSB\_at\\
         & \\
        \multirow{2}{*}{PQL-2} & YCGN\_at YTGA\_at GAPB\_at YXLD\_at \\ & XHLA\_at YCGM\_at YTIA\_at\\ 
    \hline
    \end{tabular}
    \label{Table_genes_ribo}
\end{table}

\section{Discussion}\label{Section:Discussion}

The CME model's strategy of compressing the covariance parameters generalizes beyond linear mixed models. First, the ultimate P\'olya-Gamma data-augmentation strategies enable compression in the generalized linear mixed models \citep{2013_Polson_etal,2024_Zens_etal}. Second, we are extending the covariance compression framework to accommodate separable covariance matrices, which arise naturally in array-variate generalizations of mixed models \citep{HulSri25}. Finally, we have used only one set of Gaussian random  matrices (i.e., $R$ and $S$) for compressing the row and column dimensions of the random effects covariance matrix for simplicity. The computational efficiency of the CME model improves if we use sparse random matrices \citep{MukDun20}. The prediction risk of the compressed model reduces further by averaging over multiple sets of random matrices.  This follows from a simple application of Jensen's inequality; see Section 4.3 in \citet{Sla18} for more details.

\section*{Declaration of the use of generative AI and AI-assisted technologies}
During the preparation of this work the authors used ChatGPT in order to check for grammatical errors. After using this tool the authors reviewed and edited the content as necessary and take full responsibility for the content of the publication.
\section*{Acknowledgement}
Sreya Sarkar and Sanvesh Srivastava were partially supported by grants from the National Institutes of Health (1DP2MH126377-01) and the National Science Foundation (DMS1854667). The authors thank Dale Zimmerman and Joyee Ghosh for their valuable feedback on an earlier version of the manuscript.   The code used in the experiments is publicly available at \href{https://github.com/SreyaSarkar21/BayesianCME}{https://github.com/SreyaSarkar21/BayesianCME}.

\newpage

\begin{center}
    \huge \textbf{Supplementary Material for Bayesian Compressed Mixed-Effects Models}
\end{center}
\appendix
\section{Theoretical Properties of Mean Square Prediction Risk}

\subsection{Proof of Theorem 3.1}\label{Section:ProofThm}

Recall the setup for establishing the theoretical properties of the compressed
mixed-effects model from Section \ref{Section:Theory} in the main manuscript. The true parameters are $\beta_0$, $\Sigma_0$, and $\tau_0^2$. The (true) data generating model and the (working) compressed model, respectively, are $y_i = X_i
\beta_0 + \epsilon_{0i}$ and $y_i = X_i \beta + \epsilon_i'$, where
$\epsilon_{0i} \sim \Ncal(0, \tau ^ 2_0 V_{0i})$, $V_{0i} = Z_i \Sigma_0 Z_i^\T
+ I_{m_i}$, $\epsilon_i' \sim \mathcal{N}(0, \tau ^ 2 C_i)$, $C_i = Z_i S^{\T}
\Gamma R R^{\T} \Gamma^{\T} S Z_i^{\T} + I_{m_i}$, and $i=1, \ldots, n$.

Define $C = \diag(C_1, \ldots, C_n)$, $y \in \RR^N$ with $i$th row block $y_i$, and $X \in \RR^{N \times p}$ with $i$th row block $X_i$. The compressed posterior
density of $\beta$ given $y, X, \delta^2$, and $\Lambda$ is
\begin{align}
  \label{eq:condbeta}
  \beta \mid C, y, \Lambda, \delta^2 \sim \Ncal \left( \{ X^{\T} C^{-1} X + (\delta^{2} \Lambda)^{-1}\}^{-1} X^{\T} C^{-1} y, \tau_0^{2} \{ X^{\T} C^{-1} X + (\delta^{2} \Lambda)^{-1}\}^{-1}  \right),
\end{align}
where the parameters $\delta^2$ and $\Lambda$ are defined in Sections \ref{Section:model} and \ref{Section:FES} of the main manuscript. 

The prediction risk of the compressed posterior is defined as
\begin{equation}\label{eq:predrisk_defn}
\frac{1}{N}\EE \| X \beta_0 - X \bar \beta
(\phi) \|_2^2=\frac{1}{N}\EE_{R,S, X, Z} \EE_y \left\{ \EE_{\phi \mid y}
\| X \beta_0 - X \bar \beta (\phi) \|_2^2  \right\},
\end{equation}
where the posterior mean of $\beta$ in \eqref{eq:condbeta} is denoted as $\bar \beta (\phi)$ with $\phi = \{\Gamma, \Lambda, \delta\}$, and $\EE_{R,S, X, Z}, \EE_{y}$, and $\EE_{\phi \mid y}$ respectively denote the expectations with respect to the distributions of $(R, S, X, Z), y$ and the conditional distribution of $\phi$ given $y$. 

We restate Theorem 3.1 from Section \ref{Section:Theory} from the main manuscript below, followed by a detailed proof. 

\begin{theorem}
\label{TheoremPredRisk}
If Assumptions (A1)-(A4) in Section \ref{Section:Theory} of the main manuscript hold, $p = o(N)$, $k^2 \log k = o(p)$, $k^2 \log \log N = o(p)$, and $\|\beta_0 \|^2 = o(N)$, then the posterior predictive risk satisfies
\begin{align*}
   \frac{1}{N}\EE \| X \beta_0 - X \bar \beta (\phi) \|_2^2 &\leq  \EE_{X}\left\{ \kappa^4(X)\right\} \Bigg[ \frac{2\| \beta_{0}\|_2^{2}}{N}\left\{O\left(\frac{nb^8\sigma_Z^8 q^{12}}{k^8} + \frac{nb^8\sigma_Z^8 m_{\max}^4}{k^8} + nb^8\sigma_Z^8 \right)\right\} + \nonumber \\
    & \qquad \qquad \qquad  \qquad  \frac{4\tau_{0}^{2}}{a^{4}} \Bigg\{ \left(\frac{N-p}{N\log N}\right)O\left( \frac{nb^5\sigma_Z^6 (q^9 + m_{\max}^3)}{k^6}\right) + \nonumber\\
        &\qquad  \qquad  \qquad  \qquad  \left(\frac{p + 4e^{-p/8}}{N}\right)O\left(\frac{nb^3\sigma_Z^4(q^6+m_{\max}^2)}{k^4}\right)\Bigg\}\Bigg] = o(1),
\end{align*}
where $\kappa(X)$ is the condition number of $X$, and $m_{\max} = \max\{m_1, \dots, m_n\}$. 
\end{theorem}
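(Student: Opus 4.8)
The plan is to start from an exact bias--variance decomposition of the compressed posterior mean and then reduce everything to deterministic operator-norm inequalities together with moment bounds for the operator norms of the Gaussian matrices $X$, $Z$, $S$, $R$. Writing $A = \{X^\T C^{-1} X + (\delta^2\Lambda)^{-1}\}^{-1}$, the posterior mean in \eqref{eq:condbeta} is $\bar\beta(\phi) = A X^\T C^{-1} y$. Substituting the true model $y = X\beta_0 + \epsilon_0$ with $\epsilon_0 \sim \Ncal(0, \tau_0^2 V_0)$, $V_0 = \diag(V_{01}, \dots, V_{0n})$, and using the ridge identity $A X^\T C^{-1} X = I_p - A(\delta^2\Lambda)^{-1}$, one gets
\[
  X\beta_0 - X\bar\beta(\phi) \;=\; X A (\delta^2\Lambda)^{-1}\beta_0 \;-\; X A X^\T C^{-1}\epsilon_0,
\]
hence $\|X\beta_0 - X\bar\beta(\phi)\|_2^2 \le 2\|X A(\delta^2\Lambda)^{-1}\beta_0\|_2^2 + 2\|X A X^\T C^{-1}\epsilon_0\|_2^2$ --- a shrinkage-bias term and a stochastic term that also absorbs the cost of using the proxy $C$ in place of $V_0$. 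The first structural reduction is to discharge the posterior-averaging layer $\EE_{\phi\mid y}$ that distinguishes this from the frequentist analysis: by Assumptions (A1) and (A3), and the range allowed for the tuning sequence $\delta$, on the support of the prior one has $\|\Gamma\| \le b$ and $\|(\delta^2\Lambda)^{-1}\| = O(1)$, so $\EE_{\phi\mid y}$ of each term is bounded by its supremum over this set, and it suffices to bound the two norms uniformly in $\phi$.

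For the deterministic bounds I would condition on $(R, S, Z, X)$ and use four inequalities: $C \succeq I_N$, so $\|C^{-1}\| \le 1$; $A \preceq (X^\T C^{-1}X)^{-1}$; $A \preceq \delta^2\Lambda$; and $X^\T X \preceq \|C\|\, X^\T C^{-1}X$. For the shrinkage-bias term the last two give $\|X A(\delta^2\Lambda)^{-1}\beta_0\|_2^2 \le \|C\|\,\beta_0^\T (\delta^2\Lambda)^{-1}\beta_0 = O(\|C\|\,\|\beta_0\|_2^2)$. For the stochastic term, taking $\EE_{\epsilon_0}$ gives $\tau_0^2\,\tr\!\big(XAX^\T C^{-1} V_0 C^{-1} XAX^\T\big)$; using $\|XAX^\T C^{-1}\| \le \kappa^2(X)\|C\|$ and the fact that $XAX^\T C^{-1}$ has rank at most $p$, one separates this trace into a contribution on the $p$-dimensional column space of $C^{-1/2}X$ (source of the $p/N$ factor) and a complementary contribution damped by the ridge profile of $A$ (source of the $(N-p)/(N\log N)$ factor), with a $\chi^2$/Gaussian tail bound for $\|\epsilon_0\|_2$ and $\sigma_{\min}(X)$ on a high-probability event supplying the $e^{-p/8}$ correction and the $a^{-4}$ constant. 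Throughout, the block-diagonal form of $C$ and $V_0$ and the row-block form of $X$ turn $\|C\|$ and $\|V_0\|$ into sums $\sum_{i=1}^n$ of $C_i$- and $V_{0i}$-operator norms, which is where the factor $n$ in the statement originates; uniformity in $\phi$ then leaves both terms bounded by $\kappa^4(X)$ times powers of $\|C_i\|$, $\|V_{0i}\|$ and $\|\beta_0\|_2^2$ (inflating with $\kappa(X) \ge 1$ where convenient).

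It then remains to take $\EE_{R, S, Z}$. Writing $C_i = I_{m_i} + Z_i S^\T\Gamma R R^\T\Gamma^\T S Z_i^\T$, so $\|C_i\| \le 1 + b^2\|Z_i\|^2\|S\|^2\|R\|^2$ (and $\|V_{0i}\|$ analogously), the standard moment estimates $\EE\|S\|^{2t}, \EE\|R\|^{2t} = O((q/k)^t)$ and $\EE\|Z_i\|^{2t} = O\big((\sigma_Z^2\max(q, m_i))^t\big)$ reproduce exactly the displayed powers of $b$, $\sigma_Z$, $q$, $m_{\max}$ and $1/k$; since $X$ is independent of $(R, S, Z)$, $\EE_X\{\kappa^4(X)\}$ factors out. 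Assumption (A4), $\sigma_Z^4 k^{-4} n(q^6 + m_{\max}^2) = O(1)$, then forces every $O(\cdot)$ factor to $o(1)$, while $p = o(N)$, $\|\beta_0\|_2^2 = o(N)$, $k^2\log k = o(p)$ and $k^2\log\log N = o(p)$ absorb the remaining prefactors, and $\EE_X\{\kappa^4(X)\} = O(1)$ for a tall Gaussian $X$ with $p = o(N)$ closes the argument. The main obstacle I expect is precisely this last boundedness of $\EE_X\{\kappa^4(X)\}$: it requires a uniform bound on negative moments of $\sigma_{\min}(X)$, the smallest singular value of the tall Gaussian design, which holds for Gaussian but not general sub-Gaussian entries --- this is why the Gaussianity of $X_i$ in (A4) cannot be relaxed. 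A secondary difficulty is that $\Gamma$ (through $C$) sits inside $A$, inside $V_{d_i\mid y_i}$, and inside the predictive variance simultaneously, so the random-matrix moment bounds and the $\EE_{\phi\mid y}$ step must be sequenced so that these dependencies do not compound, together with the routine but delicate bookkeeping needed to land the exact exponents displayed in the theorem.
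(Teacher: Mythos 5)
Your decomposition is algebraically the same as the paper's: the ridge identity $AX^{\T}C^{-1}X = I_p - A(\delta^2\Lambda)^{-1}$ combined with Woodbury gives exactly the two terms $T_1 = \|C(X\delta^2\Lambda X^{\T}+C)^{-1}X\beta_0\|_2^2$ and $T_2 = \|X\delta^2\Lambda X^{\T}(X\delta^2\Lambda X^{\T}+C)^{-1}\epsilon_0\|_2^2$ that the paper works with, and your operator-norm bounds $\|C_i\| \le 1+b^2\|Z_i\|^2\|S\|^2\|R\|^2$, the moment estimates for $\|R\|,\|S\|,\|Z_i\|$, the role of (A4), and the remark on $\EE_X\{\kappa^4(X)\}$ and Gaussianity all match the paper's auxiliary lemmas. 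The genuine gap is in the stochastic term. You first say you will discharge $\EE_{\phi\mid y}$ by a supremum over the prior's support, but then compute ``taking $\EE_{\epsilon_0}$ gives $\tau_0^2\tr(XAX^{\T}C^{-1}V_0C^{-1}XAX^{\T})$'' with $C$ and $A$ held fixed --- which is illegitimate, because $\phi$ (hence $C$ and $A$) is drawn from the posterior given $y$ and is therefore correlated with $\epsilon_0$. The only rigorous route is $\EE_{\epsilon_0}\sup_{\Gamma}$ of the quadratic form, and the crude bound $\sup_\Gamma\|M(\Gamma)\|\cdot\|\epsilon_0\|_2^2$ replaces the rank-$p$ trace by $\EE\|\epsilon_0\|_2^2 = \tau_0^2 N$, turning the would-be $p/N$ factor into $O(1)$ and killing the theorem. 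The paper resolves this with a covering argument: a $1/\log N$-net of $\{\|\Gamma\|\le b\}$ of cardinality $\lceil 2bk\log N\rceil^{k^2}$, a Lipschitz bound in $\Gamma$ for $A(\Gamma)^{\T}A(\Gamma)$ (which is what actually produces the $(N-p)/\log N$ term --- not the ``ridge profile of $A$'' as you suggest), and a union bound over the net combined with stochastic domination by $c_*^2\tau_0^2\chi_{\min(k,p)}^2$ and sub-exponential tails, which is precisely where the hypotheses $k^2\log k = o(p)$ and $k^2\log\log N = o(p)$ enter (to make $\lceil 2bk\log N\rceil^{k^2}e^{-p/8} = e^{-p/8(1+o(1))}$). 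Your sketch never uses these two hypotheses for anything, which is a reliable sign that the step they are needed for is missing.

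A secondary problem: your bias bound rests on $\|(\delta^2\Lambda)^{-1}\| = O(1)$, but the paper's Assumption (A1) only confines $\lambda_j^2$ to $[a,1/a]$ and explicitly allows the global shrinkage $\delta^2$ to depend on $N,p,q$ (and in practice to vanish), so $\beta_0^{\T}(\delta^2\Lambda)^{-1}\beta_0$ carries an uncontrolled $\delta^{-2}$. The paper's route through $C(X\delta^2\Lambda X^{\T}+C)^{-1}X\beta_0$ and the ratio $\lambda_{\max}^2(\tilde D)/\lambda_{\min}^2(\tilde D)$ is arranged so that the powers of $\delta^2$ cancel; your version of the bias bound needs to be restructured the same way, or an explicit lower bound on $\delta^2$ must be added to the hypotheses.
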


\begin{proof}
We begin with decomposing the squared loss with respect to the posterior mean $\bar \beta (\phi)$ as follows:
    \begin{align}\label{eq:mse}
  \| X \beta_0 - X \bar \beta (\phi) \|^2 &=  \big \|  X \beta_0 -  X \{  X^{\T} C^{-1} X + (\delta^{2} \Lambda)^{-1}\}^{-1} X^{\T} C^{-1} y \big \|_2^2 \nonumber \\
    &=
    \big \|  X \beta_0 -  X \{  X^{\T} C^{-1} X + (\delta^{2} \Lambda)^{-1}\}^{-1} X^{\T} C^{-1} (X \beta_0 + \epsilon_0) \big \|_2^2 \nonumber  \\
    &\leq 2 \big \|  \left[ X - X \{  X^{\T} C^{-1} X + (\delta^{2} \Lambda)^{-1}\}^{-1} X^{\T} C^{-1} X \right] \beta_0 \big \|_2^2 + \nonumber \\
    &\quad \; 2
    \big \|X \{  X^{\T} C^{-1} X + (\delta^{2} \Lambda)^{-1}\}^{-1} X^{\T} C^{-1} \epsilon_0 \big \|_2^2.
\end{align}
Lemma \ref{auxlem1} implies that \eqref{eq:mse} reduces to
\begin{align}
  \label{eq:mse1}
   \| X \beta_0 - X \bar \beta (\phi) \|_2^2 &\leq 2 \| C (X \delta^2 \Lambda X^\T
   + C)^{-1} X \beta_0 \|_2^{2} + 2 \| X\delta^2 \Lambda X^\T (X \delta^2 \Lambda
   X^\T + C)^{-1} \epsilon_0 \|_2^2 \nonumber \\
   &\equiv 2 T_1 + 2 T_2.
\end{align}
Lemma \ref{lemt1} implies that $T_1$ in \eqref{eq:mse1} satisfies
\begin{equation}\label{boundEt1}
    2\EE(T_1) \leq 2\EE_{X} \left\{
\kappa^4(X) \right\} \| \beta_{0}\|_2^{2}\left\{O\left(\frac{nb^8\sigma_Z^8 q^{12}}{k^8} + \frac{nb^8\sigma_Z^8 m_{\max}^4}{k^8} + nb^8\sigma_Z^8 \right)\right\},
\end{equation}
where $\kappa(X) = s_{\max}(X) / s_{\min}(X)$ is the condition number, and $s_{\max}(X)$ and $s_{\min}(X)$ are the maximum and minimum singular values of $X$. Lemma \ref{lemt2} implies that $T_2$ in \eqref{eq:mse1} satisfies
\begin{equation}\label{boundEt2}
    2\EE (T_2) \leq \frac{4\tau_{0}^{2}}{a^{4}} \; \EE_{X}\bigl\{\kappa^4(X)\bigr\} \; \left\{ \left(\frac{N-p}{\log N}\right)O\left( \frac{nb^5\sigma_Z^6 (q^9 + m_{\max}^3)}{k^6}\right) + (p + 4e^{-p/8})O\left(\frac{nb^3\sigma_Z^4(q^6+m_{\max}^2)}{k^4}\right)\right\},
\end{equation}
where $a$ and $b$ are constants defined in Assumptions (A1) and (A3) respectively.

Combining \eqref{boundEt1} and \eqref{boundEt2}, we have the prediction risk
\begin{align}\label{eq:mse-final}
    \frac{1}{N}\EE \| X \beta_0 - X \bar \beta (\phi) \|_2^2 &\leq  \EE_{X}\left\{ \kappa^4(X)\right\} \Bigg[ \frac{2\| \beta_{0}\|_2^{2}}{N}\left\{O\left(\frac{nb^8\sigma_Z^8 q^{12}}{k^8} + \frac{nb^8\sigma_Z^8 m_{\max}^4}{k^8} + nb^8\sigma_Z^8 \right)\right\} \nonumber \\
    &\qquad \qquad \qquad \qquad + \frac{4\tau_{0}^{2}}{a^{4}} \Bigg\{ \left(\frac{N-p}{N\log N}\right)O\left( \frac{nb^5\sigma_Z^6 (q^9 + m_{\max}^3)}{k^6}\right) \nonumber\\
        &\qquad \qquad \qquad \qquad + \left(\frac{p + 4e^{-p/8}}{N}\right)O\left(\frac{nb^3\sigma_Z^4(q^6+m_{\max}^2)}{k^4}\right)\Bigg\}\Bigg] \nonumber \\
    &= o(1),
\end{align}
since under Assumption (A4), $\EE_{X} \left\{ \kappa^4(X) \right\} = 1 + o(1)$; see \cite{KSS:2025} for details. This establishes the required result. \end{proof}

\subsection{Upper Bound for $T_2$}\label{Section:ProofT2}

\begin{lemma}\label{lemt2}
Under the assumptions of Theorem 3.1 in the main manuscript,
\begin{align*}
\EE (T_2) \leq \frac{2\tau_{0}^{2}}{a^{4}} \; \EE_{X}\bigl\{\kappa^4(X)\bigr\} \; \left\{ \left(\frac{N-p}{\log N}\right)O\left( \frac{nb^5\sigma_Z^6 (q^9 + m_{\max}^3)}{k^6}\right) + (p + 4e^{-p/8})O\left(\frac{nb^3\sigma_Z^4(q^6+m_{\max}^2)}{k^4}\right)\right\}
\end{align*}
where $\kappa(X)$ is the condition number of $X$ and $m_{\max} = \max\{m_1, \dots, m_n\}$.
\end{lemma}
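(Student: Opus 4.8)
The plan is to turn the random scalar $T_2 = \|A M^{-1}\epsilon_0\|_2^2$, with $A := X\delta^2\Lambda X^\T$ and $M := A + C$, into a deterministic spectral estimate and then average. First I would integrate out $\epsilon_0 \sim \Ncal(0,\tau_0^2 V_0)$ conditionally on $(X,Z,S,R)$, so that $\EE[T_2\mid X,Z,S,R] = \tau_0^2\,\tr(V_0\, M^{-1}A^2 M^{-1})$. The structural fact to exploit is that $\rank(A)\le p<N$ almost surely, so $A$ is supported on $\mathcal R := \mathrm{range}(X)$, a $p$-dimensional subspace.

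Next I would decompose $\RR^N = \mathcal R\oplus\mathcal R^\perp$ (with $\Pi$ the orthogonal projection onto $\mathcal R$) and, in this basis, write $A$ in block form with $(1,1)$-block $A_{11}$ and all other blocks zero, $C$ with blocks $C_{11},C_{12},C_{21},C_{22}$, and $V_0$ analogously. The Schur-complement form of $M^{-1}$ gives
\[
A M^{-1} = \begin{pmatrix} A_{11}S_X^{-1} & -A_{11}S_X^{-1}C_{12}C_{22}^{-1}\\ 0 & 0\end{pmatrix},\qquad S_X := A_{11} + C_{11} - C_{12}C_{22}^{-1}C_{21},
\]
so that expanding $\tr(V_0 M^{-1}A^2 M^{-1})$ in these blocks produces a rank-$\le p$ ``signal-subspace'' term $\tr\!\big((V_0)_{11}S_X^{-1}A_{11}^2 S_X^{-1}\big)$ together with an $\mathcal R^\perp$-block term and cross terms, all of which carry the off-diagonal block $C_{12}$. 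Since $C$ differs from $I_N$ only through $E := C - I_N = \diag(Z_i S^\T\Gamma RR^\T\Gamma^\T S Z_i^\T)$, one has $C_{12} = \Pi E(I-\Pi)$: the orthogonal-noise contribution ``leaks'' only through $E$, hence is suppressed by powers of $\|E\|$ and $\|E\|_F$, while the $\mathcal R$-block is the unavoidable $O(p)$ ridge-type noise.

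For the deterministic part I would use that $C\succeq I_N$ forces $C_{11}-C_{12}C_{22}^{-1}C_{21}\succeq I$ and $C_{22}^{-1}\preceq I$, hence $S_X\succeq A_{11}$ and $\|A_{11}S_X^{-1}\|\le \|A_{11}\|\,\lambda_{\min}(A_{11})^{-1}$; Assumption (A1) gives $\delta^2 a\, s_{\min}^2(X)\, I\preceq A_{11}\preceq \delta^2 a^{-1}s_{\max}^2(X)\, I$ on $\mathcal R$, so $\|A_{11}S_X^{-1}\|\le \kappa^2(X)/a^2$ — the source of the $\kappa^4(X)$ and $1/a^4$ in the bound, and note $\delta^2$ cancels, which is essential since the global shrinkage may vary — together with $\|C_{12}\|\le\|E\|$ and $\|C_{12}\|_F\le\|E\|_F$. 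On the probabilistic side I would bound moments of $\|E\|$, $\|E\|_F$, and of $\|V_0-I_N\|=\|\diag(Z_i\Sigma_0 Z_i^\T)\|$ (which, under the paper's low-rank modeling of $\Sigma_0$, obeys the same bound as $\|E\|$) via standard non-asymptotic spectral bounds for Gaussian matrices: $\EE\|Z_i\|^{2r}=O((\sigma_Z^2(m_i+q))^r)$, $\EE\|S\|^{2r},\EE\|R\|^{2r}=O((q/k)^r)$ by (A2), $\|\Gamma\|\le b$ by (A3), so that $\EE\|E\|^r$ is of order $\big(b^2\sigma_Z^2(q^3+m_{\max}q^2)/k^2\big)^r$ (up to a polynomial factor in $n$ from the maximum over the $n$ subject blocks) and $\EE\|E\|_F^2$ of order $kn\big(b^2\sigma_Z^2(q^3+m_{\max}q^2)/k^2\big)^2$ (the $k$ from $\rank(E_i)\le k$, the $n$ from summing Frobenius norms over subjects). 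Assembling: the $\mathcal R$-block is $\le\|V_0\|\,\|A_{11}S_X^{-1}\|_F^2\le\|V_0\|\,p\,\kappa^4(X)/a^4$, giving the ``$p$''-piece; the $\mathcal R^\perp$- and cross-terms pick up extra powers of $\|E\|$ and a dimension factor $\le N-p$, giving the ``$(N-p)$''-piece. Finally I would isolate the exceptional event $\{\kappa(X)\ \text{atypically large}\}$, of probability $\le 4e^{-p/8}$ by Gaussian concentration of $s_{\min}(X),s_{\max}(X)$ when $p=o(N)$, handle it with a crude deterministic bound (the origin of the $e^{-p/8}$ term), and use $\EE_X\{\kappa^4(X)\}=1+o(1)$ under (A4) to pull that factor out; the $1/\log N$ on the $(N-p)$-piece comes from the truncation level used for the residual tail of the orthogonal noise.

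I expect the main obstacle is not any single inequality but the bookkeeping in the last step: propagating the operator- and Frobenius-norm moment bounds for the products $Z_i S^\T\Gamma RR^\T\Gamma^\T S Z_i^\T$ with the exact polynomial orders in $q,m_{\max},k,n$, keeping the $\delta^2$-invariance intact, and matching the $\EE_X\{\kappa^4(X)\}$ factor and the exceptional-event contributions to the stated form. The two genuinely delicate observations are (i) that the orthogonal-noise leakage is carried entirely by the off-diagonal block of $E = C - I_N$, hence decays like a power of $\|E\|$, and (ii) that the $\kappa^4(X)/a^4$ overhead incurred by crudely separating $\|A_{11}\|$ and $\|S_X^{-1}\|$ is affordable precisely because $\EE_X\{\kappa^4(X)\}\to 1$ when $p=o(N)$.
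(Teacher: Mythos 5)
Your block decomposition along $\mathrm{range}(X)$, the Schur-complement form of $(X\delta^2\Lambda X^\T+C)^{-1}$, the observation that $F=\tilde D+\tilde C_{11}-\tilde C_{12}\tilde C_{22}^{-1}\tilde C_{12}^\T\succeq\tilde D$ so that the signal block is controlled by $\kappa^4(X)/a^4$ with the $\delta^2$ cancelling, and the observation that the orthogonal-noise leakage is carried entirely by the off-diagonal block of $C-I_N$ all match the paper's argument. But there is a genuine gap at the very first step: you write $\EE[T_2\mid X,Z,S,R]=\tau_0^2\,\tr(V_0\,M^{-1}A^2M^{-1})$ by ``integrating out $\epsilon_0$,'' treating $M=X\delta^2\Lambda X^\T+C(\Gamma)$ as fixed given $(X,Z,S,R)$. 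It is not: $\phi=\{\Gamma,\Lambda,\delta\}$ is drawn from the \emph{posterior} given $y=X\beta_0+\epsilon_0$, so $C(\Gamma)$ and $\epsilon_0$ are dependent and the Gaussian trace identity does not apply. This dependence is the central difficulty of the lemma, and the paper's entire second half exists to handle it: the inner expectation $\EE_{\Gamma\mid y}$ is bounded by $\sup_{\|\Gamma\|\le b}$, the supremum is discretized by a $1/\log N$-net of $\{\Gamma:\|\Gamma\|\le b\}$ with cardinality $\lceil 2bk\log N\rceil^{k^2}$ (Lemma \ref{auxlem2}), a Lipschitz bound on $\Gamma\mapsto A(\Gamma)^\T A(\Gamma)$ (Lemma \ref{auxlem5}) controls the approximation error, and a union bound over the net combined with sub-exponential $\chi^2_p$ tails controls the maximum. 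None of this machinery appears in your proposal.

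This omission also means your accounting of where the terms in the bound come from is wrong in two places. The factor $1/\log N$ multiplying the $(N-p)$ piece is the resolution of the $\Gamma$-net entering through the Lipschitz term $4b(2\breve c+3b^2\breve c^2+b^4\breve c^3)\|\tilde\epsilon_2\|^2/\log N$, not ``the truncation level used for the residual tail of the orthogonal noise.'' The $4e^{-p/8}$ term comes from the union bound $\lceil 2bk\log N\rceil^{k^2}e^{-p/8}$ over the net paired with the $\chi^2_p$ tail — which is exactly why the theorem needs $k^2\log k=o(p)$ and $k^2\log\log N=o(p)$ — not from an exceptional event for $\kappa(X)$; the paper pulls out $\EE_X\{\kappa^4(X)\}$ directly by independence of $\tilde\epsilon_2^\T A(\Gamma)^\T A(\Gamma)\tilde\epsilon_2$ from $X$ and only invokes $\EE_X\{\kappa^4(X)\}=1+o(1)$ at the very end. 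Without the sup-plus-net step your argument cannot produce either the $\log N$ or the $e^{-p/8}$, and it cannot explain why those two hypotheses on $k$ are needed, so the proof as proposed does not go through. (As a minor aside, your handling of $V_0\neq I_N$ is more careful than the paper's, which simply takes $U^\T\epsilon_0\sim\Ncal(0,\tau_0^2 I_N)$, but that does not repair the main gap.)
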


\begin{proof}
The term $T_{2} = \| X\delta^2 \Lambda X^\T (X \delta^2 \Lambda
   X^\T + C)^{-1} \epsilon_0 \|_2^2$ depends on $X\delta^2 \Lambda X^\T$ and $X \delta^2 \Lambda
   X^\T + C$. Let the singular value decomposition of $X$ and
spectral decomposition of $X \delta^2 \Lambda X^\T$ be
\begin{align} \label{eq:1}
  X = U
  \begin{bmatrix}
  D \\
  0_{N-p \times p}
  \end{bmatrix}
  V^\T, \quad
  X \delta^2 \Lambda X^\T = U
\begin{bmatrix}
  D V^\T \delta^2 \Lambda V D & 0_{p \times N - p} \\
  0_{N-p \times p} & 0_{N - p \times N - p}
  \end{bmatrix}
  U^\T,
\end{align}
where $N > p$, $U$ and $V$ are $N \times N$ and $p \times p$
orthonormal matrices, and $D$ is a $p \times p$ diagonal matrix of singular
values of $X$. If $\tilde D = D V^\T \delta^2 \Lambda V D$ and $\tilde C = U^{\T} C U$, then the  second
decomposition in (\ref{eq:1}) implies that
\begin{align}\label{eq:2}
  X \delta^2 \Lambda X^\T
+ C =   X \delta^2 \Lambda X^\T
+ U U^{\T} C U U^{\T} = U
\begin{bmatrix}
  \tilde D + \tilde C_{11} & \tilde C_{12} \\
  \tilde C_{12}^{\T} & \tilde C_{22}
  \end{bmatrix}
  U^\T,
\end{align}
where $\tilde C_{ij}$ is the $ij$-th block of the $2 \times 2$ block matrix
$\tilde C$. Substituting (\ref{eq:1}) and (\ref{eq:2}) in $T_{2}$ implies that
\begin{align}
\label{eq:3}
T_{2} &=
\left \|  U
\begin{bmatrix}
  \tilde D & 0_{p \times N - p} \\
  0_{N-p \times p} & 0_{N - p \times N - p}
  \end{bmatrix}
  U^\T U
\begin{bmatrix}
  \tilde D + \tilde C_{11} & \tilde C_{12} \\
  \tilde C_{12}^{\T} & \tilde C_{22}
  \end{bmatrix}^{{-1}}
  U^\T \epsilon_0 \right  \|_2^{2} \nonumber \\
  &=
\left \|
\begin{bmatrix}
  \tilde D & 0_{p \times N - p} \\
  0_{N-p \times p} & 0_{N - p \times N - p}
  \end{bmatrix}
\begin{bmatrix}
  \tilde D + \tilde C_{11} & \tilde C_{12} \\
  \tilde C_{12}^{\T} & \tilde C_{22}
  \end{bmatrix}^{{-1}}
  \tilde \epsilon \right  \|_2^{2},
\end{align}
where the first $U$ leaves the norm unchanged because $U^{\T} U = I_{N}$ and
$\tilde \epsilon = U^{\T} \epsilon_0$ is distributed as $\Ncal(0, \tau_{0}^{2}
I_{N})$ because $U$ is orthonormal. Furthermore, $C$ is positive definite because $C=\text{diag}(C_1, \dots, C_n)$ with $C_i(\Gamma) = Z_i S^{\T}
\Gamma R  \left( Z_i S^{\T} \Gamma R \right)^{\T} + I_{m_i}$, which implies that $\tilde C = U^{\T}C U$ is a positive definite matrix because $U$ is orthonormal. Hence, $\tilde C_{22}^{-1}$ exists. If $F = \tilde D + \tilde C_{11} - \tilde
C_{12} \tilde C_{22}^{-1} \tilde C_{12}^{\T}$, then the block matrix inversion
formula \citep{Har97} gives
\begin{align}\label{eq:blockmatinv}
&\begin{bmatrix}
  \tilde D + \tilde C_{11} & \tilde C_{12} \\
  \tilde C_{12}^{\T} & \tilde C_{22}
  \end{bmatrix}^{{-1}} =
\begin{bmatrix}
  F^{-1} &  - F^{{-1}} \tilde C_{12} \tilde C_{22}^{-1} \\
  - \tilde C_{22}^{{-1}} \tilde C_{12}^{\T} F^{-1}  & \tilde C_{22}^{-1} + \tilde
  C_{22}^{-1} C_{12}^{\T} F^{{-1}} \tilde C_{12} \tilde C_{22}^{-1}
\end{bmatrix}.
\end{align}
Substituting this identity in (\ref{eq:3}) implies that
\begin{align}\label{eq:22}
\begin{bmatrix}
  \tilde D & 0_{p \times N - p} \\
  0_{N-p \times p} & 0_{N - p \times N - p}
  \end{bmatrix}
&\begin{bmatrix}
  \tilde D + \tilde C_{11} & \tilde C_{12} \\
  \tilde C_{12}^{\T} & \tilde C_{22}
  \end{bmatrix}^{{-1}} =
\begin{bmatrix}
  \tilde D F^{-1} &  - \tilde D  F^{{-1}} \tilde C_{12} \tilde C_{22}^{-1} \\
   0_{N-p \times p} & 0_{N - p \times N - p}
\end{bmatrix}.
\end{align}
Using (\ref{eq:22}), (\ref{eq:3}) is written as the quadratic form
\begin{align}
\label{eq:5}
  T_{2} &= \tilde \epsilon^{\T}
  \begin{bmatrix}
   F^{-1} \tilde D &  0_{p \times N- p} \\
   - \tilde C_{22}^{-1} \tilde C_{12}^{\T} F^{{-1}} \tilde D      & 0_{N - p \times N - p}
  \end{bmatrix}
  \begin{bmatrix}
  \tilde D F^{-1} &  - \tilde D  F^{{-1}} \tilde C_{12} \tilde C_{22}^{-1} \\
   0_{N-p \times p} & 0_{N - p \times N - p}
\end{bmatrix}  \tilde \epsilon \nonumber \\
&= \tilde \epsilon^{\T}
  \begin{bmatrix}
   F^{-1} \tilde D^{2} F^{-1} &  - F^{{-1}} \tilde D^{2}  F^{{-1}} \tilde C_{12} \tilde C_{22}^{-1} \\
   - \tilde C_{22}^{-1} \tilde C_{12}^{\T} F^{{-1}} \tilde D^{2} F^{-1}      &
   \tilde C_{22}^{-1} \tilde C_{12}^{\T} F^{{-1}} \tilde D^{2} F^{-1} \tilde C_{12} \tilde C_{22}^{-1}
  \end{bmatrix}
  \tilde \epsilon,
\end{align}
which is further simplified by partitioning $\tilde \epsilon$ into two blocks as $\tilde \epsilon^{\T} = (\tilde \epsilon_{1}^{\T}, \tilde
\epsilon_{2}^{\T})$ of dimensions $p \times 1$ and $(N -
p) \times 1$, respectively.  Using Cauchy-Schwartz inequality for quadratic forms in
\eqref{eq:5} gives
\begin{align}
\label{eq:6}
  T_{2} \leq  2 \tilde \epsilon_{1}^{\T} F^{-1} \tilde D^{2} F^{-1} \tilde
  \epsilon_{1} + 2 \tilde \epsilon_{2}^{\T} \tilde C_{22}^{-1} \tilde C_{12}^{\T}
  F^{{-1}} \tilde D^{2} F^{-1} \tilde C_{12} \tilde C_{22}^{-1} \tilde \epsilon_{2}.
\end{align}
The first term in (\ref{eq:6}) satisfies
\begin{align}
\label{eq:7}
\tilde \epsilon_{1}^{\T} F^{-1} \tilde D^{2} F^{-1} \tilde \epsilon_{1}
\leq  \lambda_{\max}(\tilde D^{2}) \;   \tilde \epsilon_{1}^{\T} F^{-1} F^{-1}
\tilde \epsilon_{1} = \lambda^{2}_{\max}(\tilde D) \;   \tilde \epsilon_{1}^{\T} F^{-2} \tilde \epsilon_{1},
\end{align}
where $\lambda_{\max}(\tilde D)$ is the maximum eigen value of $\tilde D$ and $\lambda_{\max}(\tilde D^{2}) = \lambda^{2}_{\max}(\tilde D)$ because $\tilde D$ is a symmetric matrix. 
The Schur complement of $\tilde C_{22}$ in $\tilde C$ is $\tilde C_{11} - \tilde C_{12}
\tilde C_{22}^{-1} \tilde C_{12}^{\T}$, so $\lambda_{\min}(\tilde C_{11} - \tilde C_{12}
\tilde C_{22}^{-1} \tilde C_{12}^{\T}) \geq 0$, where $\lambda_{\min}(A)$ is the minimum eigen value of $A$. This implies that
\begin{align}
\label{eq:9}
  F^{{-1}} = (\tilde D + \tilde C_{11} - \tilde C_{12} \tilde C_{22}^{-1} \tilde
C_{12}^{\T})^{{-1}} \preceq \tilde D^{-1} \preceq \lambda_{\min}^{-1}(\tilde D) \; I_{p},
  \quad F^{{-2}} \preceq \lambda_{\min}^{-2}(\tilde D) \; I_{p},
\end{align}
where $A \preceq B$ for two positive semi-definite matrices $A$ and $B$ implies
that $\lambda_{\min}(B - A) \geq 0$. Using \eqref{eq:9} in \eqref{eq:7} gives
\begin{align}
\label{eq:10}
\tilde \epsilon_{1}^{\T} F^{-1} \tilde D^{2} F^{-1} \tilde \epsilon_{1}
\leq  \frac{\lambda_{\max}^{2}(\tilde D)}{\lambda_{\min}^{2}(\tilde D)} \;   \tilde \epsilon_{1}^{\T} \tilde \epsilon_{1}.
\end{align}

We simplify $\tilde D$ to find upper and lower bounds for
$\lambda_{\max}(\tilde D)$ and $\lambda_{\min}(\tilde D)$. The Assumption (A1) in the main manuscript on the support of $\Lambda$ implies that $\lambda_{j}^2 \in
[a, 1/a]$ for $j=1, \ldots, p$, so 
\begin{align}
\label{eq:4}
\tilde D &= D V^\T \delta^2 \Lambda V D \preceq
a^{-1} \delta^2  D V^\T  V D  = a^{-1} \delta^2
D^{2}, \nonumber\\
 \tilde D &= D V^\T \delta^2 \Lambda V D \succeq
 a \delta^2  D V^\T  V D = a \delta^2 D^{2}.
\end{align}
Furthermore, (\ref{eq:1}) implies that $D_{11} = s_{\max}(X)$ and $D_{pp} =
s_{\min}(X)$, where $s_{\max}(X)$ and $s_{\min}(X)$ are the maximum and minimum singular values of $X$. Using this in \eqref{eq:4} gives
\begin{align}
\label{eq:23}
\lambda_{\max}(\tilde D) \leq  a^{-1} \delta^{2} s^{2}_{\max}(X), \quad
\lambda_{\min}(\tilde D) \geq a \delta^{2}  s^{2}_{\min}(X), \quad
\frac{\lambda_{\max}^{2}(\tilde D)}{\lambda_{\min}^{2}(\tilde D)} \leq \frac{1}{a^{4}} \frac{s_{\max}^{4}(X)}{s_{\min}^{4}(X)}.
\end{align}
Substituting the last inequality in \eqref{eq:10} implies that
\begin{align}
\label{eq:24}
\tilde \epsilon_{1}^{\T} F^{-1} \tilde D^{2} F^{-1} \tilde \epsilon_{1}
\leq  \frac{1}{a^{4}} \; \kappa^4(X)  \; \tilde \epsilon_{1}^{\T} \tilde \epsilon_{1}, 
\end{align}
where $\kappa(X) = {s_{\max}(X)}/{s_{\min}(X)}$ denotes the condition number of $X$. Finally, we take expectations on both sides in (\ref{eq:24}). The term $\tilde
\epsilon_{1}^{\T} \tilde \epsilon_{1}$ only depends on the distribution of
$\tilde \epsilon$ and is independent of the distribution of $X$, implying that
\begin{align}
\label{eq:16}
\EE (\tilde \epsilon_{1}^{\T} F^{-1} \tilde D^{2} F^{-1} \tilde \epsilon_{1})
\leq  \frac{1}{a^{4}} \; \EE_{X} \{\kappa^4(X)\}  \;  \EE_{\tilde \epsilon} (\tilde
\epsilon_{1}^{\T} \tilde \epsilon_{1}) = \frac{\tau_{0}^{2}}{a^{4}} \; \EE_{X} \{\kappa^4(X)\}  \;  p,
\end{align}
where $\EE_{\tilde \epsilon} (\tilde \epsilon_{1}^{\T} \tilde \epsilon_{1}) = \tau_{0}^2 p$
because $\epsilon_{1} / \tau_{0}$ is distributed as $\Ncal(0, I_{p})$.

Consider the second term in (\ref{eq:6}). The bounds in (\ref{eq:10}) and
(\ref{eq:23}) imply that
\begin{align}
\label{eq:11}
\tilde \epsilon_{2}^{\T} \tilde C_{22}^{-1} \tilde C_{12}^{\T} F^{{-1}} \tilde
D^{2} F^{-1} \tilde C_{12} \tilde C_{22}^{-1} \tilde \epsilon_{2} &\leq
\frac{\lambda_{\max}^{2}(\tilde D)}{\lambda_{\min}^{2}(\tilde D)} \;   \tilde
\epsilon_{2}^{\T} \tilde C_{22}^{-1} \tilde C_{12}^{\T}  \tilde C_{12} \tilde
C_{22}^{-1} \tilde \epsilon_{2} \nonumber \\
&\leq \frac{1}{a^{4}} \kappa^4(X) \;   \tilde
\epsilon_{2}^{\T} \tilde C_{22}^{-1} \tilde C_{12}^{\T}  \tilde C_{12} \tilde
C_{22}^{-1} \tilde \epsilon_{2}.
\end{align}
The quadratic form $\tilde
\epsilon_{2}^{\T} \tilde C_{22}^{-1} \tilde C_{12}^{\T}  \tilde C_{12} \tilde
C_{22}^{-1} \tilde \epsilon_{2}$ is independent of the distribution of $X$, so taking expectation on both sides of (\ref{eq:11}) gives
\begin{align}
\label{eq:12}
    \EE (\tilde \epsilon_{2}^{\T} \tilde C_{22}^{-1} \tilde C_{12}^{\T} F^{{-1}} \tilde
D^{2} F^{-1} \tilde C_{12} \tilde C_{22}^{-1} \tilde \epsilon_{2}) &\leq \frac{1}{a^4}\EE_{X}\left\{\kappa^4(X) \right\} \EE(\tilde
\epsilon_{2}^{\T} \tilde C_{22}^{-1} \tilde C_{12}^{\T}  \tilde C_{12} \tilde
C_{22}^{-1} \tilde \epsilon_2).
\end{align}
If $\EE_{R,S, Z}, \EE_{\tilde \epsilon \mid R, S, Z}$, and $\EE_{\Gamma \mid y}$ denote the expectations with respect to the distributions of $(R, S, Z)$ and $\tilde \epsilon$ given $(R, S, Z)$ and the posterior distribution of $\Gamma $ given $y$, then the second expectation on the right of \eqref{eq:12} is
\begin{align}
\label{eq:13}
\EE(\tilde \epsilon_{2}^{\T} \tilde C_{22}^{-1} \tilde C_{12}^{\T}  \tilde
C_{12} \tilde C_{22}^{-1} \tilde \epsilon_{2}) &=\EE_{R,S,Z} \EE_{\tilde \epsilon \mid R,S,Z} \EE_{\Gamma \mid y}
(\tilde \epsilon_{2}^{\T} \tilde C_{22}^{-1} \tilde C_{12}^{\T}  \tilde C_{12}
\tilde C_{22}^{-1} \tilde \epsilon_{2})\nonumber \\
&\leq \EE_{R,S,Z}\EE_{\tilde \epsilon \mid R,S,Z} \{\sup_{{\Gamma: \| \Gamma \| \leq b}} (\tilde
\epsilon_{2}^{\T} \tilde C_{22}^{-1} \tilde C_{12}^{\T}  \tilde C_{12} \tilde
C_{22}^{-1} \tilde \epsilon_{2}) \EE_{\Gamma \mid y} (1) \} \nonumber \\
& =  \EE_{R,S,Z}\EE_{\tilde \epsilon \mid R,S,Z} \{\sup_{{\Gamma: \| \Gamma \| \leq b}} (\tilde
\epsilon_{2}^{\T} \tilde C_{22}^{-1} \tilde C_{12}^{\T}  \tilde C_{12} \tilde
C_{22}^{-1} \tilde \epsilon_{2}) \},
\end{align}
where $b$ is a positive constant. 

We use a covering argument to bound the last term involving the supremum. Lemma \ref{auxlem2} shows that there is a $1/ \log N$-net, denoted as $\tilde \Gcal$, for the set $\Gcal = \{ \Gamma \in \RR^{k \times k}: \| \Gamma \| \leq b\}$ such that the cardinality of $\tilde \Gcal$ is $\lceil 2bk\log N \rceil^{k^2}$. This implies that for any $\Gamma \in \Gcal$, there is a  $\tilde \Gamma \in \tilde \Gcal$ such that $\| \Gamma - \tilde \Gamma \| \leq 1 / \log N$. Let  $A(\Gamma) = \tilde C_{12}(\Gamma) \tilde C_{22} (\Gamma)^{-1}$ be the matrix that appears in \eqref{eq:13}, where the notation emphasizes the dependence on $\Gamma$.
We show that the norms of terms involving $A(\Gamma)$ for any $\Gamma \in \Gcal$ are controlled using $A(\tilde \Gamma)$ for  $\tilde \Gamma \in \tilde \Gcal$. Specifically,
\begin{align}\label{eq:lipsch1}
\tilde \epsilon_{2}^{\T} \left\{ A(\Gamma)^{\T} A(\Gamma) - A(\tilde \Gamma)^{\T} A(\tilde \Gamma)\right\} \tilde \epsilon_{2} &\leq \| A(\Gamma)^{\T} A(\Gamma) - A(\tilde \Gamma)^{\T} A(\tilde \Gamma) \| \| \tilde \epsilon_{2} \|^2 \nonumber\\
&\overset{(i)}{\leq} 4b (2\breve c + 3 b^2 \breve c^2 + b^4 \breve c^3)\, \| \Gamma - \tilde \Gamma\| \nonumber\\
&\overset{(ii)}{\leq} 4b (2\breve c + 3 b^2 \breve c^2 + b^4 \breve c^3) /\log N
\end{align} 
where $(i)$ follows from Lemma \ref{auxlem5}, $\breve c$ is defined in Lemma \ref{auxlem5},  and $(ii)$ follows because $\tilde \Gcal$ is a $1/\log N$-net of $\Gcal$.

Using  \eqref{eq:lipsch1} in  \eqref{eq:13} gives 
\begin{align}
\label{eq:28}
\sup_{{\Gamma: \| \Gamma \| \leq b}} (\tilde
\epsilon_{2}^{\T} \tilde C_{22}^{-1} \tilde C_{12}^{\T}  \tilde C_{12} \tilde
C_{22}^{-1} \tilde \epsilon_{2}) &=
\sup_{{\Gamma: \| \Gamma \| \leq b}} \tilde \epsilon_{2}^{\T} \left[
 \left\{ A(\Gamma)^{\T} A(\Gamma) - A(\tilde \Gamma)^{\T} A(\tilde \Gamma)\right\} + \left\{ A(\tilde \Gamma)^{\T} A(\tilde \Gamma)\right\} \right] \tilde \epsilon_{2} \nonumber\\
&\leq \frac{4b}{\log N} (2\breve c + 3 b^2 \breve c^2 + b^4 \breve c^3)  \| \tilde \epsilon_{2} \|^{2} + \underset{\tilde \Gamma \in \tilde \Gcal}{\max} \; \tilde \epsilon_{2}^{\T} \left\{ A(\tilde \Gamma)^{\T} A(\tilde \Gamma)\right\} \tilde \epsilon_{2},\nonumber\\
\EE_{\tilde \epsilon \mid R,S, Z} \left\{ \sup_{{\Gamma: \| \Gamma \| \leq b}} (\tilde
\epsilon_{2}^{\T} \tilde C_{22}^{-1} \tilde C_{12}^{\T}  \tilde C_{12} \tilde
C_{22}^{-1} \tilde \epsilon_{2}) \right\} &\leq   \frac{4b}{\log N} (2\breve c + 3 b^2 \breve c^2 + b^4 \breve c^3)   
 \EE_{\tilde \epsilon \mid R,S, Z} \| \tilde \epsilon_{2} \|^{2} + \nonumber\\
&\qquad \EE_{\tilde \epsilon \mid R, S, Z} \underset{\tilde \Gamma \in \tilde \Gcal}{\max} \; \tilde \epsilon_{2}^{\T} \left\{ A(\tilde \Gamma)^{\T} A(\tilde \Gamma)\right\} \tilde \epsilon_{2} \nonumber\\
&\overset{(i)}{=} \frac{4b\tau_0^2(N-p)}{\log N} (2\breve c + 3 b^2 \breve c^2 + b^4 \breve c^3)  
    + \nonumber\\
&\qquad \EE_{\tilde \epsilon \mid R, S, Z} \underset{\tilde \Gamma \in \tilde \Gcal}{\max} \; \tilde \epsilon_{2}^{\T} \left\{ A(\tilde \Gamma)^{\T} A(\tilde \Gamma)\right\} \tilde \epsilon_{2},
\end{align}
where $(i)$ follows because $ \epsilon_{2} / \tau_{0}$ is distributed as $\Ncal(0, I_{N-p})$. 
For every $\tilde \Gamma \in \tilde \Gcal$, the rank of $A(\tilde \Gamma)  = \tilde C_{12} (\tilde \Gamma) \tilde C_{22}(\tilde \Gamma)^{-1}$ is $\min\{k, p\}$ and \eqref{eq:bound_A_Gamma} implies that $\| A(\tilde \Gamma)\|$ is bounded above by $c_*$, where $c_* = 1 + b^2 \check c$ is the upper bound on $\|C\|$. This in turn implies that $\|A(\tilde \Gamma)\|^{2}_{2}$ is stochastically dominated by $c_*^{2} \tau_{0}^{2}\| \breve{\epsilon} \|^{2}$, where $\breve{\epsilon}$ is distributed as $\Ncal(0, I_{\min(k, p)})$ and $\| \breve{\epsilon} \|^{2}$ is distributed as $\chi^{2}_{\min(k, p)}$. Using this fact, we provide an upper bound for the expectation in \eqref{eq:28} as follows:
\begin{align}
\label{eq:31}
\EE_{\tilde \epsilon \mid R, S, Z} \underset{\tilde \Gamma \in \tilde \Gcal}{\max} \; \tilde \epsilon_{2}^{\T} A(\tilde \Gamma)^{\T} A(\tilde \Gamma) \tilde \epsilon_{2} & = \int_{0}^{\infty} \PP \left \{ \underset{\tilde \Gamma \in \tilde \Gcal}{\max} \; \| A(\tilde \Gamma) \tilde \epsilon_{2} \|^{2}_{2} > t \right\} dt \nonumber \\
&= 2c_*^{2} \tau_{0}^{2} p  +
\int_{2 c_*^{2} \tau_{0}^{2} p}^{\infty} \PP \left \{ \underset{\tilde \Gamma \in \tilde \Gcal}{\max} \;\| A(\tilde \Gamma) \tilde \epsilon_{2} \|^{2}_{2} > t \right\} dt, \nonumber \\
\int_{2 c_*^{2} \tau_{0}^{2} p}^{\infty} \PP \left \{ \underset{\tilde \Gamma \in \tilde \Gcal}{\max} \;\| A(\tilde \Gamma) \tilde \epsilon_{2} \|^{2}_{2} > t \right\} dt
 &\overset{(i)}{\leq}
\sum_{\tilde \Gamma \in \tilde \Gcal} \int_{2 c_*^{2} \tau_{0}^{2} p}^{\infty} \PP \left \{ \| A(\tilde \Gamma) \tilde \epsilon_{2} \|^{2}_{2} > t \right\} dt \nonumber \\
&\overset{(ii)}{\leq}
\sum_{\tilde \Gamma \in \tilde \Gcal}
\int_{2 c_*^{2} \tau_{0}^{2} p}^{\infty} \PP \left \{ c_*^{2} \tau_{0}^{2} \chi^{2}_{\min(k, p)} > t \right\} dt \nonumber \\
&\overset{(iii)}{\leq}
\sum_{\tilde \Gamma \in \tilde \Gcal}
\int_{2 c_*^{2} \tau_{0}^{2} p}^{\infty} \PP \left( \chi^{2}_{p} > t / c_*^{2} \tau_{0}^{2}  \right) dt \nonumber\\
&\overset{(iv)}{=}
c_*^{2} \tau_{0}^{2}  \lceil 2 b k \log N \rceil^{k^2}
\int_{p}^{\infty} \PP \left( \chi^{2}_{p} - p > u  \right) du,
\end{align}
where $(i)$ follows from the union bound, $(ii)$ follows because  $c_*^{2} \tau_{0}^{2}\| \breve{\epsilon} \|^{2}$ stochastically dominates $\| A(\tilde \Gamma) \tilde \epsilon_{2} \|^{2}_{2}$,  $(iii)$ follows because $ \chi^{2}_{p}$ stochastically dominates $\chi^{2}_{\min(k, p)}$, and $(iv)$ follows from the substitution $u = t/c_*^{2} \tau_{0}^{2} - p$. Definition 2.7 in \citet{Wainwright_2019} implies that if $f$ follows $\chi^{2}_{p}$ distribution, then its moment generating function is
\begin{align*}
\EE \{e^{\lambda (f - p)}\} = \frac{e^{- \lambda p}}{(1 - 2 \lambda)^{p/2}} \leq e^{4 p \lambda^{2} / 2}, \quad |\lambda | < 1/4,
\end{align*}
and it is subexponential with parameters $(\nu, \alpha) = (2 \sqrt{p}, 4)$. Proposition 2.9  in \citet{Wainwright_2019} gives
\begin{align}
\label{eq:33}
\int_{p}^{\infty} \PP \left( \chi^{2}_{p} - p > u  \right) du \leq \int_{p}^{\infty} e^{- \frac{u}{8}} du = 8 e^{-p/8}.
\end{align}
Substituting (\ref{eq:33}) in (\ref{eq:31}) gives
\begin{align}
\label{eq:131}
\EE_{\tilde \epsilon \mid R, S, Z} \underset{\tilde \Gamma \in \tilde \Gcal}{\max} \; \tilde \epsilon_{2}^{\T} A(\tilde \Gamma)^{\T} A(\tilde \Gamma) \tilde \epsilon_{2}
&\leq 2 c_*^{2} \tau_{0}^{2} p  + 8 c_*^{2} \tau_{0}^{2}  \lceil 2 b k \log N \rceil^{k^2}  e^{-p/8} \nonumber \\
&= 2 c_*^{2} \tau_{0}^{2} p  + 8 c_*^{2} \tau_{0}^{2} e^{k^{2} \log \lceil 2 b k \log N \rceil -p/8 } \nonumber\\
&\leq 2 c_*^{2} \tau_{0}^{2} p  + 8 c_*^{2} \tau_{0}^{2} e^{k^{2} \log  \{(2 b + 1) k \log N\}  -p/8 } \nonumber\\
&= 2 c_*^{2} \tau_{0}^{2} p  + 8 c_*^{2} \tau_{0}^{2} e^{k^{2} \log  (2 b + 1) +  k^{2} \log k + k^{2} \log \log N  -p/8 } \nonumber\\
&\overset{(i)}{=} 2 c_*^{2} \tau_{0}^{2} p  + 8 c_*^{2} \tau_{0}^{2}e^{-p/8 \{1 + o(1) \}}, 
\end{align}
where $(i)$ follows if  $k^{2} \log k = o(p)$ and $k^{2} \log \log N = o(p)$.

We combine \eqref{eq:13}, \eqref{eq:28}, \eqref{eq:131} and use them in \eqref{eq:12} to obtain
\begin{align}
\label{eq:34}
&\EE (\tilde \epsilon_{2}^{\T} \tilde C_{22}^{-1} \tilde C_{12}^{\T} F^{{-1}} \tilde
D^{2} F^{-1} \tilde C_{12} \tilde C_{22}^{-1} \tilde \epsilon_{2}) \nonumber\\
&\leq \frac{\tau_0^2}{a^4}\EE_{X}\left\{ \kappa^4(X) \right\} \, \EE_{R, S, Z} \left[\frac{4b(N-p)}{\log N} (2\breve c + 3 b^2 \breve c^2 + b^4 \breve c^3)  + 2 c_*^{2} p  + 8c_*^{2} e^{-p/8 }\right] \nonumber\\
&= \frac{2\tau_0^2}{a^4}\EE_{X}\left\{ \kappa^4(X) \right\} \, \EE_{R, S, Z} \left[\frac{2b(N-p)}{\log N} (2\breve c + 3 b^2 \breve c^2 + b^4 \breve c^3)  +  (1 + b^4 \breve c^2 + 2 b^2 \breve c) (p  + 4 e^{-p/8}) \right] \nonumber\\
&= \frac{2\tau_0^2}{a^4}\EE_{X}\left\{ \kappa^4(X) \right\} \, \left\{ (p  + 4 e^{-p/8}) + a_{1} \EE_{R, S, Z}(\breve c) + a_2 \EE_{R, S, Z} (\breve c^2) + a_3 \EE_{R, S, Z} (\breve c^3) \right\}, \nonumber\\
a_1 &= \frac{4b(N-p)}{\log N} + 2 b^2 (p  + 4 e^{-p/8}), \quad a_2 = \frac{6b^3(N-p)}{\log N} + b^4(p  + 4 e^{-p/8}), \quad 
a_3 =  \frac{2b^5(N-p)}{\log N}. 
\end{align}
Finally, taking expectation on both sides of \eqref{eq:6}, and substituting \eqref{eq:16}, \eqref{eq:34}, 
and using \eqref{anrs-bd-2} for $r=4$ and $r=6$ in Lemma \ref{auxlemma7}
gives
\begin{align*}
\EE (T_2)
&\leq \frac{2\tau_{0}^{2}}{a^{4}} \; \EE_{X}\bigl\{\kappa^4(X)\bigr\} \; \Bigl[3p + 8e^{-p/8} + \frac{2^8  b\sigma_Z^2}{k^2}\Bigl\{\frac{2(N-p)}{\log N} + b (p  + 4 e^{-p/8})\Bigr\} (q+k+2)^2\,\{n(q+2)+N\} \\
&\quad + \frac{2^{19}b^3\sigma_Z^4}{k^4}\, \Bigl\{\frac{6(N-p)}{\log N}  + b(p + 4e^{-p/8}) \Bigr\}(q^2+k^2+2)^2\,\Bigl\{n(q^2+2)+\sum_{i=1}^n m_i^2\Bigr\} \\
&\quad + \frac{2^{32}b^5\sigma_Z^6}{k^6}\,\Bigl( \frac{N-p}{\log N} \Bigr)(q^3+k^3+3)^2\,\Bigl\{n(q^3+3)+\sum_{i=1}^n m_i^3\Bigr\}\Bigr] \nonumber \\
&\overset{(i)}{\leq} \frac{2\tau_{0}^{2}}{a^{4}} \; \EE_{X}\bigl\{\kappa^4(X)\bigr\} \; \Bigl[3p + 8e^{-p/8} + \frac{2^8  b\sigma_Z^2}{k^2}\Bigl\{\frac{2(N-p)}{\log N} + b (p  + 4 e^{-p/8})\Bigr\} (4q^2+4k^2+8)\,\{n(q+2)+N\} \\
&\quad + \frac{2^{19}b^3\sigma_Z^4}{k^4}\, \Bigl\{\frac{6(N-p)}{\log N}  + b(p + 4e^{-p/8}) \Bigr\}(4q^4+4k^4+8)\,\Bigl\{n(q^2+2)+n m_{\max}^2\Bigr\} \\
&\quad + \frac{2^{32}b^5\sigma_Z^6}{k^6}\,\Bigl( \frac{N-p}{\log N} \Bigr)(4q^6+4k^6+18)\,\Bigl\{n(q^3+3)+n m_{\max}^3\Bigr\}\Bigr], \\
&= \frac{2\tau_{0}^{2}}{a^{4}} \; \EE_{X}\bigl\{\kappa^4(X)\bigr\} \; \left\{ \left(\frac{N-p}{\log N}\right)O\left( \frac{nb^5\sigma_Z^6 (q^9 + m_{\max}^3)}{k^6}\right) + (p + 4e^{-p/8})O\left(\frac{nb^3\sigma_Z^4(q^6+m_{\max}^2)}{k^4}\right)\right\}
\end{align*}
where $(i)$ follows from the inequality $(u+v)^2 \leq 2(u^2 + v^2)$. The lemma is proved.
\end{proof}

\subsection{Upper Bound for $T_1$}\label{Section:ProofT1}
\begin{lemma}\label{lemt1}
Under the assumptions of Theorem 3.1 in the main manuscript,
\begin{align*}
\EE(T_1) \leq \EE_{X} \left\{
\kappa^4(X) \right\} \| \beta_{0}\|_2^{2} O\left(\frac{nb^8\sigma_Z^8 q^{12}}{k^8} + \frac{nb^8\sigma_Z^8 m_{\max}^4}{k^8} + nb^8\sigma_Z^8 \right) 
\end{align*}
where $\kappa(X)$ is the condition number of $X$ and $m_{\max} = \max\{m_1, \dots, m_n\}$.
\end{lemma}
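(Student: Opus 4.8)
The plan is to run the same argument as in Lemma~\ref{lemt2}, with the deterministic signal $X\beta_0$ in the role played there by the random error $\epsilon_0$. The first step reduces $T_1=\|C(X\delta^2\Lambda X^\T+C)^{-1}X\beta_0\|_2^2$ to a quadratic form in $w:=DV^\T\beta_0$. Using the singular value decomposition of $X$ and the block spectral decomposition of $X\delta^2\Lambda X^\T$ from \eqref{eq:1}, and the fact that $C(X\delta^2\Lambda X^\T+C)^{-1}X\beta_0\in\ran(X)$, so that in the rotated frame $X\beta_0$ has first block $w=DV^\T\beta_0$ and vanishing second block, the block-inversion identity \eqref{eq:blockmatinv} gives, after the lower block is premultiplied by $C$ and cancels the $\tilde C_{22}$ contribution,
\[
 T_1=\big\|\,G\,(\tilde D+G)^{-1}w\,\big\|_2^2,\qquad G:=\tilde C_{11}-\tilde C_{12}\tilde C_{22}^{-1}\tilde C_{12}^\T,
\]
the Schur complement of $\tilde C_{22}$ in $\tilde C=U^\T C U$ (equivalently $T_1=\|(I-\tilde D(\tilde D+G)^{-1})w\|_2^2$).

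The second step separates the two scales in $G(\tilde D+G)^{-1}$. Writing $C=I_N+B$ with $B=\diag(Z_iS^\T\Gamma RR^\T\Gamma^\T SZ_i^\T)\succeq 0$, the (generalized) Schur-complement monotonicity already used around \eqref{eq:9} yields $I\preceq G$ and $\|G-I\|\le\|B\|+\|B\|^2$, hence $\lambda_{\min}(G)\ge 1$ and $\|G\|\le 1+\|B\|+\|B\|^2$. The factor $\tilde D=DV^\T\delta^2\Lambda VD$ then enters only through the ratio $\lambda_{\max}^2(\tilde D)/\lambda_{\min}^2(\tilde D)\le\kappa^4(X)/a^4$, exactly as in \eqref{eq:23}--\eqref{eq:24}, so the scalar $\delta^2$ and the local shrinkage $\Lambda$ drop out; bounding the remaining $\tilde C$-blocks by powers of $\|C\|=1+\|B\|$ via \eqref{eq:bound_A_Gamma} and Lemma~\ref{auxlem5} (with the universal constant $\breve c$ there) should produce a pointwise estimate of the form $T_1\le a^{-4}\kappa^4(X)\|\beta_0\|_2^2\,P(\|B\|)$ for a fixed polynomial $P$ whose dominant term is of order $\|B\|^4$.

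The third step discharges the expectations in \eqref{eq:predrisk_defn}. Since $P(\|B\|)$ depends on $\phi=\{\Gamma,\Lambda,\delta\}$ only through $\Gamma$, whose prior is supported on $\{\|\Gamma\|\le b\}$ by Assumption~(A3), I would remove $\EE_{\phi\mid y}$ by passing to a supremum over $\|\Gamma\|\le b$ and using $\EE_{\phi\mid y}(1)=1$, or --- for a clean control of the supremum --- via the $1/\log N$-net $\tilde\Gcal$ of Lemma~\ref{auxlem2} together with the Lipschitz bound of Lemma~\ref{auxlem5}, as in \eqref{eq:lipsch1}--\eqref{eq:28}. The factor $\|\beta_0\|_2^2$ is inert under $\EE_{R,S,Z,X}$; under Assumption~(A4) one has $\EE_X\{\kappa^4(X)\}=1+o(1)$ by \citep{KSS:2025}; and the moments of the powers of $\|B\|$ are supplied by Lemma~\ref{auxlemma7} (via \eqref{anrs-bd-2} with $r=8$), which after summing over the $n$ diagonal blocks of $B$ produces exactly $O\!\big(nb^8\sigma_Z^8 q^{12}/k^8+nb^8\sigma_Z^8 m_{\max}^4/k^8+nb^8\sigma_Z^8\big)$, completing the bound.

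The step I expect to be the main obstacle is the block-matrix algebra in the first two steps: arranging the simplification so that $T_1$ is expressed purely through the Schur complement $G\succeq I$ and the $\delta$-free ratio $\kappa^4(X)/a^4$, so that neither $\delta^2$ nor $s_{\max}(X)$ survives outside of $\kappa(X)$, and then tracking the exponents of $b,\sigma_Z,q,m_{\max},k$ through the operator-norm moment bounds of Lemma~\ref{auxlemma7} carefully enough that the final rate matches the statement. This is a slightly heavier version of the corresponding bookkeeping in Lemma~\ref{lemt2}, since here both the extra $\|\beta_0\|_2^2$ and the fourth power of $\|B\|$ must emerge from the single reduction above.
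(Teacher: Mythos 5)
Your proposal is correct and follows essentially the same route as the paper: SVD of $X$ plus the block-inversion identity, positivity of the Schur complement to control $(\tilde D+G)^{-1}$ via $\lambda_{\min}(\tilde D)$ and hence $\kappa^4(X)/a^4$, a uniform bound over $\{\|\Gamma\|\le b\}$ to dispose of $\EE_{\phi\mid y}$, and the $r=8$ moment bounds of Lemma~\ref{auxlemma7} for the $\breve c^4$-order term. Your only real deviation is keeping $C$ inside and observing that the second block of $\tilde C F^{-1}[w;0]$ cancels exactly, giving $T_1=\|G(\tilde D+G)^{-1}w\|_2^2$ — a slightly cleaner (and in fact tighter, since $\|G\|\le\|C\|$) version of the paper's two-step bound $T_1\le\|C\|^2\,\|(X\delta^2\Lambda X^\T+C)^{-1}X\beta_0\|_2^2$, which still lands on the same stated rate.
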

\begin{proof}
The term $T_{1}$ satisfies
\begin{align}
\label{eq:17}
T_1 \leq \| C \|^2 \, \| (X \delta^2 \Lambda X^\T + C)^{-1} X  \beta_{0}\|_2^2 \leq c_*^2 \, \| (X \delta^2 \Lambda X^\T + C)^{-1} X  \beta_{0}\|_2^2 ,
\end{align}
where $c_* = 1 + b^2 \breve{c}$ is the upper bound of $\|C\|$ in \eqref{cov-mat-bd}. Using \eqref{eq:1}, \eqref{eq:2} and \eqref{eq:blockmatinv}, we get
\begin{align}
\label{eq:18}
 (X \delta^2 \Lambda X^\T + C)^{-1} X  \beta_{0} &= U
\begin{bmatrix}
  F^{-1} &  - F^{{-1}} \tilde C_{12} \tilde C_{22}^{-1} \\
  - \tilde C_{22}^{{-1}} \tilde C_{12}^{\T} F^{-1}  & \tilde C_{22}^{-1} + \tilde
  C_{22}^{-1} C_{12}^{\T} F^{{-1}} \tilde C_{12} \tilde C_{22}^{-1}
\end{bmatrix}
U^{\T} U
  \begin{bmatrix}
  D \\
  0_{N-p \times p}
  \end{bmatrix}
  V^\T \beta_{0} \nonumber\\
  &= U
\begin{bmatrix}
  F^{-1} D V^{\T} \beta_{0}\\
  - \tilde C_{22}^{{-1}} \tilde C_{12}^{\T} F^{-1}  D V^{\T} \beta_{0}
\end{bmatrix}.
\end{align}
This further implies that the second term on the right-hand side of \eqref{eq:17} satisfies
\begin{align}
\label{eq:19}
 \| (X \delta^2 \Lambda X^\T + C)^{-1} X  \beta_{0} \|_2^{2} &= \|  F^{-1} D V^{\T}
 \beta_{0} \|_2^{2} + \| \tilde C_{22}^{{-1}} \tilde C_{12}^{\T} F^{-1}  D V^{\T}
 \beta_{0}\|_2^{2} \nonumber\\
 &\leq \beta_{0}^{\T} V D F^{-2} D
V^{\T} \beta_{0} + \| \tilde C_{22}^{{-1}} \tilde C_{12}^{\T}\|^2 \|F^{-1}  D V^{\T}
 \beta_{0}\|_2^{2} \nonumber\\
&= \beta_{0}^{\T} V D F^{-2} D
V^{\T} \beta_{0} + \|A(\Gamma)^{\T}\|^2\beta_{0}^{\T} V D F^{-2} D
V^{\T} \beta_{0} \nonumber\\
&\leq \{1 + (1 + b^2 \breve{c})^2\} \beta_{0}^{\T} V D F^{-2} D V^{\T} \beta_{0},
\end{align}
which follows from \eqref{eq:bound_A_Gamma} in Lemma \ref{auxlem5}.

Following the same arguments to derive \eqref{eq:16}, we obtain that
\begin{align}
\label{eq:20}
\beta_{0}^{\T} V D F^{-2} D V^{\T} \beta_{0} &\leq
\frac{1}{\lambda_{\min}^{2}(\tilde D)} \beta_{0}^{\T} V D^{2} V^{\T} \beta_{0} \nonumber \\ &\leq \frac{\lambda_{\max}^{2}(
  D)}{\lambda_{\min}^{2}(\tilde D)} \, \beta_{0}^{\T} V V^{\T} \beta_{0} \nonumber \\
  &=  \frac{\lambda_{\max}^{2}(
  D)}{\lambda_{\min}^{2}(\tilde D)} \,\| \beta_{0}\|_2^{2} \nonumber \\
  & = \kappa^4(X) \| \beta_{0}\|_2^{2}.
\end{align}
Substituting \eqref{eq:19} and \eqref{eq:20} in \eqref{eq:17} gives
\begin{align}\label{bound_T1}
    T_1 &\leq (1 + b^2 \breve{c})^2 \{1 + (1 + b^2 \breve{c})^2\}\kappa^4(X) \| \beta_{0}\|_2^{2} \nonumber \\
    &= \{(1 + b^2 \breve{c})^2 + (1 + b^2 \breve{c})^4 \} \kappa^4(X) \| \beta_{0}\|_2^{2} \nonumber \\
    &\leq \{2(1 + b^4 \breve{c}^2) + 2^3(1 + b^8 \breve{c}^4)\}\kappa^4(X) \| \beta_{0}\|_2^{2},
\end{align}
which follows from the inequality $(u+v)^r \leq 2^{r-1}(u^r + v^r)$. Taking expectations on both sides of \eqref{bound_T1} implies that
\begin{align}
\label{eq:21}
\EE(T_{1}) &\leq \EE_{R,S,Z}\left\{2(1 + b^4 \breve{c}^2) + 2^3(1 + b^8 \breve{c}^4)\right\} \EE_{X} \left\{
\kappa^4(X) \right\} \| \beta_{0}\|_2^{2} \nonumber \\
&\overset{(i)}{\leq} \Bigl[10 + \frac{2^{19}b^4 \sigma_Z^4}{k^4}(q^2 + k^2 + 2)^2\Bigl\{n(q^2+2)+\sum_{i=1}^n m_i^2\Bigr\} \nonumber \\
&\quad + \frac{2^{45}b^8\sigma_Z^8}{k^8}(q^4 + k^4 + 6)^2\Bigl\{n(q^4+6)+\sum_{i=1}^n m_i^4\Bigr\}\Bigr]\EE_{X} \left\{
\kappa^4(X) \right\} \| \beta_{0}\|_2^{2} \nonumber \\
&\overset{(ii)}{\leq} \Bigl[10 + \frac{2^{19}b^4 \sigma_Z^4}{k^4}(4q^4 + 4k^4 + 8)\Bigl\{n(q^2+2)+\sum_{i=1}^n m_i^2\Bigr\} \nonumber \\
&\quad + \frac{2^{45}b^8\sigma_Z^8}{k^8}(4q^8 + 4k^8 + 72)\Bigl\{n(q^4+6)+\sum_{i=1}^n m_i^4\Bigr\}\Bigr]\EE_{X} \left\{
\kappa^4(X) \right\} \| \beta_{0}\|_2^{2} \\
&\leq \Bigl[10 + \frac{2^{19}b^4 \sigma_Z^4}{k^4}(4q^4 + 4k^4 + 8)\Bigl\{n(q^2+2)+n m_{\max}^2\Bigr\} \nonumber \\
&\quad + \frac{2^{45}b^8\sigma_Z^8}{k^8}(4q^8 + 4k^8 + 72)\Bigl\{n(q^4+6)+n m_{\max}^4\Bigr\}\Bigr]\EE_{X} \left\{
\kappa^4(X) \right\} \| \beta_{0}\|_2^{2} \\
&= O\left(\frac{nb^8\sigma_Z^8 q^{12}}{k^8} + \frac{nb^8\sigma_Z^8 m_{\max}^4}{k^8} + nb^8\sigma_Z^8 \right) \EE_{X} \left\{
\kappa^4(X) \right\} \| \beta_{0}\|_2^{2} ,
\end{align}
where $(i)$ holds by putting $r = 4$ and $r = 8$ in \eqref{anrs-bd-2} in Lemma \ref{auxlemma7} and $(ii)$ follows from the inequality $(u+v)^2 \leq 2(u^2 + v^2)$. The lemma is proved.
\end{proof}

\subsection{Auxiliary Results}\label{Section:AuxLemmas}

\begin{lemma}\label{auxlem1}
The two terms in the upperbound for the prediction risk in (\ref{eq:mse}) satisfy
\begin{align}
  \left[ X - X \{  X^{\T} C^{-1} X + (\delta^{2} \Lambda)^{-1}\}^{-1} X^{\T} C^{-1} X \right]
  \beta_0  &= C (X \delta^2 \Lambda X^\T + C)^{-1} X \beta_0,\nonumber\\
  X \{  X^{\T} C^{-1} X + (\delta^{2} \Lambda)^{-1}\}^{-1} X^{\T} C^{-1} \epsilon_0 &= X\delta^2 \Lambda X^\T (X \delta^2 \Lambda X^\T + C)^{-1} \epsilon_0. 
\end{align}
\end{lemma}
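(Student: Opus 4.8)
The plan is to recognize both displayed equations as algebraic consequences of the Woodbury matrix identity (equivalently, the ``push-through'' identity), so that no analysis is involved --- only bookkeeping of invertibility and routine rearrangement. Write $M = \delta^2 \Lambda$, a $p \times p$ symmetric positive definite matrix since $\delta^2 > 0$ and $\Lambda = \diag(\lambda_1^2, \dots, \lambda_p^2)$ has strictly positive diagonal entries, and recall $C = \diag(C_1, \dots, C_n)$ with $C_i = Z_i S^\T \Gamma R R^\T \Gamma^\T S Z_i^\T + I_{m_i} \succeq I_{m_i}$, so that $C$ is positive definite. Consequently $X^\T C^{-1} X + M^{-1}$ and $X M X^\T + C$ are both positive definite, and every inverse occurring in the statement is well defined.

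For the first identity, I would apply the Woodbury identity in the form
\begin{align*}
  (C + X M X^\T)^{-1} = C^{-1} - C^{-1} X (M^{-1} + X^\T C^{-1} X)^{-1} X^\T C^{-1},
\end{align*}
then left-multiply by $C$, right-multiply by $X$, and use $C \cdot C^{-1} = I$ to obtain
\begin{align*}
  C (X M X^\T + C)^{-1} X = X - X (X^\T C^{-1} X + M^{-1})^{-1} X^\T C^{-1} X,
\end{align*}
which is precisely the claimed matrix identity after substituting $M = \delta^2 \Lambda$; applying both sides to $\beta_0$ finishes this part.

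For the second identity, I would first record the push-through identity
\begin{align*}
  (X^\T C^{-1} X + M^{-1})^{-1} X^\T C^{-1} = M X^\T (X M X^\T + C)^{-1},
\end{align*}
verified by cross-multiplying: $(X^\T C^{-1} X + M^{-1}) M X^\T = X^\T C^{-1} X M X^\T + X^\T$ and $X^\T C^{-1}(X M X^\T + C) = X^\T C^{-1} X M X^\T + X^\T$ coincide. Left-multiplying this identity by $X$ and then evaluating at $\epsilon_0$ gives exactly $X (X^\T C^{-1} X + M^{-1})^{-1} X^\T C^{-1} \epsilon_0 = X M X^\T (X M X^\T + C)^{-1} \epsilon_0$, i.e., the stated equation with $M = \delta^2 \Lambda$.

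The computation carries no real obstacle: it is pure linear algebra. The only point deserving a word of care is checking that all the relevant matrices ($C$, $\delta^2\Lambda$, $X^\T C^{-1} X + (\delta^2\Lambda)^{-1}$, and $X \delta^2\Lambda X^\T + C$) are invertible, which, as noted, is immediate from positive definiteness; everything else is the rearrangement that the Woodbury and push-through identities package cleanly.
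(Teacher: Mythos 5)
Your proof is correct and follows essentially the same route as the paper: both arguments reduce each identity to the Woodbury matrix identity (the paper expands the inner inverse $\{X^{\T}C^{-1}X + (\delta^2\Lambda)^{-1}\}^{-1}$ and simplifies, while you apply Woodbury to the outer matrix and invoke the push-through identity, which are equivalent rearrangements of the same fact). Your explicit attention to the invertibility of $C$, $\delta^2\Lambda$, and the two sum matrices is a welcome, if routine, addition.
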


\begin{proof}
Apply the Woodbury matrix identity \citep{Har97} to the first term in \eqref{auxlem1} and obtain
\begin{align}\label{eq:first-red}
  [ X - X &\{  X^{\T} C^{-1} X + (\delta^{2} \Lambda)^{-1}\}^{-1} X^{\T} C^{-1} X ]
  \beta_0 \nonumber\\
  &=   [C - X \{  X^{\T} C^{-1} X + (\delta^{2} \Lambda)^{-1}\}^{-1} X^{\T} ] C^{-1} X \beta_0 \nonumber \\
  &= [C -  X \{ \delta^2 \Lambda - \delta^2 \Lambda X^\T (X \delta^2 \Lambda X^\T + C)^{-1} X \delta^2 \Lambda  \} X^{\T}] C^{-1} X \beta_0 \nonumber\\
  &= [C -  X \delta^2 \Lambda X^\T (X \delta^2 \Lambda X^\T + C)^{-1} \{ (X \delta^2 \Lambda X^\T + C) -  X \delta^2 \Lambda  X^{\T}\} ] C^{-1} X \beta_0 \nonumber\\
  &= \{ I -  X \delta^2 \Lambda X^\T (X \delta^2 \Lambda X^\T + C)^{-1}  \} X \beta_0 = C (X \delta^2 \Lambda X^\T + C)^{-1} X \beta_0 .
\end{align}

Following \eqref{eq:first-red}, apply the Woodbury matrix identity again to the
second term in \eqref{auxlem1} and obtain
\begin{align} \label{eq:sec-term-matinv}
  X \{  X^{\T} &C^{-1} X + (\delta^{2} \Lambda)^{-1}\}^{-1} X^{\T} C^{-1} \epsilon_0
   \nonumber\\
   &=
    X \{ \delta^2 \Lambda - \delta^2 \Lambda X^\T (X \delta^2 \Lambda X^\T + C)^{-1} X \delta^2 \Lambda  \} X^{\T} C^{-1} \epsilon_0 \nonumber \\
   &= \{ X\delta^2 \Lambda X^\T - X \delta^2 \Lambda X^\T (X \delta^2 \Lambda X^\T + C)^{-1} X \delta^2 \Lambda X^\T  \} C^{-1} \epsilon_0 \nonumber \\
   &=  X\delta^2 \Lambda X^\T (X \delta^2 \Lambda X^\T + C)^{-1}  \epsilon_{0}.
\end{align}
This completes the proof of the lemma.
\end{proof}

\begin{lemma}\label{auxlem2}
Let $\Gcal = \{\Gamma \in \RR^{k \times k}: \| \Gamma \| \leq b \}$ be a set of matrices. Then, there is a $1/(\log N)$-cover for the metric space $(\Gcal, \| \cdot \|)$ with cardinality $\lceil 2 b k \log N \rceil^{k^2} $.
\end{lemma}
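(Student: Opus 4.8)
The plan is to build the cover explicitly as a Cartesian grid on the $k^2$ entries of $\Gamma$, exploiting the elementary inequality $\|A\| \le \|A\|_F$ relating the operator and Frobenius norms; the $\sqrt{k^2}=k$ factor coming from summing entrywise errors through the Frobenius norm is exactly what produces the $k$ inside the cardinality bound.

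First I would note that $\Gcal$ sits inside a cube: for $\Gamma \in \Gcal$ and any $i,j$, $|\Gamma_{ij}| = |e_i^{\T}\Gamma e_j| \le \|\Gamma\| \le b$, so $\Gcal \subseteq [-b,b]^{k\times k}$. Fix the mesh $\eta = 1/(k\log N)$ and split $[-b,b]$ into $M := \lceil 2b/\eta \rceil = \lceil 2bk\log N\rceil$ consecutive subintervals, each of length $2b/M \le \eta$; pick one representative point (say the midpoint) in each. Let $\tilde\Gcal$ be the collection of all $k\times k$ matrices whose entries are such representatives, so that $|\tilde\Gcal| = M^{k^2} = \lceil 2bk\log N\rceil^{k^2}$, the claimed cardinality.

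Next, given $\Gamma \in \Gcal$, let $\tilde\Gamma \in \tilde\Gcal$ be obtained by replacing each entry $\Gamma_{ij}$ by the representative of the subinterval containing it, so that $|\Gamma_{ij} - \tilde\Gamma_{ij}| \le \eta$ for all $i,j$. Then
\begin{align*}
  \|\Gamma - \tilde\Gamma\| \le \|\Gamma - \tilde\Gamma\|_F = \Bigl(\sum_{i,j=1}^{k}(\Gamma_{ij}-\tilde\Gamma_{ij})^2\Bigr)^{1/2} \le \bigl(k^2\eta^2\bigr)^{1/2} = k\eta = \frac{1}{\log N},
\end{align*}
so every point of $\Gcal$ lies within operator-norm distance $1/(\log N)$ of $\tilde\Gcal$, which is the assertion.

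There is no real obstacle here; the only things to watch are bookkeeping. The mesh must be taken as $\eta = 1/(k\log N)$ precisely so that the accumulated entrywise error, passed through the Frobenius norm, collapses the factor $k$ and leaves exactly $1/\log N$, while the per-coordinate grid count is $\lceil 2b/\eta\rceil = \lceil 2bk\log N\rceil$. A minor point is that the grid matrices in $\tilde\Gcal$ need not themselves lie in $\Gcal$ (a matrix with all entries in $[-b,b]$ can have operator norm exceeding $b$); this is harmless, since a \emph{cover} only requires its points to be close to $\Gcal$, and moreover each $\tilde\Gamma$ satisfies $\|\tilde\Gamma\| \le \|\Gamma\| + \|\Gamma-\tilde\Gamma\| \le b + 1/\log N = O(1)$, which is all that the downstream use of $\tilde\Gcal$ in the proof of Lemma~\ref{lemt2} actually needs.
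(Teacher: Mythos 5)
Your proof is correct and follows essentially the same route as the paper: an entrywise uniform grid on $[-b,b]^{k\times k}$ with mesh $1/(k\log N)$, giving $\lceil 2bk\log N\rceil^{k^2}$ grid matrices, followed by a one-line passage from entrywise error to operator-norm error. The only (cosmetic) difference is that you use $\|A\|\le\|A\|_F$ where the paper uses $\|A\|\le(\|A\|_1\|A\|_\infty)^{1/2}$; both yield exactly $1/\log N$, and your closing remark that the grid points need not lie in $\Gcal$ (but have norm at most $b+1/\log N$) is a fair observation that the paper leaves implicit.
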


\begin{proof}
  Consider the approximation of $[-b, b]$ by a uniform grid with $\lceil 2b k \log
  N \rceil$ partitions. Every partition in the grid has length less than or equal to $1 / (k \log N)$. If the center of the $l$th partition is $g_l$, then we can form at most $\lceil 2bk \log N \rceil^{k^2}$ $k \times k$  matrices using $\{g_l: l = 1, \ldots, \lceil 2b k \log
  N \rceil\}$. Denote the set formed using these matrices as $\tilde \Gcal = \{\tilde \Gamma_l \in \RR^{k \times k}: l = 1, \ldots, \lceil 2b k \log
  N \rceil^{k^2} \}$.

  We prove the lemma by showing that $\tilde \Gcal$ is a $1/(\log N)$-net for the metric space $(\Gcal, \| \cdot \|)$. For any $\Gamma \in \Gcal$ and $i, j = 1, \ldots, k$,  $ \Gamma_{ij} \in [-b, b]$ because $\| \Gamma \| \leq b$. The construction of $\tilde \Gcal$ implies that there exists a $\tilde \Gamma' \in \tilde \Gcal$ such that  $\mid \Gamma_{ij} -  \tilde \Gamma'_{{ij}} \mid \leq 1 / (k \log N)$ for $i,j = 1, \ldots, k$. Furthermore, summing this inequality over rows and columns gives
  \begin{align}
  \label{eq:25}
  \sum_{j=1}^{k} \mid \Gamma_{ij} - \tilde \Gamma'_{{ij}} \mid \leq \frac{1}{\log
  N}, \quad
  \sum_{i=1}^{k} \mid \Gamma_{ij} - \tilde \Gamma'_{{ij}} \mid \leq \frac{1}{\log
  N}.
  \end{align}
  Using \eqref{eq:25}, in the definitions of $\| \cdot \|_{\infty}$, $\| \cdot \|_1$, and $\| \cdot \|$ norms for matrices implies that
  \begin{align}
  \| \Gamma - \tilde  \Gamma' \|_{\infty} &= \underset{1 \leq i \leq k}{\max}
  \sum_{j=1}^{k} \mid \Gamma_{ij} - \tilde \Gamma'_{{ij}} \mid \leq \frac{1}{\log
  N}, \quad
  \| \Gamma - \tilde \Gamma' \|_{1} = \underset{1 \leq j \leq k}{\max}
  \sum_{i=1}^{k} \mid \Gamma_{ij} - \tilde \Gamma'_{{ij}} \mid \leq \frac{1}{\log
  N}, \nonumber\\
  \|\Gamma - \tilde \Gamma' \| &\leq \left( \| \Gamma - \tilde \Gamma' \|_{1} \| \Gamma - \tilde
  \Gamma' \|_{\infty}\right)^{1/2} \leq \frac{1}{\log N},
  \end{align}
  where the last inequality proves that $\tilde \Gcal$ is a $1 / (\log N)$-cover for the metric space $(\Gcal, \| \cdot \|)$. The lemma is proved.
\end{proof}

\begin{lemma} \label{auxlem5}
Let $\Gcal = \{ \Gamma \in \RR^{k \times k}: \| \Gamma \| \leq b\}$ be the set used in in Lemma \ref{auxlem2} with $\tilde \Gcal$ as its $1/ \log N$-net and $A(\Gamma) = \tilde C_{12}(\Gamma) \tilde C_{22} (\Gamma)^{-1}$ be the matrix that appears in \eqref{eq:13}, where the notation emphasizes the dependence on $\Gamma$. Then, for any $\Gamma \in \Gcal$ and $\tilde \Gamma \in \tilde \Gcal$,
 \begin{align}\label{eq:auxeq1}
     \| A(\Gamma)^{\T} A(\Gamma) - A(\tilde \Gamma)^{\T} A(\tilde \Gamma) \| 
  &\leq 4b (2\breve c +3  b^2 \breve c^2 + b^4 \breve c^3)\, \| \Gamma - \tilde \Gamma\|, \nonumber \\
  \breve c &= \| S \|^2 \, \| R \|^2 \max (\| Z_1 \|^2, \ldots, \|Z_n\|^2 ).
 \end{align}   

\end{lemma}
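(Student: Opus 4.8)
The plan is to obtain \eqref{eq:auxeq1} by chaining three elementary perturbation estimates---a Lipschitz bound for the map $\Gamma \mapsto C(\Gamma)$, the resolvent identity for $\tilde C_{22}(\Gamma)^{-1}$, and the product rule for $A^{\T}A$---each controlled by a handful of uniform operator-norm bounds. First I would record those uniform bounds. Since $C_i(\Gamma) - I_{m_i} = (Z_i S^{\T}\Gamma R)(Z_i S^{\T}\Gamma R)^{\T}$ is positive semidefinite and $\|C_i(\Gamma) - I_{m_i}\| = \|Z_i S^{\T}\Gamma R\|^2 \le \|Z_i\|^2 \|S\|^2 \|\Gamma\|^2 \|R\|^2 \le b^2 \breve c$ whenever $\|\Gamma\| \le b$, it follows that $I_N \preceq C(\Gamma) \preceq c_* I_N$ with $c_* = 1 + b^2 \breve c$, and the same sandwich holds for $\tilde C(\Gamma) = U^{\T} C(\Gamma) U$ by unitary invariance, hence for each of its principal blocks. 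In particular $\tilde C_{22}(\Gamma) \succeq I_{N-p}$ gives $\|\tilde C_{22}(\Gamma)^{-1}\| \le 1$, and $\|\tilde C_{12}(\Gamma)\| \le \|\tilde C(\Gamma)\| \le c_*$, so that $\|A(\Gamma)\| = \|\tilde C_{12}(\Gamma)\tilde C_{22}(\Gamma)^{-1}\| \le c_*$; this is the bound \eqref{eq:bound_A_Gamma} invoked in Lemmas \ref{lemt1} and \ref{lemt2}, and all estimates use only $\|\Gamma\| \le b$, so the same bounds hold at the net point $\tilde\Gamma$ (taking, as one may, the net $\tilde\Gcal$ of Lemma \ref{auxlem2} to lie inside $\Gcal$).

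Next I would establish the Lipschitz bound for $C$. Writing $\Gamma R R^{\T}\Gamma^{\T} - \tilde\Gamma R R^{\T}\tilde\Gamma^{\T} = (\Gamma - \tilde\Gamma)R R^{\T}\Gamma^{\T} + \tilde\Gamma R R^{\T}(\Gamma - \tilde\Gamma)^{\T}$ and bounding each summand by $b\|R\|^2\|\Gamma - \tilde\Gamma\|$, then multiplying on the left by $Z_i S^{\T}$ and on the right by its transpose, gives $\|C_i(\Gamma) - C_i(\tilde\Gamma)\| \le 2b\|Z_i\|^2\|S\|^2\|R\|^2\|\Gamma - \tilde\Gamma\| \le 2b\breve c\,\|\Gamma - \tilde\Gamma\|$. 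Because $C(\Gamma) = \diag(C_1(\Gamma), \ldots, C_n(\Gamma))$ and $\tilde C(\Gamma) = U^{\T} C(\Gamma) U$, the identical bound holds for $\|\tilde C(\Gamma) - \tilde C(\tilde\Gamma)\|$, hence for each block $\|\tilde C_{12}(\Gamma) - \tilde C_{12}(\tilde\Gamma)\|$ and $\|\tilde C_{22}(\Gamma) - \tilde C_{22}(\tilde\Gamma)\|$.

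Then I would assemble the chain. The resolvent identity $\tilde C_{22}(\Gamma)^{-1} - \tilde C_{22}(\tilde\Gamma)^{-1} = \tilde C_{22}(\Gamma)^{-1}\{\tilde C_{22}(\tilde\Gamma) - \tilde C_{22}(\Gamma)\}\tilde C_{22}(\tilde\Gamma)^{-1}$ together with $\|\tilde C_{22}(\cdot)^{-1}\| \le 1$ yields $\|\tilde C_{22}(\Gamma)^{-1} - \tilde C_{22}(\tilde\Gamma)^{-1}\| \le 2b\breve c\,\|\Gamma - \tilde\Gamma\|$. Splitting $A(\Gamma) - A(\tilde\Gamma) = \{\tilde C_{12}(\Gamma) - \tilde C_{12}(\tilde\Gamma)\}\tilde C_{22}(\Gamma)^{-1} + \tilde C_{12}(\tilde\Gamma)\{\tilde C_{22}(\Gamma)^{-1} - \tilde C_{22}(\tilde\Gamma)^{-1}\}$ and using the bounds just obtained (with $\|\tilde C_{12}(\tilde\Gamma)\| \le c_*$) gives $\|A(\Gamma) - A(\tilde\Gamma)\| \le 2b\breve c\,(1 + c_*)\|\Gamma - \tilde\Gamma\|$. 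Finally, the product-rule identity $A(\Gamma)^{\T}A(\Gamma) - A(\tilde\Gamma)^{\T}A(\tilde\Gamma) = A(\Gamma)^{\T}\{A(\Gamma) - A(\tilde\Gamma)\} + \{A(\Gamma) - A(\tilde\Gamma)\}^{\T}A(\tilde\Gamma)$ combined with $\|A(\Gamma)\|, \|A(\tilde\Gamma)\| \le c_*$ gives $\|A(\Gamma)^{\T}A(\Gamma) - A(\tilde\Gamma)^{\T}A(\tilde\Gamma)\| \le 4b\breve c\,c_*(1 + c_*)\|\Gamma - \tilde\Gamma\|$. Substituting $c_* = 1 + b^2\breve c$ and expanding $c_*(1 + c_*) = (1 + b^2\breve c)(2 + b^2\breve c) = 2 + 3b^2\breve c + b^4\breve c^2$ turns this into $4b(2\breve c + 3b^2\breve c^2 + b^4\breve c^3)\|\Gamma - \tilde\Gamma\|$, which is exactly \eqref{eq:auxeq1}.

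I do not expect a genuine obstacle: the argument is pure operator-norm bookkeeping. The only points requiring care are (i) that the principal submatrices of $\tilde C(\Gamma)$ inherit both the spectral sandwich ($I \preceq \tilde C_{22}(\Gamma) \preceq c_* I$, so its inverse has norm at most one) and the Lipschitz estimate from the full matrix, via unitary invariance of the operator norm and $\|M_{ij}\| \le \|M\|$ for blocks; and (ii) that the three perturbation steps compound with precisely the constants $2b\breve c$, $(1+c_*)$, and $2c_*$ so as to reproduce the stated cubic polynomial in $\breve c$ rather than something looser. One modelling remark worth flagging is that the estimate uses only $\|\Gamma\|, \|\tilde\Gamma\| \le b$, so it applies verbatim when $\tilde\Gcal$ is chosen inside $\Gcal$; if instead one keeps the grid net of Lemma \ref{auxlem2}, the relevant net point obeys $\|\tilde\Gamma\| \le b + 1/\log N$ and all constants above acquire a harmless $1 + o(1)$ factor.
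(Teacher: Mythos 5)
Your proposal is correct and follows essentially the same route as the paper's proof: the spectral sandwich $I \preceq C(\Gamma) \preceq (1+b^2\breve c)I$, the Lipschitz bound $\|C(\Gamma)-C(\tilde\Gamma)\| \leq 2b\breve c\,\|\Gamma-\tilde\Gamma\|$, the resolvent/splitting estimate for $A(\Gamma)-A(\tilde\Gamma)$, and the product rule for $A^{\T}A$, with the constants compounding to the identical polynomial $4b(2\breve c+3b^2\breve c^2+b^4\breve c^3)$. Your closing remark about ensuring $\|\tilde\Gamma\|\leq b$ (or accepting a $1+o(1)$ inflation) is a point the paper's grid construction in Lemma \ref{auxlem2} glosses over, and you handle it correctly.
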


\begin{proof}
Recall that 
$C_i(\Gamma) = Z_i S^{\T}
\Gamma R R^{\T} \Gamma^{\T} S Z_i^{\T} + I_{m_i}$. For any $\Gamma \in \Gcal$, 
\begin{align}\label{cov-mat-bd}
1 \overset{(i)}{\leq} \lambda_{\min}\{ C_i(\Gamma) \} &\leq \lambda_{\max}\{ C_i(\Gamma) \} \leq 1 + \|Z_i S^\T \Gamma R \|^2 \leq  1 + b^2 \| S \|^2 \| R\|^2 \|Z_i\|^2, \nonumber\\
1 \overset{(ii)}{\leq} \lambda_{\min}\{C(\Gamma)\} &\leq \lambda_{\max}\{C(\Gamma)\} \overset{(iii)}{\leq} 1 + b^2 \| S \|^2 \| R\|^2 \max(\|Z_1\|^2, \ldots, \| Z_n\|^2) \equiv  1 + b^2 \breve c, 
\end{align}
where $(i)$ follows because $C_i(\Gamma) = Z_i S^{\T}
\Gamma R  \left( Z_i S^{\T} \Gamma R \right)^{\T} + I_{m_i}$,  $(ii)$ follows because  $\lambda_{\min}\{C(\Gamma)\} = \min_i \lambda_{\min}\{ C_i(\Gamma) \}$, and $(iii)$ follows because $\lambda_{\max}\{C(\Gamma)\} = \max_i \lambda_{\max}\{ C_i(\Gamma) \}$. For any $\tilde \Gamma \in \tilde \Gcal$ and $\Gamma \in \Gcal$,
\begin{align}\label{lip-c}
   \| C_i(\Gamma) - C_i(\tilde \Gamma) \| &= \| Z_i S^{\T}
\Gamma R R^{\T} \Gamma^{\T} S Z_i^{\T} - Z_i S^{\T}
\tilde \Gamma R R^{\T} \tilde \Gamma^{\T} S Z_i^{\T} \| \nonumber\\
&= \| Z_i S^{\T}
(\Gamma - \tilde \Gamma) R R^{\T} \Gamma^{\T} S Z_i^{\T} + 
Z_i S^{\T} \tilde \Gamma R R^{\T}  (\Gamma^{\T} - \tilde \Gamma^{\T}) S Z_i^{\T} 
  \| \nonumber\\  
&\leq \| Z_i S^\T \| \|R R^{\T} \Gamma^{\T} S Z_i^{\T}\| \|\Gamma - \tilde \Gamma \| + \|Z_i S^{\T} \tilde \Gamma R R^{\T}\| \| S Z_i^{\T} \| \| \Gamma - \tilde \Gamma\| \nonumber\\
&\leq 2 b \, \| Z_i S^\T \| \, \| R R^\T \| \, \| S Z_i^{\T} \| \, \| \Gamma - \tilde \Gamma\| = 
2 b \, \| Z_i \|^2 \| S \|^2 \, \| R \|^2  \, \| \Gamma - \tilde \Gamma\|.
\end{align}
The first term in \eqref{eq:auxeq1} satisfies
\begin{align}\label{eq:lipsch2}
  \| A(\Gamma)^{\T} A(\Gamma) - A(\tilde \Gamma)^{\T} A(\tilde \Gamma) \|  &=  \| A(\Gamma)^{\T} A(\Gamma) - A(\tilde \Gamma)^{\T} A(\Gamma) + A(\tilde \Gamma)^{\T} A(\Gamma) - A(\tilde \Gamma)^{\T} A(\tilde \Gamma) \| \| \nonumber\\
&=  \| \{ A(\Gamma)^{\T}  - A(\tilde \Gamma)^{\T} \} A(\Gamma) + A(\tilde \Gamma)^{\T} \{ A(\Gamma) -  A(\tilde \Gamma)\}\|  \nonumber\\
&\leq \{ \| A(\Gamma) \| + \| A(\tilde \Gamma) \| \} \, \|A(\Gamma)  - A(\tilde \Gamma)\|.
\end{align}
For any $\Gamma \in \Gcal$, 
\begin{align}\label{eq:bound_A_Gamma}
    \| A(\Gamma)\| = \|\tilde C_{12}(\Gamma) \tilde C_{22}(\Gamma)^{-1}\| &\leq \|\tilde C_{12}(\Gamma)\| \|\tilde C_{22}(\Gamma)^{-1}\| = \|\tilde C_{12}(\Gamma)\| \frac{1}{\lambda_{\min} \{ \tilde C_{22}(\Gamma) \} } \nonumber \\
    &\overset{(i)}{\leq}  \frac{\|\tilde C(\Gamma)\|}{\lambda_{\min}\{\tilde C_{22}(\Gamma)\}} \overset{(ii)}{\leq}  \frac{\| C(\Gamma)\|}{\lambda_{\min}\{ C(\Gamma)\}} \nonumber\\
    &\overset{(iii)}{\leq} \| C(\Gamma)\| \overset{(iv)}{\leq}  1 + b^2 \breve c,
\end{align}
where $(i)$ follows from Cauchy's interlacing theorem, $(ii)$ follows from Cauchy's interlacing theorem and the fact that  $\|\tilde C\| = \|U^{\T}CU\| = \|C\|$ because $U$ is an orthonormal matrix, and $(iii)$ and $(iv)$ follow from \eqref{cov-mat-bd}.

Consider the last term in \eqref{eq:lipsch2},
\begin{align}\label{eq:lipsch3}
    \|A(\Gamma)  - A(\tilde \Gamma)\| &= \| A(\Gamma) - \tilde C_{12}(\tilde \Gamma) \tilde C_{22} (\Gamma)^{-1} + \tilde C_{12}(\tilde \Gamma) \tilde C_{22} (\Gamma)^{-1} - A(\tilde \Gamma)\| \nonumber \\
    &= \| \{ \tilde C_{12}(\Gamma) - \tilde C_{12}(\tilde \Gamma) \}\tilde C_{22} (\Gamma)^{-1} + \tilde C_{12}(\tilde \Gamma) \{\tilde C_{22} (\Gamma)^{-1} -\tilde C_{22} (\tilde \Gamma)^{-1} \}\| \nonumber \\
    &\leq \| \tilde C_{22} (\Gamma)^{-1}\| \|\tilde C_{12}(\Gamma) - \tilde C_{12}(\tilde \Gamma)\| + \|\tilde C_{12}(\tilde \Gamma) \| \|\tilde C_{22} (\Gamma)^{-1} -\tilde C_{22} (\tilde \Gamma)^{-1} \| \nonumber \\
    &= \| \tilde C_{22} (\Gamma)^{-1}\| \| \tilde C_{12}(\Gamma) - \tilde C_{12}(\tilde \Gamma)\| + \|\tilde C_{12}(\tilde \Gamma)\| \|\tilde C_{22} (\Gamma)^{-1} \{\tilde C_{22} (\tilde \Gamma) - \tilde C_{22} (\Gamma)\} \tilde C_{22} (\tilde \Gamma)^{-1}\| \nonumber \\
    &\overset{(i)}{\leq} \| \tilde C_{22} (\Gamma)^{-1}\| \|C(\Gamma) - C(\tilde \Gamma)\| + \|\tilde C_{12}(\tilde \Gamma)\| \|\tilde C_{22} (\Gamma)^{-1}\| \|\tilde C_{22} (\tilde \Gamma)^{-1}\|  \| C (\tilde \Gamma) -  C (\Gamma)\| \nonumber \\
     &=\lambda_{\min}^{-1} \{\tilde C_{22} (\Gamma)\} \, ( 1 + \|\tilde C_{12}(\tilde \Gamma)\| \|\tilde C_{22} (\tilde \Gamma)^{-1}\| )  \, \|C(\Gamma) - C(\tilde \Gamma)\|  \nonumber \\
   &\overset{(ii)}{\leq}  \lambda_{\min}^{-1} \{C(\Gamma)\} \, ( 1 + \| C(\Gamma)\| )  \, \|C(\Gamma) - C(\tilde \Gamma)\| \nonumber\\
   &\overset{(iii)}{\leq} (2 + b^2 \breve c) \|C(\Gamma) - C(\tilde \Gamma)\| \nonumber\\
   &\overset{(iv)}\leq  (2 + b^2 \breve c) \, 2 b \,  \| S \|^2 \, \| R \|^2  \, \max(\| Z_1 \|^2, \ldots, \| Z_n \|^2) \, \| \Gamma - \tilde \Gamma\| \nonumber\\
   &= (4b \breve c + 2b^3 \breve c^2) \, \| \Gamma - \tilde \Gamma\|,
\end{align}
where $(i)$ holds because $\tilde C_{12}(\Gamma) - \tilde C_{12}(\tilde \Gamma)$ and $\tilde C_{22}(\Gamma) - \tilde C_{22}(\tilde \Gamma)$ are $p \times (N-p)$ and $(N-p) \times (N-p)$ submatrices of $\tilde C(\Gamma) - \tilde C(\tilde \Gamma)$, $(ii)$ and $(iii)$ follow from the arguments used in \eqref{eq:bound_A_Gamma}, and $(iv)$ follows from taking maximum over the bound in \eqref{lip-c}. Combining \eqref{eq:lipsch1}, \eqref{eq:lipsch2}, 
\eqref{eq:lipsch3}, and \eqref{eq:bound_A_Gamma} implies that 
\begin{align}\label{eq:lipsch4}
  \| A(\Gamma)^{\T} A(\Gamma) - A(\tilde \Gamma)^{\T} A(\tilde \Gamma) \| &\leq 2 (1 + b^2 \breve c) (4 b \breve c + 2 b^3 \breve c^2)\, \| \Gamma - \tilde \Gamma\| \nonumber\\
  &= 
  4b (2\breve c +3  b^2 \breve c^2 + b^4 \breve c^3)\, \| \Gamma - \tilde \Gamma\| .
\end{align}
The lemma is proved.
\end{proof}

\begin{lemma}\label{auxlemma7}
Let $R$, $S$, $Z_1, \ldots, Z_n$ be the matrices used to define the CME model in equation (2) of the main manuscript. Then,
    \begin{align} \label{anrs-bd}
       &\EE(\| R \|_2^2)= \EE(\| S \|_2^2) \leq \frac{4}{k} (q + k + 2), \quad \EE(\| Z_i \|_2^2) \leq 4\sigma_Z^2 (q + m_i + 2), \quad i = 1, \ldots, n,   \nonumber\\   
       &\EE \max (\| Z_1 \|_2^2, \ldots, \| Z_n \|_2^2) \leq  4 \sigma_Z^2 \, \{n(q + 2) + N\}, \nonumber\\
       &\EE \left\{\| S \|_2^2 \, \| R \|_2^2 \, \max (\| Z_1 \|_2^2, \ldots, \| Z_n \|_2^2) \right\} \leq 64 \; \frac{\sigma_Z^2}{k^2} (q + k + 2)^2 \{n(q+2) + N\},
    \end{align}
    and for $r \geq 3$,
    \begin{align}\label{anrs-bd-2}
         &\EE \| S \|_2^r = \EE \| R \|_2^r \leq \frac{2^{2r - 2}}{k^{r/2}}  \{q^{r/2} + k^{r/2} + 2^{1-r/2}r\Gamma(r/2) \}, \nonumber\\
      &\EE \| Z_i \|_2^r \leq 2^{2r - 2}\sigma _Z^{r} \, \{q^{r/2} + m_i^{r/2} + 2^{1-r/2}r\Gamma(r/2) \},\nonumber\\
      &\EE \left\{\| S \|_2^r \, \| R \|_2^r \, \max (\| Z_1 \|_2^r, \ldots, \| Z_n \|_2^r) \right\} \nonumber \\
     &\leq 2^{6r-6}\frac{\sigma_Z^r}{k^r} \; \{q^{r/2} + k^{r/2} + 2^{1-r/2}r\Gamma(r/2) \}^2 \left[ nq^{r/2} + n2^{1-r/2}r\Gamma(r/2) + \sum_{i=1}^n m_i^{r/2} \right].
    \end{align}
\end{lemma}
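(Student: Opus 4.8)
The plan is to reduce all six displays to a single moment bound for the largest singular value of a standard Gaussian matrix, and then assemble the products using independence. First I would write $R = k_2^{-1/2}\widetilde R$, $S = k_1^{-1/2}\widetilde S$, $Z_i = \sigma_Z\widetilde Z_i$, where $\widetilde R \in \RR^{k_2\times q}$, $\widetilde S\in\RR^{k_1\times q}$, $\widetilde Z_i\in\RR^{m_i\times q}$ have i.i.d.\ $\Ncal(0,1)$ entries; then $\|R\|_2 = k_2^{-1/2}s_{\max}(\widetilde R)$ and likewise for $S$ and $Z_i$, so everything follows from a bound on $\EE\, s_{\max}^r(G)$ for a generic $d_1\times d_2$ standard Gaussian matrix $G$. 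For that I would set $a = \sqrt{d_1}+\sqrt{d_2}$ and invoke the Davidson--Szarek concentration bound for the largest singular value, $\PP\{s_{\max}(G) \geq a + t\} \leq e^{-t^2/2}$ for all $t\geq 0$.

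The key computation is then elementary: putting $Y = \{s_{\max}(G) - a\}_+$, the tail bound reads $\PP(Y>t)\leq e^{-t^2/2}$, so $\EE\, Y^r = \int_0^\infty r t^{r-1}\PP(Y>t)\,dt \leq \int_0^\infty r t^{r-1}e^{-t^2/2}\,dt = r\,2^{r/2-1}\Gamma(r/2)$ after the substitution $u = t^2/2$. Since $s_{\max}(G) \leq a + Y$ and $(u+v)^r\leq 2^{r-1}(u^r+v^r)$ for $r\geq 1$, this gives $\EE\, s_{\max}^r(G)\leq 2^{r-1}a^r + 2^{r-1}\EE\, Y^r$; bounding $a^r = (\sqrt{d_1}+\sqrt{d_2})^r\leq 2^{r-1}(d_1^{r/2}+d_2^{r/2})$ and using $2^{3r/2-2}\leq 2^{3r/2-1}$ collapses this to $\EE\, s_{\max}^r(G)\leq 2^{2r-2}\{d_1^{r/2}+d_2^{r/2}+2^{1-r/2}r\Gamma(r/2)\}$. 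Specializing to $r=2$ (the same chain yields $\EE\, s_{\max}^2(G)\leq 4(d_1+d_2+2)$) and inserting $(d_1,d_2)=(k_j,q)$ and $(m_i,q)$, then dividing by $k_j^{r/2}$ resp.\ multiplying by $\sigma_Z^r$, produces the first two lines of \eqref{anrs-bd} and of \eqref{anrs-bd-2}; here I would write $k$ for $k_1,k_2$ using Assumption (A2), taking $k_1=k_2=k$ (so that in particular $\EE\|R\|_2^2=\EE\|S\|_2^2$), with the constants from (A2) absorbed into the stated bounds.

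For the remaining displays I would bound $\max_i\|Z_i\|_2^r\leq\sum_{i=1}^n\|Z_i\|_2^r$ and take expectations term by term: summing the per-cluster bound gives $\EE\max_i\|Z_i\|_2^r\leq 2^{2r-2}\sigma_Z^r\{nq^{r/2}+\sum_i m_i^{r/2}+n\,2^{1-r/2}r\Gamma(r/2)\}$, which at $r=2$ reduces to $4\sigma_Z^2\{n(q+2)+N\}$ via $\sum_i m_i=N$. Then, since $R$, $S$, $Z_1,\dots,Z_n$ are mutually independent ($R,S$ by construction, the $Z_i$ by Assumption (A4)), the expectation of the product factorizes, $\EE\{\|S\|_2^r\|R\|_2^r\max_i\|Z_i\|_2^r\}=\EE\|S\|_2^r\cdot\EE\|R\|_2^r\cdot\EE\max_i\|Z_i\|_2^r$, and multiplying the three bounds already obtained yields the last line of \eqref{anrs-bd-2} (and of \eqref{anrs-bd} at $r=2$), with the $2^{6r-6}$ prefactor coming from $2^{2r-2}\cdot 2^{2r-2}\cdot 2^{2r-2}$ and the $k^{-r}$ from $k^{-r/2}\cdot k^{-r/2}$.

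The whole argument is really just bookkeeping once singular-value concentration is in hand, so there is no genuine obstacle; the one point needing care is tracking the powers of $2$ and the $r\Gamma(r/2)$ term through $\EE\, Y^r$ and through $(u+v)^r\leq 2^{r-1}(u^r+v^r)$ so that the constants land exactly as in \eqref{anrs-bd-2} and not merely up to an absolute factor --- concretely, checking $2^{3r/2-2}r\Gamma(r/2)\leq 2^{2r-2}\cdot 2^{1-r/2}r\Gamma(r/2)$ and $a^{r}\le 2^{r-1}(d_1^{r/2}+d_2^{r/2})$, both of which are immediate.
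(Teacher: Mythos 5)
Your proposal is correct and follows essentially the same route as the paper: rescale $R$, $S$, $Z_i$ to standard Gaussian matrices, control $\EE\,s_{\max}^r$ via the Gaussian singular-value concentration bound (the paper cites Corollary 5.35 of Vershynin, you cite Davidson--Szarek; both give the same $\sqrt{d_1}+\sqrt{d_2}+t$ tail), then use the union bound for the maximum and independence to factorize the product. Your computation of the excess moment through $\EE\{s_{\max}(G)-a\}_+^r=r2^{r/2-1}\Gamma(r/2)$ is a slightly cleaner packaging of the paper's tail-integral splitting, but the constants land in the same place and there is no substantive difference.
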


\begin{proof}
Let $A$ be a $k \times q$ matrix whose entries are independent and identically distributed as $\Ncal(0,1)$. Then,
  \begin{align}\label{eq:exp1}
      \EE \|  A \|_2^2 = \int_0^{\infty} \PP (\|  A \|_2^2 > t) \, dt.
  \end{align}
  Based on Corollary 5.35 in \citet{Ver12}, we decompose the last integral as
  \begin{align}\label{eq:sbd1}
      \int_0^{\infty} \PP (\|  A \|_2^2 > t) \, dt &= 
      \int_0^{(\sqrt{2q} + \sqrt{2k})^2} \PP (\|  A \|_2^2 > t) \, dt + 
      \int_{(\sqrt{2q} + \sqrt{2k})^2}^{\infty} \PP (\|  A \|_2^2 > t) \, dt \nonumber \\
      &\leq (\sqrt{2q} + \sqrt{2k})^2 + \int_{(\sqrt{2q} + \sqrt{2k})^2}^{\infty} \PP (\|  A \|_2^2 > t) \, dt \nonumber \\
      &= (\sqrt{2q} + \sqrt{2k})^2 + \int_{0}^{\infty} \PP \{ \|  A \|_2^2 > u + (\sqrt{2q} + \sqrt{2k})^2 \} \, du \nonumber \\
      &\overset{(i)}{\leq} 2 (2q + 2k) + \int_{0}^{\infty} \PP \left\{ \|  A \|_2^2 > \left( \sqrt{\frac{u}{2}} + \frac{\sqrt{2q} + \sqrt{2k}}{\sqrt{2}} \right)^2 \right\} \, du,
  \end{align}
  where $(i)$ follows from using the inequality $(u + v)^2 \leq 2 (u^2 + v^2)$ in two terms. Corollary 5.35 in \citet{Ver12} implies that
  \begin{align*}
      \PP \left\{ \|  A \|_2^2 > \left( \sqrt{\frac{u}{2}} + \frac{\sqrt{2q} + \sqrt{2k}}{\sqrt{2}} \right)^2 \right\} = 
      \PP \left( \|  A \|_2 >  \sqrt{\frac{u}{2}} + \sqrt{q} + \sqrt{k} \right) \leq 2 e^{-\frac{u}{4}}.
  \end{align*}
  Using this bound in \eqref{eq:sbd1} gives
  \begin{align}\label{exp-bd}
      \EE \|  A \|_2^2 = \int_0^{\infty} \PP (\|  A \|_2^2 > t) \, dt \leq 4 (q + k) + 2 \int_{0}^{\infty} e^{-\frac{u}{4}}\, du = 4 (q + k) + 8.
  \end{align}

  We now use the bound in \eqref{exp-bd} to derive the bounds for $\EE \| S\|^2$, $\EE \| R \|^2$, and $\EE \|Z_i \|^2$. Let $R = k^{-1/2} \tilde R$, $\tilde S = k^{-1/2} S$, and $Z_i = \sigma_Z \tilde Z_i $, where $\tilde R$, $\tilde S$, and $\tilde Z_i$ entries are independent and identically distributed as $\Ncal(0, 1)$. Then,  \eqref{exp-bd} implies that 
  \begin{align*}
      \EE \| S \|_2^2 &=\EE \| k^{-1/2} \tilde S  \|_2^2 = \EE \| R \|_2^2 =\EE \| k^{-1/2} \tilde R  \|_2^2 \leq \frac{4}{k} (q+k+2), \\
      \EE \| Z_i \|_2^2 &=\EE \| \sigma_Z \tilde Z_i  \|_2^2 \leq 4 \sigma_Z^2 (q + m_i + 2).
  \end{align*}
  Finally, union bound and independence of $R, S, Z_1, \ldots, Z_n$ imply that
  \begin{align*}
      &\EE \max (\| Z_1 \|_2^2, \ldots, \| Z_n \|_2^2) \leq \sum_{i=1}^n \EE \| Z_i \|_2^2 \leq  4 \sigma_Z^2 \{n(q+2) + N\}, \\
     &\EE \left\{\| S \|_2^2 \, \| R \|_2^2 \, \max (\| Z_1 \|_2^2, \ldots, \| Z_n \|_2^2) \right\} \leq 64 \sigma_Z^2 \; \frac{(q+k+2)^2}{k^2}  \{n(q+2) + N\}.
  \end{align*}

Consider the extension for these arguments for an $r \geq 3$,
\begin{align}\label{eq:genpower1}
    \EE\|A\|_2^r &= \int_0^{\infty} \PP (\|  A \|_2^r > t) \, dt \nonumber \\
    &= \int_0^{\left(\sqrt{2^{2-2/r}q} + \sqrt{2^{2-2/r}k} \right)^r} \PP (\|  A \|_2^r > t) \, dt + 
      \int_{\left(\sqrt{2^{2-2/r}q} + \sqrt{2^{2-2/r}k} \right)^r}^{\infty} \PP (\|  A \|_2^r > t) \, dt \nonumber \\
    &\leq \left(\sqrt{2^{2-2/r}q} + \sqrt{2^{2-2/r}k} \right)^r + \int_{\left(\sqrt{2^{2-2/r}q} + \sqrt{2^{2-2/r}k} \right)^r}^{\infty} \PP (\|  A \|_2^r > t) \, dt \nonumber \\
    &= \left(\sqrt{2^{2-2/r}q} + \sqrt{2^{2-2/r}k} \right)^r + \int_{0}^{\infty} \PP \left\{ \|  A \|_2^r > u + \left(\sqrt{2^{2-2/r}q} + \sqrt{2^{2-2/r}k} \right)^r \right\} \, du \nonumber \\
    &= \left(\sqrt{2^{2-2/r}q} + \sqrt{2^{2-2/r}k} \right)^r + \int_{0}^{\infty} \PP \left[ \|  A \|_2^r > 2^{r-1}\left\{\left(\frac{u^{1/r}}{2^{1-1/r}}\right)^r + \left(\frac{\sqrt{2^{2-2/r}q} + \sqrt{2^{2-2/r}k}}{2^{1-1/r}} \right)^r\right\}\right]\, du \nonumber \\
    &\overset{(i)}{\leq} 2^{r-1}\left(2^{r-1}q^{r/2} + 2^{r-1}k^{r/2} \right) + \int_{0}^{\infty} \PP \left\{ \|  A \|_2^r > \left(\frac{u^{1/r}}{2^{1-1/r}} + \frac{\sqrt{2^{2-2/r}q} + \sqrt{2^{2-2/r}k}}{2^{1-1/r}} \right)^r\right\} \, du \nonumber\\
\end{align}
where $(i)$ follows from using the inequality $(u + v)^r \leq 2^{r-1} (u^r + v^r)$ in both the terms. Corollary 5.35 in \cite{Ver12} implies that
\begin{align*}
    \PP \left\{ \|  A \|_2^r > \left(\frac{u^{1/r}}{2^{1-1/r}} + \frac{\sqrt{2^{2-2/r}q} + \sqrt{2^{2-2/r}k}}{2^{1-1/r}} \right)^r\right\} &= \PP \left\{ \|  A \|_2 > \left(\frac{u^{1/r}}{2^{1-1/r}} + \frac{\sqrt{2^{2-2/r}q} + \sqrt{2^{2-2/r}k}}{2^{1-1/r}} \right)\right\} \nonumber\\
    &= \PP \left\{ \|  A \|_2 > \left(\frac{u^{1/r}}{2^{1-1/r}} + \sqrt{q} + \sqrt{k}\right)\right\}\nonumber \\
    &\leq 2e^{-\frac{u^{2/r}}{2^{3-2/r}}},
\end{align*}
which we use in \eqref{eq:genpower1} to obtain
\begin{align}\label{eq:genpower2}
   \EE \|  A \|_2^r &= \int_0^{\infty} \PP (\|  A \|_2^r > t) \, dt \nonumber\\ &\leq 2^{2r - 2}(q^{r/2} + k^{r/2}) + 2 \int_{0}^{\infty} e^{-\frac{u^{2/r}}{2^{3-2/r}}}\, du \nonumber\\
   &= 2^{2r - 2}(q^{r/2} + k^{r/2}) + r\left\{2^{(3r)/2 - 1} \right\}\Gamma(r/2) \nonumber \\
   &= 2^{2r - 2} \{q^{r/2} + k^{r/2} + 2^{1-r/2}r\Gamma(r/2) \}. 
\end{align}
Now, let $R = k^{-1/2} \tilde R$, $\tilde S = k^{-1/2} S$, and $Z_i = \sigma_Z \tilde Z_i $, where the entries of $\tilde R$, $\tilde S$, and $\tilde Z_i$ are independently and identically distributed as $\Ncal(0, 1)$. Using \eqref{eq:genpower2} then implies
  \begin{align*}
      \EE \| S \|_2^r &=\EE \| k^{-1/2} \tilde S  \|_2^r = \EE \| R \|_2^r =\EE \| k^{-1/2} \tilde R  \|_2^r \leq \frac{1}{k^{r/2}} 2^{2r - 2} \{q^{r/2} + k^{r/2} + 2^{1-r/2}r\Gamma(r/2) \}, \\
      \EE \| Z_i \|_2^r &=\EE \| \sigma_Z \tilde Z_i  \|_2^r \leq \sigma_Z^{r} 2^{2r - 2} \{q^{r/2} + m_i^{r/2} + 2^{1-r/2}r\Gamma(r/2) \}.
  \end{align*}
Finally, using the independence of $R, S, Z_1, \dots, Z_n$ and union bound, we get
  \begin{align*}
      &\EE \{\max (\| Z_1 \|_2^r, \ldots, \| Z_n \|_2^r)\} \leq \sum_{i=1}^n \EE \| Z_i \|_2^r \leq \sigma_Z^{r}2^{2r - 2} \left[ nq^{r/2} + n2^{1-r/2}r\Gamma(r/2) + \sum_{i=1}^n m_i^{r/2} \right], \\
     &\EE \left\{\| S \|_2^r \, \| R \|_2^r \, \max (\| Z_1 \|_2^r, \ldots, \| Z_n \|_2^r) \right\} \nonumber \\
     &\leq 2^{6r-6}\frac{\sigma_Z^r}{k^r} \; \{q^{r/2} + k^{r/2} + 2^{1-r/2}r\Gamma(r/2) \}^2 \left[ nq^{r/2} + n2^{1-r/2}r\Gamma(r/2) + \sum_{i=1}^n m_i^{r/2} \right].
  \end{align*}
    The lemma is proved.

\end{proof}

\section{Posterior Computation}\label{Section:PostComp}
This section presents the detailed derivations of the full conditional distributions presented in Algorithm 1 of the main manuscript. First, consider the derivation of the posterior updates for the compressed parameter $\gamma$. Let $\check{y} = y - X\beta$ denote the residual after removing the fixed effects, and $\check{Z}$ be the design matrix associated with the compressed random effects; see \eqref{eq:gen-lik-2} in the main manuscript. The conjugate Gaussian prior for the conditional Gaussian likelihood implies that the full conditional density of $\gamma$ takes the form 
\begin{align}
    f(\gamma \mid y, \beta, \tau^2) &\propto f(\check y \mid \gamma, \beta, \tau^2) f(\gamma) \nonumber \\
    &\propto (\tau^2)^{-\frac{N}{2}} \exp \left\{ -\frac{1}{2\tau^2} (\check y - \check Z \gamma)^{\T} (\check y - \check Z \gamma) \right\} (\sigma_{\gamma}^2)^{-\frac{k_1 k_2}{2}} \exp \left( -\frac{1}{2\sigma_\gamma^2} \gamma^{\T} \gamma\right) \nonumber \\
    &\propto (\sigma_{\gamma}^2)^{-\frac{k_1 k_2}{2}} \exp \left[ -\frac{1}{2} \left\{ \gamma^{\T} \left( \frac{1}{\tau^2}\check Z^{\T}\check Z + \frac{1}{\sigma_\gamma^2} I_{k_1 k_2}\right)\gamma - \frac{2}{\tau^2}\check y^{\T} \check Z \gamma \right\}\right] \nonumber \\
    &\propto \exp \left\{ -\frac{1}{2}(\gamma - \mu_\gamma)^{\T} \Sigma_\gamma^{-1}(\gamma - \mu_\gamma)\right\}, \nonumber \\
    & \equiv \mathcal{N}(\mu_\gamma, \Sigma_\gamma),
\end{align}
where $\mu_\gamma = \tau^{-2} \Sigma_{\gamma}\check Z^{\T}\check y$, and $\Sigma_\gamma = \left( \tau^{-2} \check Z^{\T}\check Z + \sigma_\gamma^{-2} I_{k_1 k_2} \right)^{-1}$ represent the mean and covariance parameters of the Gaussian density.

We now derive the full conditional distribution of $(\beta, \tau^2)$, marginalizing over the imputed compressed random effects $d_i$'s; see model \eqref{eq:model-cycle2} in the main manuscript. Using the scale mixture representation of the half-Cauchy distribution assigned a priori to the global and local shrinkage parameters, we express the hierarchy in the Horseshoe prior as
\begin{align}
    \beta_j \mid \lambda_j^2, \delta^2, \tau^2 &\overset{ind}{\sim} \mathcal{N}(0, \lambda_j^2 \delta^2 \tau^2), \nonumber \\
    \lambda_j^2 \mid \nu_j &\overset{ind}{\sim} \mathcal{IG}(1/2, 1/\nu_j), \ j = 1, \dots, p, \nonumber \\
    \delta^2 \mid \xi &\sim \mathcal{IG}(1/2, 1/\xi), \nonumber \\
    \tau^2 &\sim \mathcal{IG}(a_0, b_0), \nonumber \\
    \nu_1, \dots, \nu_p, \xi &\overset{ind}{\sim} \mathcal{IG}(1/2, 1), 
\end{align}
where $\mathcal{IG}(a, b)$ denotes the inverse-gamma distribution with shape parameter $a$ and scale parameter $b$. The full conditional density of the fixed effect parameter $\beta$ is
\begin{align}
    f(\beta \mid \tau^2, \lambda_1, \dots, \lambda_p, \delta, \nu_1, \dots, \nu_p, \xi, \gamma) &\propto (\tau^2)^{-\frac{N}{2}} \exp \left\{-\frac{1}{2\tau^2}(y^* - X^* \beta)^{\T}(y^* - X^* \beta) \right\} \nonumber\\
    &\quad \, \exp\left\{ -\frac{1}{2\tau^2}\beta^{\T}(\delta^2 \Lambda)^{-1}\beta\right\} \nonumber \\
    &\propto \exp \left\{ -\frac{1}{2}(\beta - \mu_\beta)^{\T} \Sigma_\beta^{-1}(\beta - \mu_\beta)\right\} \nonumber \\
    & \equiv \mathcal{N}(\mu_\beta, \tau^2\Sigma_\beta),
\end{align}
where $\Sigma_\beta = \left\{X^{*\T}X^{*} + (\delta^2 \Lambda)^{-1} \right\}^{-1}$, $\mu_\beta = \Sigma_{\beta}X^{*\T} y^{*}$, and $\Lambda = \text{diag}(\lambda_1^2, \dots, \lambda_p^2)$.  Given $\beta$, we leverage the conjugacy of the inverse-gamma prior to derive the full conditional density of the error variance $\tau^2$ as
\begin{align}
    f(\tau^2 \mid \beta, \lambda_1, \dots, \lambda_p, \delta, \nu_1, \dots, \nu_p, \xi, \gamma) &\propto (\tau^2)^{-\frac{N}{2}} \exp \left\{-\frac{1}{2\tau^2}(y^* - X^* \beta)^{\T}(y^* - X^* \beta) \right\} \nonumber \\
    &(\tau^2)^{-\frac{p}{2}}\exp\left\{ -\frac{1}{2\tau^2}\beta^{\T}(\delta^2 \Lambda)^{-1}\beta\right\} (\tau^2)^{-a_0 - 1} exp\left( -\frac{b_0}{\tau^2}\right) \nonumber \\
    &\propto (\tau^2)^{-\left(a_0 + \frac{N+p}{2}\right)-1} \exp \left[- \frac{1}{\tau^2}\left\{ b_0 + \frac{r^{*\T}r^*}{2} + \frac{\beta^{\T}(\delta^2 \Lambda)^{-1}\beta}{2}\right\} \right] \nonumber \\
    &\equiv \mathcal{IG}\left(a_0 + \frac{N+p}{2}, b_0 + \frac{r^{*\T}r^*}{2} + \frac{\beta^{\T}(\delta^2 \Lambda)^{-1}\beta}{2}\right),
\end{align}
where $r^* = y^* - X^* \beta$.

The full conditional density of the squared global shrinkage parameter $\delta^2$ satisfies
\begin{align}
    f(\delta^2 \mid \beta, \tau^2, \lambda_1, \dots, \lambda_p, \nu_1, \dots, \nu_p, \xi, \gamma) &\propto \mid \delta^2 \Lambda \mid^{-\frac{1}{2}} \exp\left\{ -\frac{1}{2\tau^2}\beta^{\T}(\delta^2 \Lambda)^{-1}\beta\right\} (\delta^2)^{-\frac{1}{2} - 1} \exp\left(-\frac{1/\xi}{\delta^2} \right) \nonumber \\
    &\propto (\delta^2)^{-\frac{p+1}{2} - 1} \exp\left\{ -\frac{1}{\delta^2} \left( \frac{1}{\xi} + \frac{1}{2\tau^2}\sum_{j=1}^p \frac{\beta_j^2}{\lambda_j^2}\right)\right\} \nonumber \\
    &\equiv \mathcal{IG}\left(\frac{p+1}{2}, \frac{1}{\xi} + \frac{1}{2\tau^2}\sum_{j=1}^p \frac{\beta_j^2}{\lambda_j^2}\right).
\end{align}
The full conditional density of the squared local shrinkage parameter $\lambda_j^2$ is
\begin{align}
    f(\lambda_j^2 \mid \beta, \tau^2, \delta, \nu_1, \dots, \nu_p, \xi, \gamma) &\propto \mid \delta^2 \Lambda \mid^{-\frac{1}{2}} \exp\left\{  -\frac{1}{2\tau^2}\beta^{\T}(\delta^2 \Lambda)^{-1}\beta\right\} (\lambda_j^2)^{-\frac{1}{2} - 1} \exp\left(-\frac{1/\nu_j}{\lambda_j^2} \right) \nonumber \\
    &\propto (\lambda_j^2)^{-\frac{1}{2}}\exp\left(-\frac{\beta_j^2}{2\delta^2\tau^2\lambda_j^2}\right) (\lambda_j^2)^{-\frac{1}{2} - 1} \exp\left(-\frac{1/\nu_j}{\lambda_j^2} \right) \nonumber \\
    &\equiv \mathcal{IG}\left(1, \frac{1}{\nu_j} + \frac{\beta_j^2}{2\delta^2\tau^2} \right).
\end{align}
Finally, the full conditional density of the auxiliary variables, $\xi, \nu_1, \ldots, \nu_p$, are
\begin{align}
    f(\xi \mid \beta, \tau^2, \lambda_1, \dots, \lambda_p, \delta, \nu_1, \dots, \nu_p, \gamma) &\propto (\xi)^{-\frac{1}{2}}(\delta^2)^{-\frac{1}{2} - 1} \exp\left(-\frac{1/\xi}{\delta^2} \right) (\xi)^{-\frac{1}{2} - 1}\exp\left( -\frac{1}{\xi}\right)\nonumber \\
    &\propto (\xi)^{-1 -1} \exp \left\{ -\frac{1}{\xi} \left( 1 + \frac{1}{\delta^2}\right)\right\} \nonumber \\
    &\equiv \mathcal{IG}\left(1, 1 + \frac{1}{\delta^2} \right),
\end{align}
and
\begin{align}
    f(\nu_j \mid \beta, \tau^2, \lambda_1, \dots, \lambda_p, \delta, \xi, \gamma) &\propto (\nu_j)^{-\frac{1}{2}}(\lambda_j^2)^{-\frac{1}{2} - 1} \exp\left(-\frac{1/\nu_j}{\lambda_j^2} \right) (\nu_j)^{-\frac{1}{2} - 1}\exp\left( -\frac{1}{\nu_j}\right)\nonumber \\
    &\propto (\nu_j)^{-1 -1} \exp \left\{ -\frac{1}{\nu_j} \left( 1 + \frac{1}{\lambda_j^2}\right)\right\} \nonumber \\
    &\equiv \mathcal{IG}\left(1, 1 + \frac{1}{\lambda_j^2} \right).
\end{align}

\section{Additional Experimental Results}\label{Section:ExtraSim}

We conduct an additional simulation to evaluate the robustness of the proposed method. In this setup, the fixed effect covariate for each sample is generated from a multivariate normal distribution with a Toeplitz covariance matrix, introducing correlation among the covariates. Specifically, each row of $X \in \mathbb{R}^{N \times p}$ is generated independently from $\Ncal(0, \Sigma^X)$, where $\Sigma^X_{jj'} = 0.5 ^ {|j - j'|}$ for any $j, j' \in \{1, \ldots, p\}$. In this correlated design setting, we observe that CME with compressed covariance dimensions $k_1 = k_2 = 3$ maintains its competitive performance across coverage, interval width, fixed effects selection, and predictive accuracy metrics. These results show that CME's performance generalizes beyond the independent case.

CME has the best fixed effects recovery performance and matches the oracle benchmark across a range of true covariance structures of the random effects. For any $\Sigma$, CME with small compression dimensions, such as $k_1 = k_2 = 3$, yields the narrowest credible intervals for various signal strengths in $\beta$ attaining the nominal coverage of 95\% (Figures \ref{fig:VS_covg_allSigma_ToepX} and \ref{fig:VS_width_allSigma_ToepX}). Although CME's coverage is below the nominal level when the signal is weakest ($\beta = 0.05$), its empirical coverage improves as the sample size increases with cluster size $m$. In contrast, both the penalized quasi-likelihood methods exhibit consistent undercoverage of the non-zero coefficients in $\beta$ when the true $\Sigma$ is low-rank. When $\Sigma$ is full-rank, PQL-1 and CME have comparable coverage because PQL-1 uses a non-singular identity matrix as the proxy matrix. While CME's performance is largely insensitive to $k_2$, increasing $k_1$ leads to narrower credible intervals when the sample size is larger ($m = 8$ and $m = 12$). Setting both the compression dimensions to be large ($k_1 = k_2 = 14$), however, consistently results in undercoverage of the signals across all sample sizes. We conclude that when the compression dimensions are small, CME combined with the $S_2M$ algorithm outperforms its competitors in fixed effects selection accuracy as the sample size increases with $m$ (Table \ref{Table_tprfpr_allSigma_ToepX}).

CME with $k_1 = k_2 = 3$ maintains its superior predictive accuracy compared to the quasi-likelihood competitors across various true covariance structures (Figure \ref{fig:rel_mspe_allSigma_ToepX}). As the sample size increases with $m$, the mean square prediction error of CME decreases, indicating its gain in accuracy. For the smaller compression dimensions, CME produces the shortest prediction intervals while attaining the nominal 95\% coverage (Table \ref{Table_covg&relWidth_PI_CME_allSigma_ToepX}).  No penalized likelihood competitors are included in Table \ref{Table_covg&relWidth_PI_CME_allSigma_ToepX} because the variance component estimation fails due to the assumption that $q < m$ when $m_1 = \cdots = m_n$, which is violated in high-dimensional settings; see Section 4.1 in \cite{Lietal21} for the detailed conditions.

In summary, CME outperforms its competitors in fixed effects selection and prediction when the compression dimensions remain small ($k_1 = k_2 = 3$), with further improvements in both aspects as the sample size increases with cluster size $m$. These findings closely align with those observed in the independent covariate setting discussed in Section \ref{Section:Sim_results} of the main manuscript, highlighting the practical utility of CME and its robustness to varying covariance structures of the random effects and correlations among the fixed effect covariates.

\begin{figure}[htbp]
  \centering
  \begin{subfigure}[b]{0.78\textwidth}
    \includegraphics[width=\textwidth]{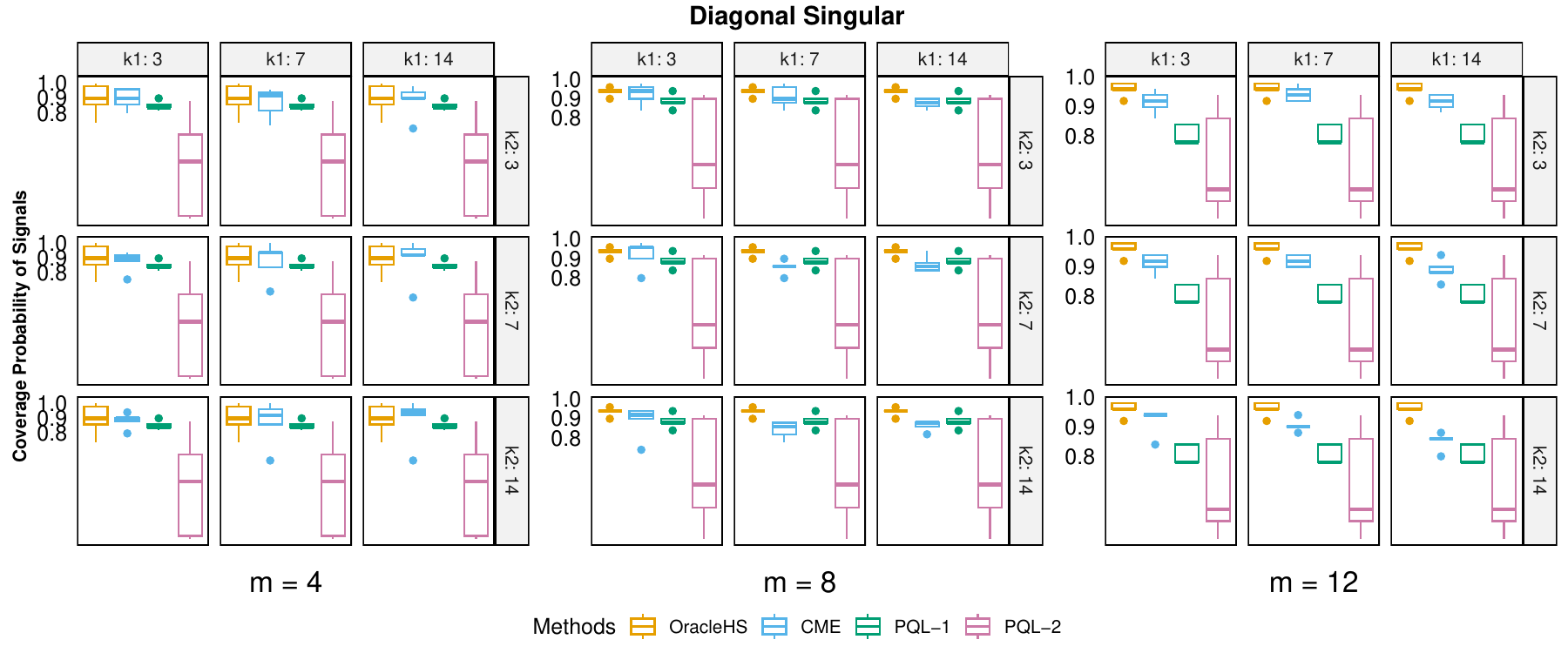}
    \label{fig:subfigA_covg}
  \end{subfigure}
  \\
  \begin{subfigure}[b]{0.78\textwidth}
    \includegraphics[width=\textwidth]{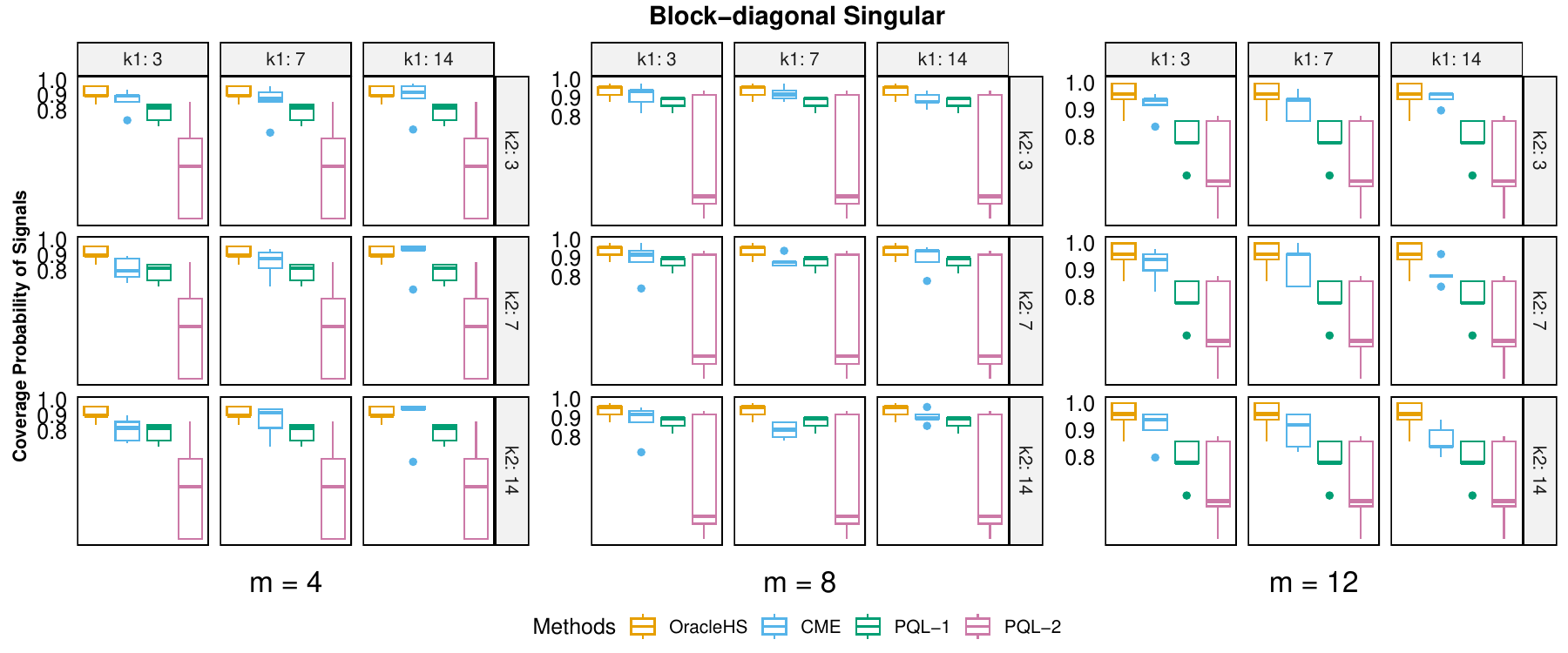}
    \label{fig:subfigB_covg}
  \end{subfigure}
  \\
  \begin{subfigure}[b]{0.78\textwidth}
    \includegraphics[width=\textwidth]{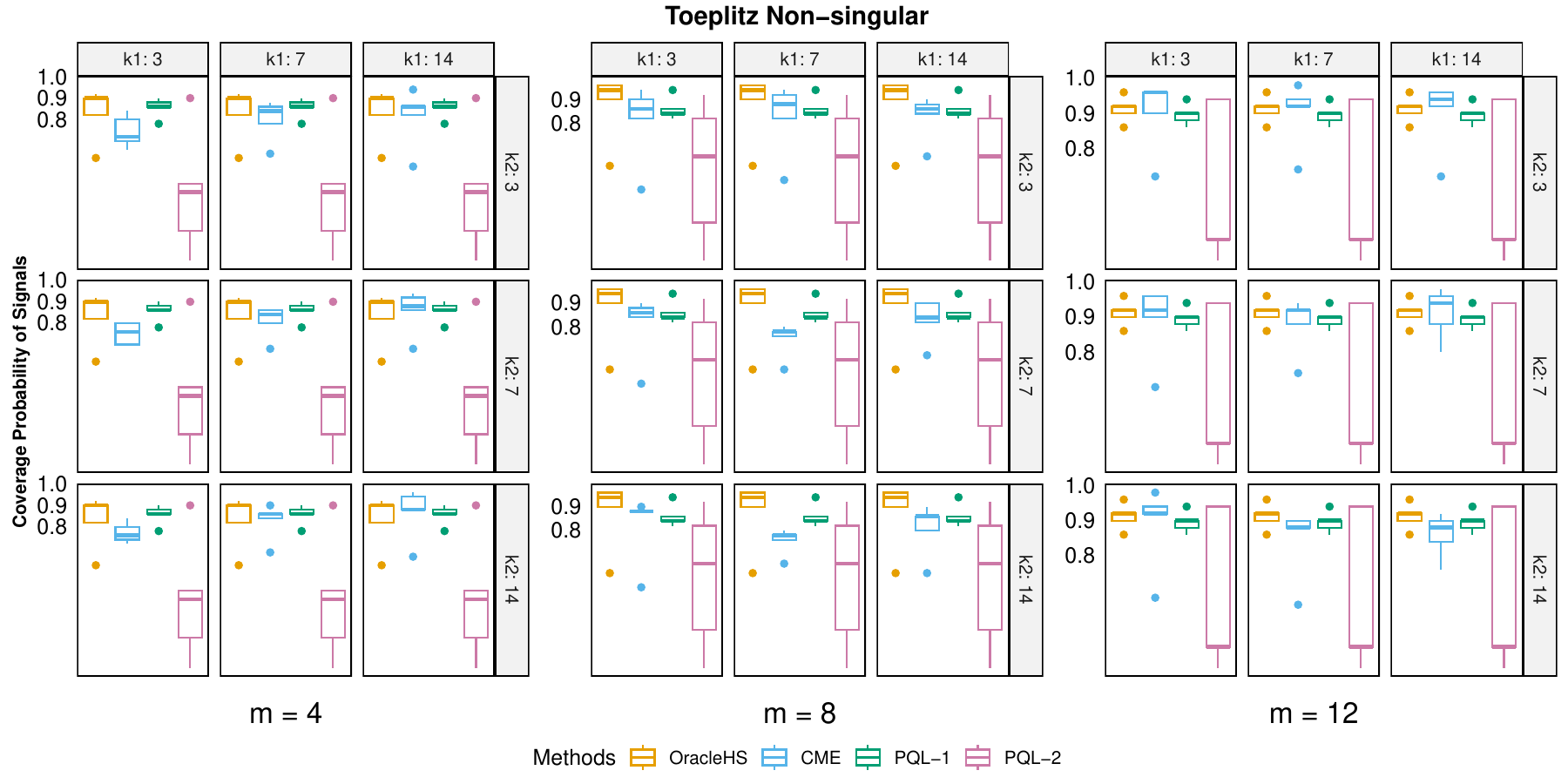}
    \label{fig:subfigC_covg}
  \end{subfigure}
  \caption{Coverage of signals depending on cluster size $m \in \{4, 8, 12\}$ and compression dimensions ($k_1 \in \{3, 7, 14\}$ and $k_2 \in \{3, 7, 14 \}$). The random effects covariance matrices are diagonal (singular), block-diagonal (singular), and Toeplitz (non-singular).   
  CME outperforms its regularized quasi-likelihood competitors in the presence of correlated fixed effects. OracleHS, oracle with Horseshoe prior; CME, compressed mixed-effects model; PQL-1, the penalized quasi-likelihood approach of \citet{FanLi12}; PQL-2, the penalized quasi-likelihood approach of \citet{Lietal21}.}
  \label{fig:VS_covg_allSigma_ToepX}
\end{figure}

\begin{figure}[htbp]
  \centering
  \begin{subfigure}[b]{0.78\textwidth}
    \includegraphics[width=\textwidth]{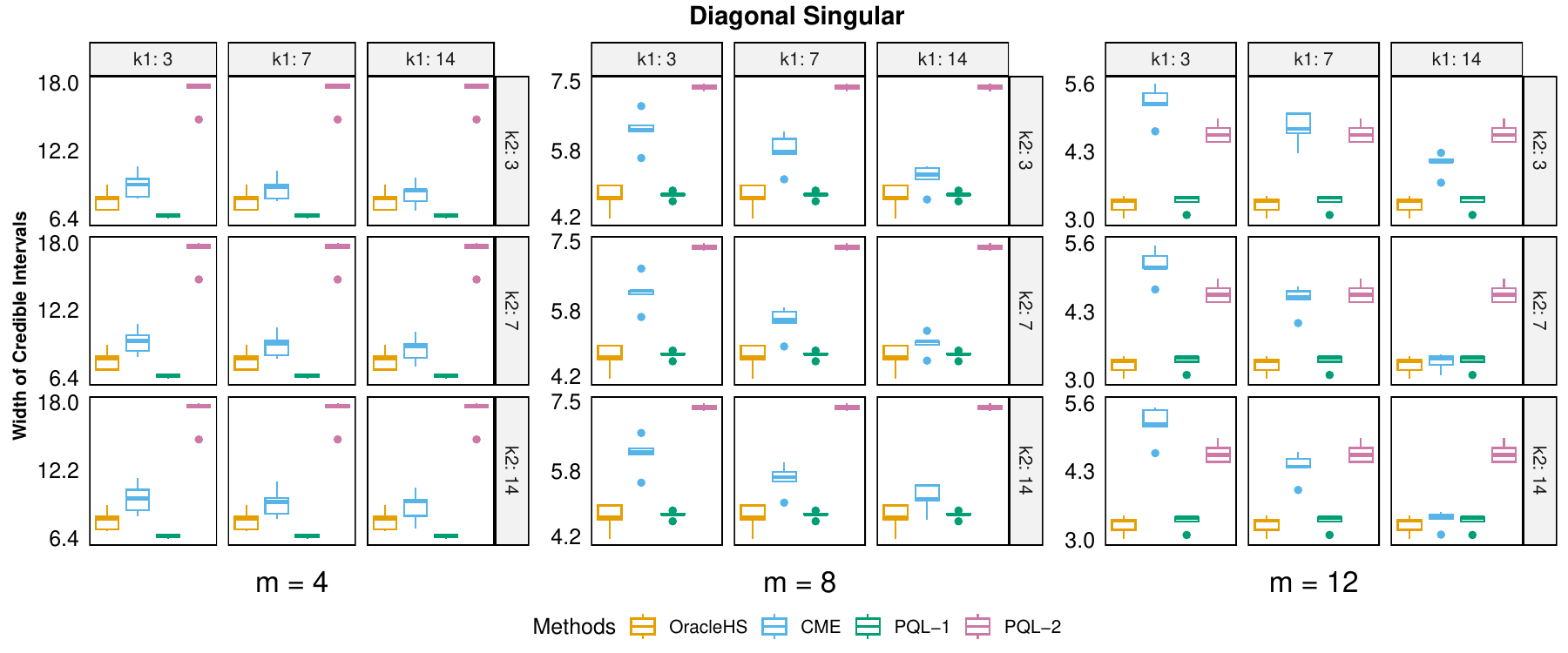}
    \label{fig:subfigA_width}
  \end{subfigure}
  \\
  \begin{subfigure}[b]{0.78\textwidth}
    \includegraphics[width=\textwidth]{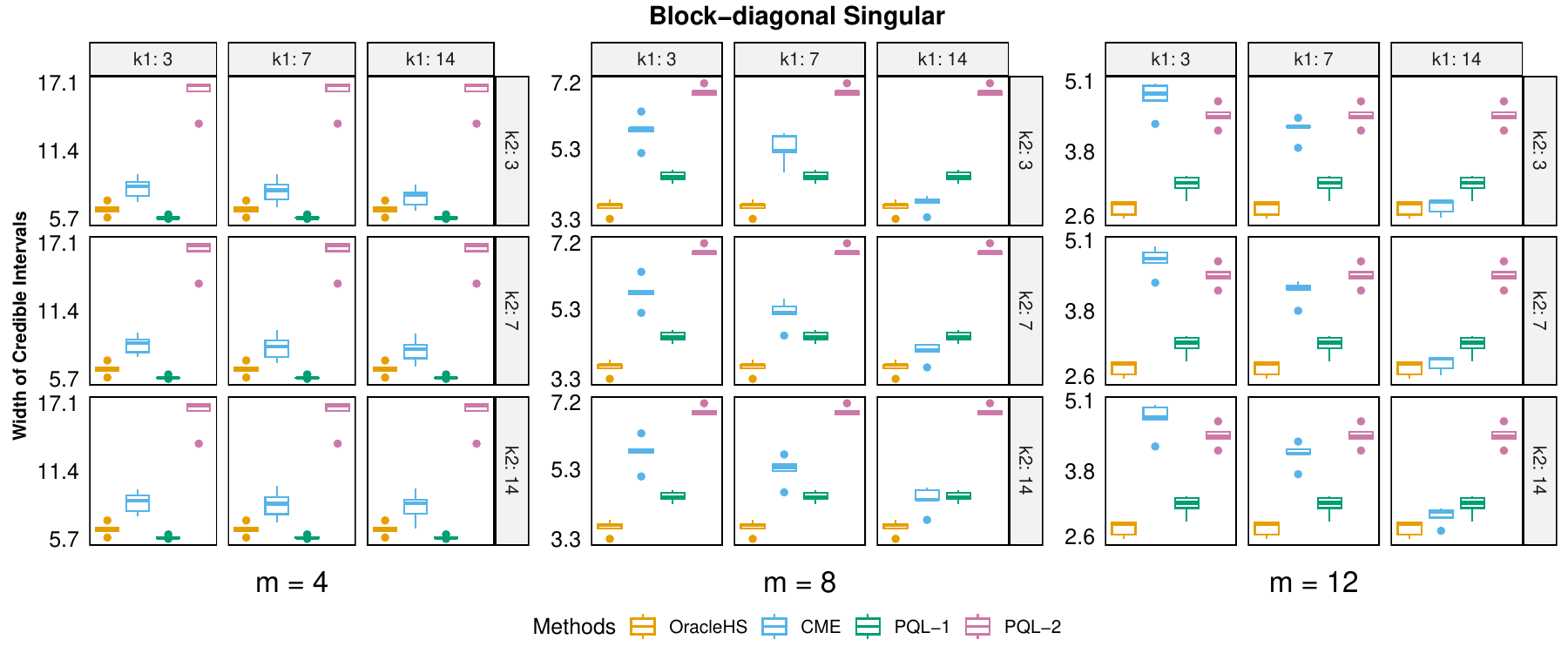}
    \label{fig:subfigB_width}
  \end{subfigure}
  \\
  \begin{subfigure}[b]{0.78\textwidth}
    \includegraphics[width=\textwidth]{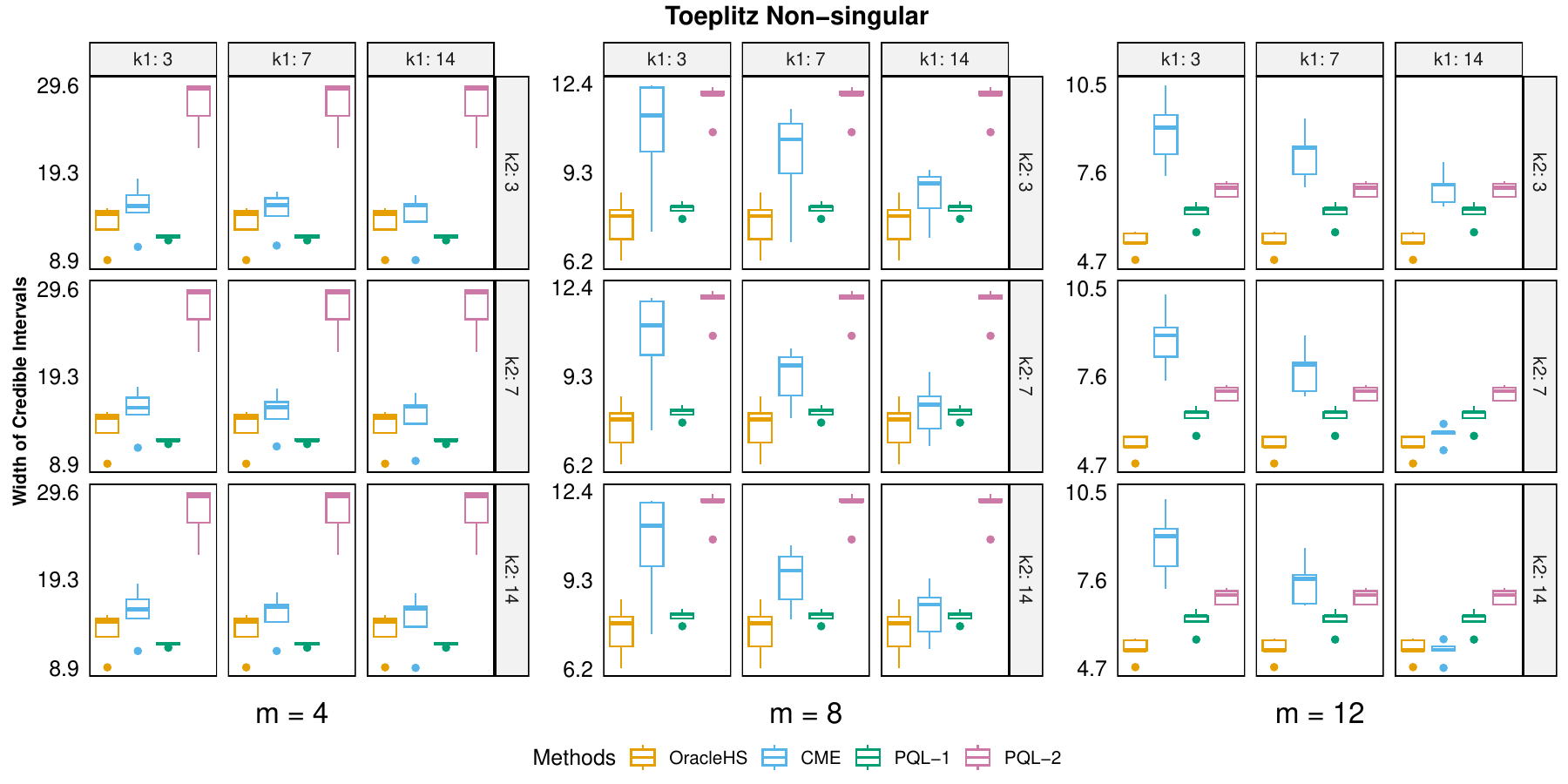}
  \end{subfigure}
  \caption{Width of credible intervals of signals depending on cluster size $m \in \{4, 8, 12\}$ and compression dimensions ($k_1 \in \{3, 7, 14\}$ and $k_2 \in \{3, 7, 14 \}$). The random effects covariance matrices are diagonal (singular), block-diagonal (singular), and Toeplitz (non-singular).   
  CME outperforms its regularized quasi-likelihood competitors in the presence of correlated fixed effects. OracleHS, oracle with Horseshoe prior; CME, compressed mixed-effects model; PQL-1, the penalized quasi-likelihood approach of \citet{FanLi12}; PQL-2, the penalized quasi-likelihood approach of \citet{Lietal21}.}
  \label{fig:VS_width_allSigma_ToepX}
\end{figure}

\begin{table}[h!]
\centering
\caption{True positive rates (TPR) and false positive rates (FPR in parentheses) in fixed effects selection averaged over 50 replicates in the presence of correlated fixed effects, depending on cluster size $m \in \{4, 8, 12\}$ and random effects covariance matrices as diagonal positive semi-definite, block-diagonal positive semi-definite, and Toeplitz positive definite. OracleHS, oracle with Horseshoe prior; CME, compressed mixed-effects model; PQL-1, the penalized quasi-likelihood approach of \citet{FanLi12}; PQL-2, the penalized quasi-likelihood approach of \citet{Lietal21}. Results for CME are presented for compression dimensions $k_1 = k_2 = 3$.}
\scalebox{0.70}{
\begin{tabular}{|c|c|ccc|}
\hline
\multirow{2}{*}{$\Sigma$} & \multirow{2}{*}{Methods} & \multicolumn{3}{c|} {TPR (FPR)}\\ \cline{3-5}
& & $m = 4$ & $m = 8$ & $m = 12$\\
\hline
\multirow{6}{*}{Diagonal psd} & & & & \\
& CME & 0.91 (0.16) & 0.96 (0.26) & 0.99 (0.31)\\
&  OracleHS & 0.98 (0.49) & 0.99 (0.46) & 1 (0.43)\\
& PQL-1 & 0.98 (0.06) & 1 (0.09) & 1 (0.32)\\
& PQL-2 & 0.67 (0) & 0.87 (0) & 0.98 (0)\\
& & & & \\
\hline
\multirow{6}{*}{Block-diagonal psd} & & & & \\
& CME & 0.90 (0.14) & 0.98 (0.31) & 0.99 (0.33)\\ 
 & OracleHS & 0.98 (0.44) & 1 (0.47) & 1 (0.41)\\
 & PQL-1 & 0.98 (0.07) & 1 (0.09) & 1 (0.35)\\
 & PQL-2 & 0.67 (0) & 0.87 (0) & 1(0)\\
  & & & & \\
 \hline
 \multirow{6}{*}{Toeplitz pd} & & & & \\
& CME & 0.68 (0.10) & 0.92 (0.30) & 0.95 (0.35)\\ 
 & OracleHS & 0.97 (0.56) & 0.98 (0.47) & 1 (0.48)\\
& PQL-1 & 0.92 (0.05) & 0.98 (0.07) & 1 (0.16)\\
 & PQL-2 & 0.58 (0) & 0.8 (0) & 0.94 (0)\\
 & & & & \\
\hline
 \end{tabular}
}
\label{Table_tprfpr_allSigma_ToepX}
\end{table}

\begin{figure}[h!]
    \centering
    \includegraphics[scale = 0.50]{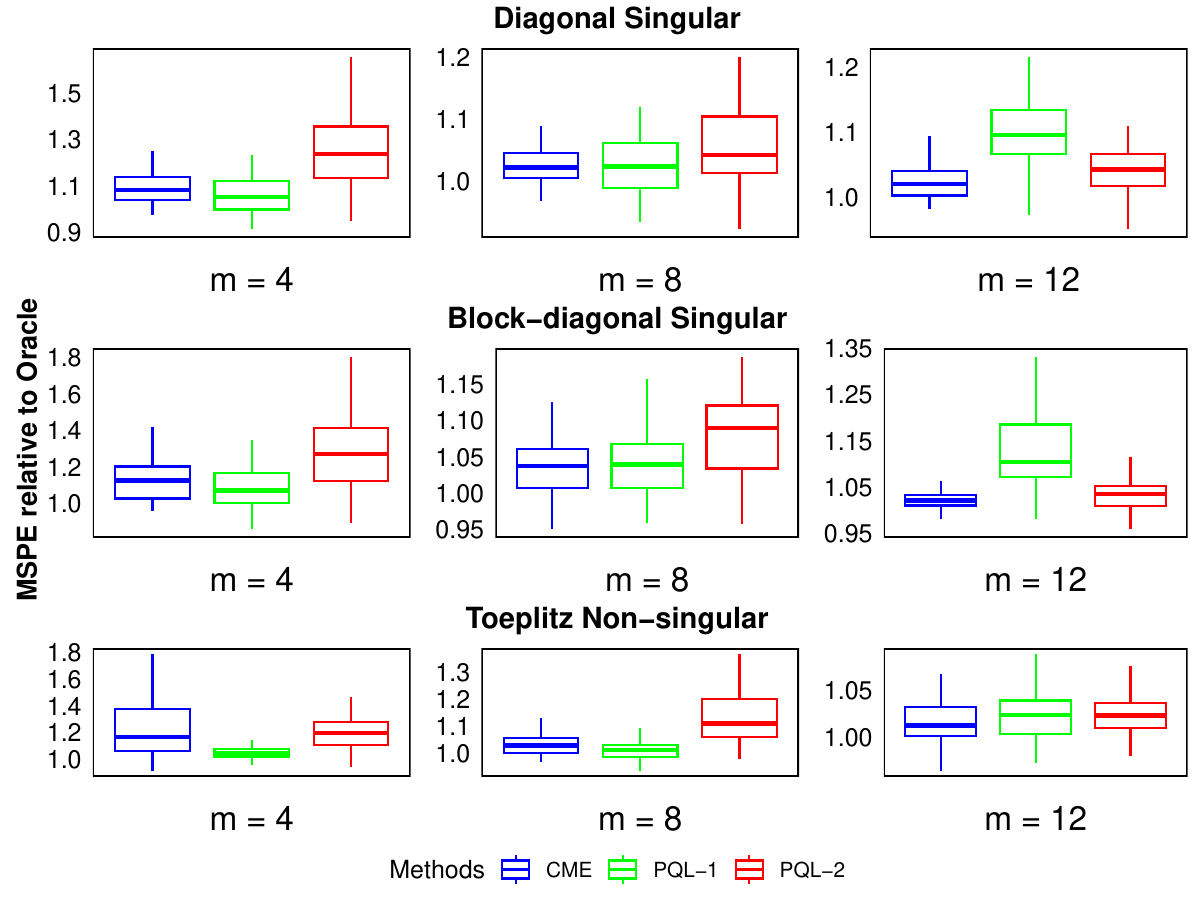}
    \caption{Relative mean square prediction error (MSPE) averaged over 50 replications in the presence of correlated fixed effects. Relative MSPE is computed by dividing the MSPE for a method by that of OracleHS, the oracle with Horseshoe prior. CME, compressed mixed-effects model; PQL-1, the penalized quasi-likelihood approach of \citet{FanLi12}; PQL-2, the penalized quasi-likelihood approach of \citet{Lietal21}. Results for CME are presented for the best choice of compression dimensions $k_1 = k_2 = 3$.}
    \label{fig:rel_mspe_allSigma_ToepX}
\end{figure}

\begin{table}[h!]
\centering
\caption{Coverage of 95\% Prediction Intervals produced by the CME model in the presence of correlated fixed effects with their relative width reported in parentheses. Relative width is computed by dividing the width of a prediction interval produced by CME model by that of the OracleHS, the oracle with Horseshoe prior. All metrics are averaged over 50 replications.}
\resizebox{\columnwidth}{!}{
\begin{tabular}{|c|cccc|ccc|ccc|}
\hline
\multirow{3}{*}{$\Sigma$} & \multirow{3}{*}{$k_1$} & \multicolumn{9}{c|} {$k_2$}\\ \cline{2-11}
  & & \multicolumn{3}{c|} {3} & \multicolumn{3}{c|} {7} & \multicolumn{3}{c|} {14}\\ \cline{2-11}
& & $m = 4$ & $m = 8$ & $m = 12$ & $m = 4$ & $m = 8$ & $m = 12$ & $m = 4$ & $m = 8$ & $m = 12$\\
\hline
\multirow{5}{*}{Diagonal psd} & & & & & & & & & & \\
 & 3 & \textbf{0.95 (1.14)} & \textbf{0.95 (1.07)} & \textbf{0.95 (1.06)} & 0.95 (1.35) & 0.96 (1.14) & 0.96 (1.11) & 0.96 (1.42) & 0.97 (1.30) & 0.97 (1.22) \\
& 7 & 0.97 (1.34) & 0.96 (1.19) & 0.96 (1.14) & 0.97 (1.38) & 0.97 (1.42) & 0.98 (1.28) & 0.98 (1.40) & 0.97 (1.44) & 0.98 (1.46) \\
& 14 & 0.98 (1.32) & 0.97 (1.36) & 0.98 (1.30) & 0.98 (1.32) & 0.97 (1.30) & 0.99 (1.34) & 0.99 (1.36) & 0.98 (1.32) & 0.98 (1.34) \\
& & & & & & & & & &\\
\hline
\multirow{5}{*}{Block-diagonal psd} & & & & & & & & & & \\
 & 3 & \textbf{0.94 (1.19)} & \textbf{0.95 (1.10)} & \textbf{0.95 (1.09)} & 0.95 (1.40) & 0.95 (1.18) & 0.95 (1.14) & 0.96 (1.48) & 0.97 (1.34) & 0.96 (1.25) \\ 
& 7 & 0.96 (1.36) & 0.96 (1.21) & 0.96 (1.17) & 0.97 (1.42) & 0.97 (1.45) & 0.97 (1.29) & 0.98 (1.44) & 0.97 (1.51) & 0.98 (1.49) \\
 & 14 & 0.97 (1.33) & 0.98 (1.30) & 0.98 (1.27) & 0.98 (1.34) & 0.97 (1.26) & 0.98 (1.31) & 0.98 (1.40) & 0.98 (1.30) & 0.98 (1.31) \\
  & & & & & & & & & &\\
 \hline
 \multirow{5}{*}{Toeplitz pd} & & & & & & & & & & \\
 & 3 & \textbf{0.94 (1.17)} & \textbf{0.95 (1.08)} & \textbf{0.95 (1.06)} & 0.94 (1.29) & 0.96 (1.16) & 0.96 (1.11) & 0.95 (1.30) & 0.97 (1.31) & 0.97 (1.22) \\ 
& 7 & 0.96 (1.26) & 0.96 (1.20) & 0.97 (1.14) & 0.97 (1.27) & 0.96 (1.38) & 0.98 (1.28) & 0.97 (1.26) & 0.96 (1.36) & 0.99 (1.46)\\
 & 14 & 0.97 (1.21) & 0.97 (1.33) & 0.98 (1.29) & 0.98 (1.19) & 0.97 (1.22) & 0.98 (1.28) & 0.98 (1.20) & 0.97 (1.21) & 0.98 (1.23) \\
 & & & & & & & & & &\\
\hline
 \end{tabular}
}
\label{Table_covg&relWidth_PI_CME_allSigma_ToepX}
\end{table}

\bibliographystyle{Chicago}
\bibliography{papers}

\begin{thebibliography}{}

\bibitem[\protect\citeauthoryear{Ahn, Zhang, and Lu}{Ahn
  et~al.}{2012}]{Ahnetal12}
Ahn, M., H.~H. Zhang, and W.~Lu (2012).
\newblock Moment-based method for random effects selection in linear mixed
  models.
\newblock {\em Statistica Sinica\/}~{\em 22\/}(4), 1539.

\bibitem[\protect\citeauthoryear{Bates, Maechler, Bolker, and Walker}{Bates
  et~al.}{2013}]{Batetal13}
Bates, D., M.~Maechler, B.~Bolker, and S.~Walker (2013).
\newblock lme4: Linear mixed-effects models using {E}igen and {S4}.
\newblock {\em R package version 1.1-9\/}.

\bibitem[\protect\citeauthoryear{Bhattacharya, Khare, and Pal}{Bhattacharya
  et~al.}{2022}]{2022_BhatKhare}
Bhattacharya, S., K.~Khare, and S.~Pal (2022).
\newblock {Geometric ergodicity of Gibbs samplers for the Horseshoe and its
  regularized variants}.
\newblock {\em Electronic Journal of Statistics\/}~{\em 16\/}(1), 1 -- 57.

\bibitem[\protect\citeauthoryear{Bondell, Krishna, and Ghosh}{Bondell
  et~al.}{2010}]{Bondell2010}
Bondell, H.~D., A.~Krishna, and S.~K. Ghosh (2010, December).
\newblock Joint variable selection for fixed and random effects in linear
  mixed-effects models.
\newblock {\em Biometrics\/}~{\em 66\/}(4), 1069--1077.

\bibitem[\protect\citeauthoryear{Bradic, Claeskens, and Gueuning}{Bradic
  et~al.}{2020}]{2020_Bradetal}
Bradic, J., G.~Claeskens, and T.~Gueuning (2020).
\newblock Fixed effects testing in high-dimensional linear mixed models.
\newblock {\em Journal of the American Statistical Association\/}~{\em
  115\/}(532), 1835--1850.

\bibitem[\protect\citeauthoryear{B{\"u}hlmann, Kalisch, and Meier}{B{\"u}hlmann
  et~al.}{2014}]{2014_buhlman_riboflavin}
B{\"u}hlmann, P., M.~Kalisch, and L.~Meier (2014).
\newblock High-dimensional statistics with a view toward applications in
  biology.
\newblock {\em Annual Review of Statistics and Its Application\/}~{\em
  1\/}(Volume 1, 2014), 255--278.

\bibitem[\protect\citeauthoryear{Carvalho, Polson, and Scott}{Carvalho
  et~al.}{2010}]{CarPolSco10}
Carvalho, C., N.~Polson, and J.~Scott (2010).
\newblock The horseshoe estimator for sparse signals.
\newblock {\em Biometrika\/}~{\em 97\/}(2), 465--480.

\bibitem[\protect\citeauthoryear{Chen and Dunson}{Chen and
  Dunson}{2003}]{CheDun03}
Chen, Z. and D.~B. Dunson (2003).
\newblock Random effects selection in linear mixed models.
\newblock {\em Biometrics\/}~{\em 59\/}(4), 762--769.

\bibitem[\protect\citeauthoryear{Fahrmeir, Kneib, and Konrath}{Fahrmeir
  et~al.}{2010}]{Fahetal10}
Fahrmeir, L., T.~Kneib, and S.~Konrath (2010).
\newblock Bayesian regularisation in structured additive regression: a unifying
  perspective on shrinkage, smoothing and predictor selection.
\newblock {\em Statistics and Computing\/}~{\em 20}, 203--219.

\bibitem[\protect\citeauthoryear{Fan and Li}{Fan and Li}{2012}]{FanLi12}
Fan, Y. and R.~Li (2012).
\newblock Variable selection in linear mixed effects models.
\newblock {\em Annals of Statistics\/}~{\em 40\/}(4), 2043.

\bibitem[\protect\citeauthoryear{Fong, Rue, and Wakefield}{Fong
  et~al.}{2010}]{Fonetal10}
Fong, Y., H.~Rue, and J.~Wakefield (2010).
\newblock Bayesian inference for generalized linear mixed models.
\newblock {\em Biostatistics\/}~{\em 11\/}(3), 397--412.

\bibitem[\protect\citeauthoryear{Groll and Tutz}{Groll and
  Tutz}{2014}]{2014_GrollTutz}
Groll, A. and G.~Tutz (2014).
\newblock Variable selection for generalized linear mixed models by
  l1-penalized estimation.
\newblock {\em Statistics and Computing\/}~{\em 24\/}(2), 137--154.

\bibitem[\protect\citeauthoryear{Guhaniyogi and Dunson}{Guhaniyogi and
  Dunson}{2015}]{GuhDun15}
Guhaniyogi, R. and D.~B. Dunson (2015).
\newblock Bayesian compressed regression.
\newblock {\em Journal of the American Statistical Association\/}~{\em
  110\/}(512), 1500--1514.

\bibitem[\protect\citeauthoryear{Guhaniyogi, Laura, and Sudipto}{Guhaniyogi
  et~al.}{2023}]{Guhetal23}
Guhaniyogi, R., B.~Laura, and B.~Sudipto (2023).
\newblock Bayesian data sketching for varying coefficient regression models.
\newblock Technical report, Texas A\&M University.

\bibitem[\protect\citeauthoryear{Guhaniyogi and Scheffler}{Guhaniyogi and
  Scheffler}{2021}]{GuhSch21}
Guhaniyogi, R. and A.~Scheffler (2021).
\newblock Sketching in bayesian high dimensional regression with big data using
  gaussian scale mixture priors.
\newblock {\em arXiv e-prints\/}, arXiv--2105.

\bibitem[\protect\citeauthoryear{Harville}{Harville}{1997}]{Har97}
Harville, D. (1997).
\newblock Matrix algebra from a statistician’s perspective.
\newblock {\em Springer-Verlag Inc., New York\/}.

\bibitem[\protect\citeauthoryear{Heiling, Rashid, Li, Peng, Yeh, and
  Ibrahim}{Heiling et~al.}{2024}]{2024_glmmPenFA}
Heiling, H.~M., N.~U. Rashid, Q.~Li, X.~L. Peng, J.~J. Yeh, and J.~G. Ibrahim
  (2024, 4).
\newblock {Efficient computation of high-dimensional penalized generalized
  linear mixed models by latent factor modeling of the random effects}.
\newblock {\em Biometrics\/}~{\em 80\/}(1), ujae016.

\bibitem[\protect\citeauthoryear{Hui, Müller, and Welsh}{Hui
  et~al.}{2017}]{Hui2017_RegularizedPOL}
Hui, F. K.~C., S.~Müller, and A.~H. Welsh (2017).
\newblock Joint selection in mixed models using regularized pql.
\newblock {\em Journal of the American Statistical Association\/}~{\em
  112\/}(519), 1323--1333.

\bibitem[\protect\citeauthoryear{Hultman and Srivastava}{Hultman and
  Srivastava}{2025}]{HulSri25}
Hultman, I. and S.~Srivastava (2025).
\newblock Regularized parameter estimation in mixed model trace regression.
\newblock {\em arXiv preprint arXiv:2503.13782\/}.

\bibitem[\protect\citeauthoryear{Ibrahim, Zhu, Garcia, and Guo}{Ibrahim
  et~al.}{2011}]{2011_Ibrahim_etal}
Ibrahim, J.~G., H.~Zhu, R.~I. Garcia, and R.~Guo (2011, 06).
\newblock {Fixed and Random Effects Selection in Mixed Effects Models}.
\newblock {\em Biometrics\/}~{\em 67\/}(2), 495--503.

\bibitem[\protect\citeauthoryear{Ibrahim, Zhu, and Tang}{Ibrahim
  et~al.}{2008}]{2008_Ibrahim_Zhu_Tang}
Ibrahim, J.~G., H.~Zhu, and N.~Tang (2008, December).
\newblock Model selection criteria for {Missing-Data} problems using the {EM}
  algorithm.
\newblock {\em J Am Stat Assoc\/}~{\em 103\/}(484), 1648--1658.

\bibitem[\protect\citeauthoryear{Jiang and Nguyen}{Jiang and
  Nguyen}{2007}]{JiaNgu07}
Jiang, J. and T.~Nguyen (2007).
\newblock {\em Linear and generalized linear mixed models and their
  applications}, Volume~1.
\newblock Springer.

\bibitem[\protect\citeauthoryear{Kab{\'a}n}{Kab{\'a}n}{2014}]{Kab14}
Kab{\'a}n, A. (2014).
\newblock New bounds on compressive linear least squares regression.
\newblock In {\em Artificial intelligence and statistics}, pp.\  448--456.
  PMLR.

\bibitem[\protect\citeauthoryear{Khare, Sarkar, and Srivastava}{Khare
  et~al.}{2025}]{KSS:2025}
Khare, K., P.~Sarkar, and S.~Srivastava (2025).
\newblock Moments of the condition number of a wishart matrix under
  high-dimensional scaling.
\newblock {\em Preprint, Department of Statistics, University of Florida\/}.

\bibitem[\protect\citeauthoryear{Kim and Kim}{Kim and Kim}{2011}]{KimKim11}
Kim, Y. and D.~Kim (2011).
\newblock Posterior consistency of random effects models for binary data.
\newblock {\em Journal of statistical planning and inference\/}~{\em
  141\/}(11), 3391--3399.

\bibitem[\protect\citeauthoryear{Kinney and Dunson}{Kinney and
  Dunson}{2007}]{KinDun07}
Kinney, S.~K. and D.~B. Dunson (2007).
\newblock Fixed and random effects selection in linear and logistic models.
\newblock {\em Biometrics\/}~{\em 63\/}(3), 690--698.

\bibitem[\protect\citeauthoryear{Li and Pati}{Li and Pati}{2017}]{LiPati17}
Li, H. and D.~Pati (2017).
\newblock Variable selection using shrinkage priors.
\newblock {\em Computational Statistics \& Data Analysis\/}~{\em 107},
  107--119.

\bibitem[\protect\citeauthoryear{Li, Cai, and Li}{Li et~al.}{2022}]{Lietal21}
Li, S., T.~T. Cai, and H.~Li (2022).
\newblock Inference for high-dimensional linear mixed-effects models: A
  quasi-likelihood approach.
\newblock {\em Journal of the American Statistical Association\/}~{\em
  117\/}(540), 1835--1846.

\bibitem[\protect\citeauthoryear{Lin, Pang, and Jiang}{Lin
  et~al.}{2013}]{Linetal13}
Lin, B., Z.~Pang, and J.~Jiang (2013).
\newblock Fixed and random effects selection by reml and pathwise coordinate
  optimization.
\newblock {\em Journal of Computational and Graphical Statistics\/}~{\em
  22\/}(2), 341--355.

\bibitem[\protect\citeauthoryear{Maillard and Munos}{Maillard and
  Munos}{2009}]{2009_Maillard_Munos}
Maillard, O. and R.~Munos (2009).
\newblock Compressed least-squares regression.
\newblock In Y.~Bengio, D.~Schuurmans, J.~Lafferty, C.~Williams, and A.~Culotta
  (Eds.), {\em Advances in Neural Information Processing Systems}, Volume~22.
  Curran Associates, Inc.

\bibitem[\protect\citeauthoryear{Mukhopadhyay and Dunson}{Mukhopadhyay and
  Dunson}{2020}]{MukDun20}
Mukhopadhyay, M. and D.~B. Dunson (2020).
\newblock Targeted random projection for prediction from high-dimensional
  features.
\newblock {\em Journal of the American Statistical Association\/}~{\em
  115\/}(532), 1998--2010.

\bibitem[\protect\citeauthoryear{Oliveira, Schumacher, and Lachos}{Oliveira
  et~al.}{2023}]{2023_Oliveira_etal}
Oliveira, D. C.~R., F.~L. Schumacher, and V.~H. Lachos (2023).
\newblock The use of the em algorithm for regularization problems in
  high-dimensional linear mixed-effects models.

\bibitem[\protect\citeauthoryear{Pan and Huang}{Pan and
  Huang}{2014}]{2014_Pan_Huang}
Pan, J. and C.~Huang (2014).
\newblock Random effects selection in generalized linear mixed models via
  shrinkage penalty function.
\newblock {\em Statistics and Computing\/}~{\em 24\/}(5), 725--738.

\bibitem[\protect\citeauthoryear{Peng and Lu}{Peng and Lu}{2012}]{PenLu12}
Peng, H. and Y.~Lu (2012).
\newblock Model selection in linear mixed effect models.
\newblock {\em Journal of Multivariate Analysis\/}~{\em 109}, 109--129.

\bibitem[\protect\citeauthoryear{Polson, Scott, and and}{Polson
  et~al.}{2013}]{2013_Polson_etal}
Polson, N.~G., J.~G. Scott, and J.~W. and (2013).
\newblock Bayesian inference for logistic models using pólya–gamma latent
  variables.
\newblock {\em Journal of the American Statistical Association\/}~{\em
  108\/}(504), 1339--1349.

\bibitem[\protect\citeauthoryear{Polson, Scott, Clarke, and Severinski}{Polson
  et~al.}{2012}]{2010_PolsonScott}
Polson, N.~G., J.~G. Scott, B.~Clarke, and C.~Severinski (2012).
\newblock Shrink globally, act locally: Sparse bayesian regularization and
  prediction.
\newblock In {\em Bayesian Statistics 9}. Oxford University Press.

\bibitem[\protect\citeauthoryear{Schelldorfer, B{\"u}hlmann, and
  de~Geer}{Schelldorfer et~al.}{2011}]{Schetal11}
Schelldorfer, J., P.~B{\"u}hlmann, and S.~v. de~Geer (2011).
\newblock Estimation for high-dimensional linear mixed-effects models using
  $\ell_1$-penalization.
\newblock {\em Scandinavian Journal of Statistics\/}~{\em 38\/}(2), 197--214.

\bibitem[\protect\citeauthoryear{Schelldorfer, Meier, and
  Bühlmann}{Schelldorfer et~al.}{2014}]{2014_SchellMB}
Schelldorfer, J., L.~Meier, and P.~Bühlmann (2014).
\newblock Glmmlasso: An algorithm for high-dimensional generalized linear mixed
  models using l1-penalization.
\newblock {\em Journal of Computational and Graphical Statistics\/}~{\em
  23\/}(2), 460--477.

\bibitem[\protect\citeauthoryear{Slawski}{Slawski}{2018}]{Sla18}
Slawski, M. (2018).
\newblock On principal components regression, random projections, and column
  subsampling.
\newblock {\em Electronic Journal of Statistics\/}~{\em 12}, 3673--3712.

\bibitem[\protect\citeauthoryear{Song and Liang}{Song and
  Liang}{2023}]{SonLia23}
Song, Q. and F.~Liang (2023).
\newblock Nearly optimal bayesian shrinkage for high-dimensional regression.
\newblock {\em Science China Mathematics\/}~{\em 66\/}(2), 409--442.

\bibitem[\protect\citeauthoryear{Spyropoulou, Hopker, and Griffin}{Spyropoulou
  et~al.}{2024}]{2024_Griffin_fastbayeslmm}
Spyropoulou, M.-Z., J.~Hopker, and J.~E. Griffin (2024).
\newblock Fast bayesian inference in a class of sparse linear mixed effects
  models.
\newblock {\em arXiv e-prints\/}.

\bibitem[\protect\citeauthoryear{Sun and Zhang}{Sun and Zhang}{2012}]{SunZha12}
Sun, T. and C.-H. Zhang (2012).
\newblock Scaled sparse linear regression.
\newblock {\em Biometrika\/}~{\em 99\/}(4), 879--898.

\bibitem[\protect\citeauthoryear{van~der Pas, Kleijn, and van~der
  Vaart}{van~der Pas et~al.}{2014}]{vanderpas17}
van~der Pas, S., B.~J.~K. Kleijn, and A.~van~der Vaart (2014).
\newblock The horseshoe estimator: Posterior concentration around nearly black
  vectors.
\newblock {\em Electronic Journal of Statistics\/}~{\em 11\/}(2), 4073--4101.

\bibitem[\protect\citeauthoryear{van~der Pas, Szab{\'o}, and van~der
  Vaart}{van~der Pas et~al.}{2017}]{Vanetal17}
van~der Pas, S., B.~Szab{\'o}, and A.~van~der Vaart (2017).
\newblock Adaptive posterior contraction rates for the horseshoe.
\newblock {\em Electronic Journal of Statistics\/}~{\em 11\/}(2), 3196--3225.

\bibitem[\protect\citeauthoryear{Vershynin}{Vershynin}{2012}]{Ver12}
Vershynin, R. (2012).
\newblock Introduction to the non-asymptotic analysis of random matrices.
\newblock {\em Compressed Sensing\/}, 210--268.

\bibitem[\protect\citeauthoryear{Wainwright}{Wainwright}{2019}]{Wainwright_2019}
Wainwright, M.~J. (2019).
\newblock {\em High-Dimensional Statistics: A Non-Asymptotic Viewpoint}.
\newblock Cambridge Series in Statistical and Probabilistic Mathematics.
  Cambridge University Press.

\bibitem[\protect\citeauthoryear{Wang and Yan}{Wang and
  Yan}{2024}]{2024_splines2-package}
Wang, W. and J.~Yan (2024).
\newblock {\em {splines2}: {R}egression Spline Functions and Classes}.
\newblock R package version 0.5.3.

\bibitem[\protect\citeauthoryear{Zens, Frühwirth-Schnatter, and and}{Zens
  et~al.}{2024}]{2024_Zens_etal}
Zens, G., S.~Frühwirth-Schnatter, and H.~W. and (2024).
\newblock Ultimate pólya gamma samplers–efficient mcmc for possibly
  imbalanced binary and categorical data.
\newblock {\em Journal of the American Statistical Association\/}~{\em
  119\/}(548), 2548--2559.

\bibitem[\protect\citeauthoryear{Zhou, Lafferty, and Wasserman}{Zhou
  et~al.}{2009}]{Zhoetal09}
Zhou, S., J.~Lafferty, and L.~Wasserman (2009).
\newblock Compressed and privacy-sensitive sparse regression.
\newblock {\em IEEE Transactions on Information Theory\/}~{\em 55\/}(2),
  846--866.

\bibitem[\protect\citeauthoryear{Zhou, Carbonetto, and Stephens}{Zhou
  et~al.}{2013}]{Zho13}
Zhou, X., P.~Carbonetto, and M.~Stephens (2013).
\newblock Polygenic modeling with bayesian sparse linear mixed models.
\newblock {\em PLoS genetics\/}~{\em 9\/}(2), e1003264.

\end{thebibliography}

\end{document}